\title{Formatting Instructions For NeurIPS 2025}
\date{}
\author{%
  \begin{minipage}[t]{0.45\textwidth}
    \centering
    Zihao Fu\\
    Oxford Internet Institute\\
    University of Oxford\\
    Oxford, UK\\
    \texttt{zihao.fu@oii.ox.ac.uk}
  \end{minipage}%
  \hfill
  \begin{minipage}[t]{0.45\textwidth}
    \centering
    Chris Russell\\
    Oxford Internet Institute\\
    University of Oxford\\
    Oxford, UK\\
    \texttt{chris.russell@oii.ox.ac.uk}
  \end{minipage}
}
\begin{document}

\maketitle

\begin{abstract}
Digital watermarking is a promising solution for mitigating some of the risks arising from the misuse of automatically generated text. 
These approaches either embed non-specific watermarks to allow for the detection of any text generated by a particular sampler, or embed specific keys that allow the identification of the LLM user. 
However, simultaneously using the same embedding for both detection and user identification leads to a false detection problem, whereby, as user capacity grows, unwatermarked text is increasingly likely to be falsely detected as watermarked. Through theoretical analysis, we identify the underlying causes of this phenomenon.
Building on these insights, we propose Dual Watermarking which jointly encodes detection and identification watermarks into generated text, significantly reducing false positives while maintaining high detection accuracy. Our experimental results validate our theoretical findings and demonstrate the effectiveness of our approach.
\end{abstract}

\section{Introduction}
Large Language Models (LLMs) \citep{radford2018improving,radford2019language,brown2020language,achiam2023gpt,touvron2023llama,touvron2023llama2} have emerged as the dominant technology across a wide range of natural language processing tasks. While these models have significantly advanced the field, their misuse has raised numerous ethical concerns. 
Of particular concern is the use of LLMs to impersonate human text, and to generate text that appears to represent a sincere effort to engage, but is in fact automatically generated. Example use cases include the automatic generation of homework; of scientific papers, grants, and reviews; and the automatic generation of spam and astroturfing \citep{wachter2024large,yang2023poisoning,nikiforovskaya2020automatic}. The primary concern here is that such automatically generated text can flood the environment, making it impossible to identify sincere texts that are worth reading and responding to. 

In response to these harms, researchers proposed the use of digital watermarking. These watermarking techniques can be used for the \emph{detection} of LLM generated  text, i.e., confirming that a text was automatically generated by an LLM, or for \emph{identification}, identifying a particular user id or key associated with the generation of the text.

In \emph{detection}, \citet{aaronson2023watermarking,fernandez2023three} proposed distribution-based watermarking  which alter token probabilities, while \citet{kirchenbauer2023reliability,yoo2023advancing} introduced dictionary-based approaches that split the dictionary into parts. Then, these approaches detect the watermark with statistical tests.
In \emph{identification},
\citet{fernandez2023three,yoo2023advancing} demonstrated digital watermarking can identify individual users.

Identification-based watermarking carries with it the risk of privacy violations. For example, if LLMs are used as part of a pseudoanonymization process to rephrase text, it could allow the text to be traced back to the account that rephrased the text.  However, if appropriately disclosed, it could replace more intrusive governance measures. Such watermarking would allow LLM service providers to identify and shut down accounts used for generating spam without requiring the monitoring of every customer's API call. 

Detection watermarks can be formalized as binary labels that signal a text is watermarked, while identification watermarks encode specific information, such as user identity, as an integer key. Both distribution and dictionary-based watermarking can simultaneously incorporate both types of watermarks \citep{fernandez2023three,yoo2023advancing}. This enables the detection of whether a given text is generated by an LLM while simultaneously tracking user IDs. 

All existing systems that simultaneously handle both the detection and identification use the same information, and share the same shortcoming. As the key capacity (e.g. maximum supported user count) increases, unwatermarked text is increasingly likely to be wrongly detected as watermarked. We refer to this as the \emph{false detection problem}. This issue is evident in both widely used approaches: distribution-based and dictionary-based watermarking methods \cite{fernandez2023three,yoo2023advancing}. Both methods encode the entire text using an identification watermark and compare the maximum score among all candidate IDs against a threshold to get the detection watermark. However, as the number of candidates increases, this maximum operation distorted the test statistic distribution, resulting in more unwatermarked text being falsely detected as watermarked provided the False Negative Ratio (FNR) is held constant. As discussed by \citet{liu2023survey,giboulot2024watermax}, such false positives are considerably more critical than false negatives, as erroneously identifying human-generated texts as watermarked can result in more severe adverse consequences, including accusations of cheating in a scholarly context, and the wrongful suspension or shutdown of user accounts. To understand why this problem occurs, we demonstrate, both theoretically and empirically, that as the identity number increases, the false detection bound increases exponentially. This rapid increase causes detection to fail, even within a relatively small key capacity, limiting our ability to embed richer information while still using watermarks for the original purpose of confirming that a text was LLM generated.

To address this issue, we propose a Dual Watermark (DW) scheme to mitigate the problem of false detections. This scheme encodes the detection watermark and the identification watermark into distinct parts of the generated text. 
A hash decision function  determines if the current step encodes a detection or identification watermark. When detecting the watermark, we reuse the same hash decision function in conjunction with two statistics for the detection and the identification watermark respectively. This can be seamlessly incorporated into both distribution-based and dictionary-based watermarking methods. We also propose an extended Hybrid Dual Watermark (HDW) strategy, which simultaneously integrates the statistics for the indicator and the identification watermark, and compare the two approaches.

We conduct a theoretical analysis of the false positive error bounds associated with the distribution-based and dictionary-based watermarking methods, as well as our proposed DW method. Our analysis shows that at a particular FNR, the False Positive Ratio (FPR) grows uncontrollably for the distribution-based method  (\Cref{sec:bounds}) and the dictionary-based method (\Cref{sec:multibit}), when the key space -- corresponding to the number of user identities -- is large or the text length is short.
We also present theoretical analysis of our DW method, demonstrating that it significantly outperforms methods that reutilize the same statistic for the identification watermark to calculate the detection watermark (referred to as Full Key Encoding (FKE) in our paper).
Both our theoretical and empirical findings indicate that our approach substantially alleviates the false detection problem at any FNR.

Our contributions are as follows:
(1) We uncover the false detection problem in LLM watermarking and conduct a theoretical and empirical analysis to investigate it;
(2) We propose an analysis of the false detection rate, and
illustrate the potential severity of this problem under specific conditions, such as variations in text length, key space size, and other factors;
(3) We introduce two new methods DW, and HDW to reduce the false detection rate, while preserving a high true positives rate.

\section{Related Works}
As mentioned, existing watermarking methods can be categorized into two types: detection watermarks and identification watermarks. Detection watermarks are specific indicators within LLM-generated text, that indicate whether the text is generated by a watermarked LLM or not. \citet{aaronson2023watermarking,fernandez2023three,fu2024gumbelsoft} proposed to use the Gumbel trick to generate a corresponding random variable with a distinct distribution for watermarked text. \citet{kirchenbauer2023watermark,kirchenbauer2023reliability} proposed dividing the vocabulary into red and green lists based on preceding tokens. \citet{christ2024undetectable} introduced the concept of embedding undetectable watermarks in language model outputs using cryptographic techniques. \citet{zhao2023provable} proposed the Unigram-Watermark method to improve the detection accuracy and robustness of watermarks.

Identification watermarks embed much richer information within the generated text. Most existing methods embed an integer key into the generated text, which typically could be used to represent user ID. \citet{fernandez2023three} proposed utilizing a hash key to represent essential information and iterating over all possible keys to identify the key with the maximum score in detection.
\citet{yoo2023robust,yoo2023advancing,wang2023towards,boroujeny2024multi,qu2024provably} split the dictionary into several groups to represent the key IDs. \citet{abdelnabi2021adversarial} advocated for adversarially encoding information into the watermark. 

It is straightforward to simultaneously encode both indicator and data information. For distribution-based methods, we extend \citet{fernandez2023three}’s approach by comparing the maximum score across all candidate IDs with a threshold to determine the detection watermark. For dictionary-based methods, \citet{yoo2023robust,yoo2023advancing} proposed using the most frequent dictionary part in identification watermark as the indicator. However, to the best of our knowledge, no existing work acknowledges that such a combination leads to the false detection problem as the data capacity grows. We provide a thorough theoretical and empirical analysis of why this problem occurs and methods to address it.


\section{Method}

\subsection{False Detection Problem}
Existing methods detect both the detection watermark and the identification watermark simultaneously by inferring the detection watermark from the same statistics used for the identification watermark. For distribution-based methods \citep{ fernandez2023three}, this involves taking the maximum score among all possible keys and comparing it with a threshold. This leads to an increased false positive rate as the key capacity grows. We provide a quantitative theoretical analysis of this phenomenon in \Cref{sec:theory}. Similarly, dictionary-based methods \citep{yoo2023advancing} determine the dominant partition using a maximum operation. In \Cref{sec:multibit}, we show this transforms the original binomial distribution into a Gumbel distribution, resulting in a higher false positive rate that grows with the number of keys.

\subsection{Dual Watermark}\label{sec:method}

The principal cause of false detections lies in reusing the identification watermark’s statistics for detection. To address this, we propose a novel approach termed the Dual Watermark (DW). Instead of using all tokens to encode the identification watermark, we selectively use certain tokens to encode the detection watermark. Such a method avoids performing the maximum operation for both distribution-based method and the dictionary-based method, thereby mitigating the false detection problem. Additionally, we propose a Hybrid Dual Watermark (HDW) method that further leverages the identification watermark to assist in detecting the watermark.

DW is adaptable to both distribution-based methods and dictionary-based methods. To provide a clearer explanation without causing confusion, we focus on the distribution-based methods in this section and defer the discussion of the adaptation for dictionary-based methods in \Cref{sec:adaptdictionary} and  \Cref{sec:multibit}.
For distribution-based methods \citep{aaronson2023watermarking}, they deliberately construct a statistic to guide both the generation and detection strategies. Although the original paper does not explicitly identify this method as the Gumbel-Max trick \citep{gumbel1954statistical,maddison2014sampling,jang2016categorical}, this method is essentially a Gumbel-Max trick (\Cref{sec:relation}). This formulation underpins our analysis in \Cref{sec:theory}. In our discussion, we refer to all back-bone methods introduced by \citet{fernandez2023three} and \citet{yoo2023advancing} as Full Key Encoding (FKE), where all tokens are used to encode the identification watermark.


\begin{algorithm*}[h!]
\scriptsize
\caption{Watermarked Text Generation and Detection}
\label{alg:watermark}
\vspace{-2em}
\begin{multicols}{2}
\textbf{Generation Process:}
\begin{algorithmic}[1]
\REQUIRE Language model $\mathcal{L}$, key ID $\xi$, indication ratio $r_d \in [0, 1]$, token sequence $[x_1, \ldots, x_{i-1}]$
\OUTPUT Generated token $x_i$

\STATE Compute logits: $\ell_i = \mathcal{L}([x_1, \ldots, x_{i-1}])$
\STATE Compute hash key: $h_i = \mathcal{H}(x_{i-h}, \cdots, x_{i-1})$
\STATE \highlight{Determine indicator: $d_i = 1 \text{ if } (h_i \% 100) < (100 r_d) \text{ else } 0$}
\STATE \highlight{Compute salt key: $A_i = \xi \cdot \mathds{1}(d_i = 1)$}
\STATE Compute seed: $h_g = \mathcal{H}(x_{i-h}, \cdots, x_{i-1}, A_i)$
\STATE Generate uniform random variables: $u_i \sim U(0, 1, h_g)$
\STATE Transform to Gumbel variables: $g_i = -\ln(-\ln(u_i))$
\STATE Adjust logits: $\tilde{\ell}_i = \ell_i + g_i$
\STATE Next token: $x_i = \arg\max_j \tilde{\ell}_{ij}$\\
\STATE \RETURN $x_i$
\end{algorithmic}

\vfill


\textbf{Detection Process:}
\begin{algorithmic}[1]
\REQUIRE Language model $\mathcal{L}$, token sequence $[x_1, \ldots, x_T]$, sequence length $T$, candidate keys $\{\xi\}$, ratio $r_d \in [0, 1]$, thresholds $\tau_d, \tau_k$
\OUTPUT Detection watermark $I_d$, identification watermark $I_k$

\STATE Initialize indicator score: $S_d = 0$
\STATE Initialize key scores: $S_k(\xi) = 0$ for all $\xi$
\FOR{$i = 1$ to $T$}
    \STATE Compute hash key: $h_i = \mathcal{H}(x_{i-h}, \cdots,x_{i-1})$
    \STATE \highlight{Determine indicator: $d_i = \mathds{1}((h_i \% 100) < 100r_d)$}
    \FOR{each candidate key $\xi$ in $\{\xi\}$}
        \STATE Compute salt key: $A_i(\xi, d_i) = \xi \cdot \mathds{1}(d_i = 1)$
        \STATE Compute seed: $h_g(\xi) = \mathcal{H}(x_{i-2}, \cdots,x_{i-1}, A_i)$
        \STATE Generate uniform random variables: $u_i(\xi) \sim U(0, 1, h_g(\xi))$
        \STATE Update key score: $S_k(\xi) \mathrel{+}= d_i \cdot (-\ln(1 - u_{i x_i}(\xi)))$
    \ENDFOR
    \STATE \highlight{Update indicator score: $S_d \mathrel{+}= (1 - d_i) \cdot (-\ln(1 - u_{i x_i}(0)))$}
\ENDFOR
\STATE \highlight{Compute detection watermark: $I_d = \mathds{1}(S_d > \tau_d)$}
\STATE Identify identification watermark: $I_k = \arg\max_\xi S_k(\xi)$
\STATE  \RETURN $I_d, I_k$
\end{algorithmic}
\vspace{-4em}
\end{multicols}

\scriptsize \text{* \highlight{Green highlights} represent newly added components in DW compared with FKE.}

\end{algorithm*}

\subsection{Generating}
During generation, we follow \citet{aaronson2023watermarking, fernandez2023three}, using an LLM to generate a text sequence while encoding specific information through a deliberately designed sampling strategy. This strategy modifies standard stochastic sampling (simply sampling a token based on the corresponding probability) by incorporating the Gumbel-Max trick, with the seed for the Gumbel random variable being controlled by the previous tokens and the information to be encoded. We discuss the relationship between the Gumbel-Max trick and the original method \citep{aaronson2023watermarking, fernandez2023three} in \Cref{sec:relation}. The algorithm outline is provided in \Cref{alg:watermark}.

Given a key ID $\xi$ and a token sequence $[x_1, \ldots, x_{i-1}]$, where each $x_j$ is a token ID within the range \([1, \ldots, V]\), with $V$ representing the vocabulary size, an LLM $\mathcal{L}$ generates the subsequent token. This is done by calculating the logit as $\ell_i = \mathcal{L}([x_1, \ldots, x_{i-1}])$, where $\ell_i \in \mathbb{R}^V$ represents the logits for predicting the $i$th token. It then samples the tokens based on their corresponding probabilities, with techniques such as Top-k and Top-p sampling also being applicable \citep{fernandez2023three}.

To embed both the detection watermark and the identification watermark, we first calculate an indicator hash key \( h_i \) to determine if the current token encodes the detection watermark or identification watermark. The hash key \( h_i = \mathcal{H}(x_{i-h},\cdots, x_{i-1}) \) is based on the previous $h$ tokens, and a discriminative variable 
\( d_i = \mathds{1}((h_i \bmod 100) < 100r_d) \), where \( h_i \in \mathbb{N} \) is the hash key derived from tokens \( x_{i-h} \) to \( x_{i-1} \). The indication ratio parameter \( r_d \in (0,1) \) is a user-specified ratio that controls the proportion of tokens used for the detection watermark. The value of \( d_i \) can be either 0 or 1: if \( d_i = 0 \), it indicates that the current token encodes an indicator signifying that the text is watermarked; When \( d_i = 1 \), the token encodes the identification watermark. The information salt key \( A_i \) is then computed as:
$A_i(\xi,d_i) = \xi \cdot \mathds{1}(d_i = 1)$
where \( \xi \) is the user-specified identification watermark used to store keys such as a user ID. This encoding method is naturally robust to deletion or insertion, as the hash key depends solely on the previous $h$ tokens. If a small number of tokens are removed or added, most of the remaining salt keys remain unaffected, thereby preventing any significant change to the final detection score. We also empirically illustrate this claim using the insertion and deletion attack experiments described in \Cref{sec:insdel}.

Subsequently, a new seed is generated using the hash function as \( h_g = \mathcal{H}(x_{i-h}, \cdots, x_{i-1}, A_i(\xi,d_i)) \), which is then used to construct a standard uniform distribution \( u_i \sim U(0, 1, h_g) \). Here, \( u_i \in \mathbb{R}^V \) is a vector of standard uniform random variables generated with the seed \( h_g \). This uniform variable is transformed into a Gumbel variable vector \( g_i = -\ln(-\ln(u_i)) \), where \( g_i \in \mathbb{R}^V \) is a vector of standard Gumbel variables with parameters \( \mu = 0 \) and \( \beta = 1 \). Adjusted logits are then calculated as \( \tilde{\ell}_i = \ell_i + g_i \).
By incorporating the Gumbel variable \( g_i \), we alter the sampling process typically used in LLMs to directly select the token with the maximum score of \( \tilde{\ell}_i \). Consequently, the next token \( x_i \) is determined as \( x_i = \arg\max_j \tilde{\ell}_{ij} \), where \( \tilde{\ell}_{ij} \) is the \( j \)th element of the vector \( \tilde{\ell}_i \). The Gumbel trick ensures that the probability of sampling the \( k \)th token, \( P(k = \arg\max_j \tilde{\ell}_{ij}) = p_{ik} \) and thus form an unbiased estimator of original probability distribution $p_i$ \citep{fernandez2023three,liusemantic2023a}.
Thus, the sampling of next token \( x_i \) is now driven by a random uniform vector \( u_i \), generated using the previous tokens and seed \( h_g \). 

\subsection{Detecting}\label{sec:detecting}
When detecting watermarks (see \Cref{alg:watermark}), we follow the same procedure; using the previous tokens and the salt key to recover the random variable $u_{i}(\xi)\in \mathbb{R}^V$ corresponding to the current token ID $x_i$ and a probing key ID $\xi$. Here $u_{ix_i}(\xi)$ is the $x_i$-th element of the vector $u_{i}(\xi)$. If the text is not generated by the above procedure or if the salt does not match, the corresponding random variables will simply be uniformly distributed. However, if the text is generated with the specified procedure and the correct salt key, the corresponding random variable will follow a Beta distribution \citep{fernandez2023three}, as it is the maximum element of a uniform vector. Then, we use a specific test variable $S_d$ to differentiate between these two distributions.

Given an LLM $\mathcal{L}$ and token sequence $[x_1, \ldots, x_T]$, where each $x_i$ is a token ID within the range $[1, \ldots, V]$ and $T\in \mathbb{N}$ is the sequence length, we detect the detection watermark \(I_d \in \{0,1\}\) and identification watermark \(I_k \in [1,\ldots,K]\) using deliberately designed score functions. As with the generating phase, we first calculate the hash key $h_i$ based on previous tokens as $h_i = \mathcal{H}(x_{i-h}, \cdots, x_{i-1})$ and $d_i = \mathds{1}((h_i \bmod 100) < 100r_d)$, where $h_i \in \mathbb{N}, d_i \in \{0,1\}$. Unlike the generating process, here, the identification watermark $\xi$ is unknown and is inferred. We denote $A_i$ as a function of $\xi$ and $d_i$, resulting in different values for $A_i(\xi,d_i)$ defined as $A_i(\xi,d_i) = \xi \cdot \mathds{1}(d_i = 1)$.

For each $\xi$ we calculate $h_g$ as a function of $A_i(\xi,d_i)$, denoted as $h_g(\xi) = \mathcal{H}(x_{i-h}, \cdots,x_{i-1}, A_i(\xi,d_i))$, and use it to generate a new seed $u_i(\xi) \sim U(0,1,h_g(\xi))$, where \( u_i(\xi) \in \mathbb{R}^V \) is a vector of standard uniform random variables generated with the seed \( h_g(\xi) \). 
For unwatermarked text or when the salt key does not match the generating key, the $x_i$-th random variable in $u_i$ (denoted as $u_{ix_i}$) will simply follow a uniform distribution. However, if the text is watermarked and the salt key is correct, the $x_i$-th random variable in $u_i$ is sampled as the maximum of the Gumbel-modified logits $\ell_{i}$, leading to a Beta distribution \citep{fernandez2023three}.
Similar to the detection method \citep{aaronson2023watermarking,fernandez2023three}, we calculate the score as $S_d = -\sum_i^T (1-d_i) \ln (1-u_{ix_i}(0))$ and $S_k(\xi) = -\sum_i^T d_i \ln (1-u_{ix_i}(\xi))$, where $S_d$ is the indicator score to decide if it is watermarked. Additionally, \(S_k(\xi)\) is the identification watermark score, indicating the likelihood that key \(\xi\) is embedded.
We calculate the detection watermark as $I_d = \mathds{1}(S_d > \tau_d)$ and the user identification watermark $I_k$ as the argument maximizing $S_k(\xi)$ which denotes as $I_k=\arg\max_\xi S_k(\xi)$.

To perform detection, previous works \citep{fernandez2023three,yoo2023advancing} use all tokens to encode the identification watermark and reuse the same statistics to infer the detection watermark. We refer to these methods as Full Key Encoding (FKE), which may be either distribution-based or dictionary-based.
They introduce a threshold $\tau_k$, where samples with the maximum score $\max_\xi S_k(\xi)$ below $\tau_k$ are considered not watermarked. 
We also explore extensions of the DW framework to search for potential improvements. All method and models investigated are shown below:

\textbf{Full Key Encoding (FKE).} We refer all methods \citet{fernandez2023three,yoo2023advancing} that utilize all tokens to encode identification watermark as FKE. FKE utilizes the maximal score over all possible keys and checking if it exceeds a specified threshold $\tau_k$. The sum is calculated as $S_k(\xi)=-\sum_i^T \ln (1-u_{ix_i}(\xi))$ and  \( I_d \) is determined as \( I_d = \mathds{1}(\max_\xi S_k(\xi) > \tau_k) \), where the condition evaluates if the maximum score \( \max_\xi S_k(\xi) \) surpasses \( \tau_k \).

\textbf{Partial Key Encoding (PKE).} To better compare the results of FKE with our proposed strategy, we utilize only a portion of the tokens to encode detection watermark and use this information to determine if the text is watermarked. The remaining tokens are left unused for encoding. In PKE, since only partial tokens are used to encode the identification watermark, the score is calculated as \( S_k'(\xi) = -\sum_{i}^T \mathds{1}{\{A_i(\xi, d_i) \neq 0\}} \ln (1 - u_{ix_i}(\xi)) \). The indicator \( I_d \) is then determined as \( I_d = \mathds{1}(\max_\xi S_k'(\xi) > \tau_k) \). This method serves as an ablation study of the FKE method.

\textbf{Hybrid Dual Watermark (HDW).} This uses both the detection and the identification watermark for detection, reducing detection errors. In HDW, \( I_d \) is calculated as \( I_d = \mathds{1}(S_d > \tau_d \cap \max_\xi S_k(\xi) > \tau_k) \), which evaluates if the score \( S_d \) exceeds a threshold \( \tau_d \) and if the score \( \max_\xi S_k(\xi) \) surpasses the threshold \( \tau_k \) simultaneously.

\textbf{Mean Rebalance (MR).} MR is a natural variant of HDW designed to accommodate variations observed across different sequences. Since the mean value of scores can vary for each sequence, using a fixed threshold may lead to errors. To mitigate this, the MR method compares the maximum score \( S_k(\xi) \) with the mean value of the scores and considers the sequence as unwatermarked if the difference is below a particular threshold. The condition is adjusted as \( I_d = \mathds{1}(S_d > \tau_d \cap \max_\xi S_k(\xi) - \frac{1}{K}\sum_{\xi=1}^K S_k(\xi) > \tau_k) \).

\textbf{Second Rebalance (SR).} Similarly, SR  utilizes the difference between the highest score and the second highest score in the sequence. The indicator $I_d$ is calculated as $I_d=\mathds{1}(S_d > \tau_d \cap \max_\xi S_k(\xi) - \max_{\xi \ne \arg\max S_k(\xi)} S_k(\xi) > \tau_k)$. This condition specifies that the gap between the maximum score and the second largest score must exceed \( \tau_k \).

\subsection{Gumbel-Max Trick Equivalence} \label{sec:relation}
In \citet{aaronson2023watermarking}'s work, they do not explicitly generate the Gumbel variable and select the maximal one. Instead, they perform an equivalent trick by sampling \(V\) random variables \(u = (u_1, \ldots, u_V)\), where \(u_v\) are i.i.d. with \(u_v \sim U(0,1)\). Given the probability vector \(p = (p_1, \ldots, p_V)\), the current token is selected as \(V^{\star} = \arg \max_v u_v^{1/p_v}\). This is known to be equivalent to the Gumbel-Max trick. However, as our analysis depends on this equivalence, we reproduce the derivation in \Cref{thm:equivalent} and \Cref{sec:equivalent}.

\subsection{Adapting Dictionary-Based Methods}\label{sec:adaptdictionary}

Dictionary-based methods \citep{kirchenbauer2023watermark, yoo2023advancing, wangtowards}, such as the multi-bit approach, divide the dictionary into multiple partitions based on the hash of the input information. These methods recover messages by identifying the dominant partition for each bit and detect if a text is watermarked by taking the maximum of multiple binomial statistics \citep{yoo2023advancing}. As we demonstrate in \Cref{sec:multibit}, these approaches suffer from a false detection problem. Our analysis reveals that the statistic is based on the maximal value of multiple binomial variables. Consequently, it no longer represents a binomial distribution but approximates a Gumbel distribution \citep{kotz2000extreme, haan2006extreme}, which inherently preserves the false detection problem.
To adapt our method to the multi-bit backbone, we calculate the hash key $h_i$ and the indicator $d_i$ to determine if the current token encodes an detection watermark or a identification watermark. We then apply the multi-bit algorithm to encode the corresponding information. The remaining procedure aligns with the distribution-based methods.

\section{Theoretical Analysis}\label{sec:theory}
To provide a thorough understanding of the false detection problem inherent in the traditional Full Key Encoding (FKE) method, and to demonstrate how our proposed DW and HDW approaches address this issue, we derive the false positive bound for these methods. Similarly, we first conduct the theoretical analysis for distribution-based methods and defer the analysis of the dictionary-based methods to \Cref{sec:multibit}. We denote capacity of identification watermark as \(K\) (which can be used to represent the total number of user IDs), the generated sequence length as \(T\).
We compute three false positive bounds. The first bound, presented in \Cref{thm:basebound}, provides a theoretical understanding of the baseline FKE method, which considers a document as watermarked if it matches any one of the key ID $\xi$. Next, we conduct the second analysis in \Cref{thm:ourbound} to demonstrate how and why our proposed DW method effectively addresses the false detection problem. Following this, we perform additional analysis in \Cref{thm:hybridbound}, illustrating how the HDW method can further enhance performance. Then, we conduct a numerical experiment based on these bounds  that empirically demonstrates the differences between FKE and our proposed approach DW. We also present a theoretical analysis of multi-bit methods \citep{yoo2023advancing,wang2023towards,kirchenbauer2023watermark} with \Cref{thm:mbit} in \Cref{sec:multibit}, showing that false recognition persists as key capacity increases in dictionary-based methods.

\subsection{False Recognition Bound}\label{sec:bounds}
For FKE, we consider a text sequence to be watermarked if the maximum score $\max_\xi S_k(\xi)$ exceeds a certain threshold $\tau_k$. However, some unwatermarked text may also be mistakenly classified as watermarked. As discussed in \Cref{sec:detecting}, if a text is unwatermarked, all elements in the random variable vector $u_i(\xi)\in \mathbb{R}^V$, drawn during detection follow a uniform distribution. We denote $u_{ix_i}(\xi)$ as the $x_i$-th element of $u_i(\xi)$, which is the random variable corresponding to the generated token $x_i$. We compute the probability that the statistic $S_k(\xi)$ exceeds a specified threshold $\tau_k$, given all $u_{ix_i}(\xi)$ are uniformly distributed. This corresponds to the false positive rate, which lies at the core of the false detection problem. We first establish the following theorem.

\begin{theorem}\label{thm:basebound}

Consider random variables \( u_{ix_i}(\xi) \) drawn from a uniform distribution over \([0, 1]\), where \(\xi = [1, \ldots, K]\) represents the key, and \(K\) denotes the total key capacity. \(u_i \in \mathbb{R}^V \), where \( V \) is the vocabulary size, and \( u_{ix_i} \) corresponds to the \(x_i\)th token in \( u_i \). The index \(i = [1, \ldots, T]\) refers to the \(i\)th token in the generated sequence. The score is calculated as:
\(
S_k(\xi) = -\frac{1}{T}\sum_{i=1}^T \ln (1 - u_{ix_i}(\xi)).
\)
We consider the sample is watermarked if \(\max_\xi S_k(\xi) \ge \tau_k\), where \(\tau_k\) is a threshold parameter chosen to bound the FNR. The false positive probability is bounded as:

\quad \quad \quad \quad \quad \quad 
$
\begin{aligned}\scriptsize
\Pr\left(\max_\xi S_k(\xi) \ge \tau_k\right) \le 1 - \left(1 - \exp \left( T \left( \tau_k \left( \frac{1}{e} - 1 \right) + 1 \right) \right) \right)^K.
\end{aligned}
$

\end{theorem}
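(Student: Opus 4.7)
The plan is to reduce the problem to a single-key tail bound via a Chernoff argument and then lift it to the maximum over the $K$ candidate keys using the independence that comes from modelling the hash function $\mathcal{H}$ as a random oracle. Fixing a key $\xi$ first, I note that if $u \sim U(0,1)$, then $X := -\ln(1-u)$ is $\text{Exp}(1)$ with moment generating function $\mathbb{E}[e^{tX}] = (1-t)^{-1}$ for $t \in (0,1)$. Under the null (unwatermarked) model, $T\,S_k(\xi) = \sum_{i=1}^T X_i(\xi)$ is a sum of $T$ iid $\text{Exp}(1)$ variables, and applying Markov's inequality to its exponential moment gives, for each $t \in (0,1)$,
$$\Pr\bigl(S_k(\xi) \ge \tau_k\bigr) \;\le\; (1-t)^{-T}\, e^{-tT\tau_k}.$$
Rather than optimising $t$ (which would yield the tighter but less tidy $(\tau_k e^{1-\tau_k})^T$), I would substitute the fixed value $t = 1 - 1/e$. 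Then $(1-t)^{-T} = e^{T}$, and the product collapses to $\exp\bigl(T(1 - (1-1/e)\tau_k)\bigr) = \exp\bigl(T(\tau_k(1/e-1)+1)\bigr)$, which is exactly the per-key quantity sitting inside the $K$-th power in the stated bound.

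Next, I would lift to the maximum over keys. Because $\xi$ enters the construction of $u_i(\xi)$ only through the salt passed to $\mathcal{H}$, treating $\mathcal{H}$ as a random oracle makes the vectors $\{u_i(\xi)\}_{\xi=1}^{K}$ independent across $\xi$, so the statistics $\{S_k(\xi)\}_{\xi}$ are independent as well. Writing $p := \exp\bigl(T(\tau_k(1/e-1)+1)\bigr)$ and applying the per-key bound gives $\Pr(S_k(\xi) < \tau_k) \ge 1 - p$ for every $\xi$; independence then yields
$$\Pr\Bigl(\max_\xi S_k(\xi) < \tau_k\Bigr) \;=\; \prod_{\xi=1}^{K} \Pr\bigl(S_k(\xi) < \tau_k\bigr) \;\ge\; (1-p)^{K},$$
and taking complements delivers the claimed bound $1 - (1-p)^{K}$.

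The only non-routine step is recognising that the specific form in the theorem corresponds to the fixed, sub-optimal choice $t = 1 - 1/e$ in the Chernoff inequality rather than the optimum; once that is spotted, everything else is a textbook MGF--Markov computation glued together by a complement-of-product. The modelling point that deserves explicit mention is the independence of $\{u_i(\xi)\}_\xi$ across distinct $\xi$, which follows from the random-oracle treatment of $\mathcal{H}$ on distinct salt inputs; this is what legitimises the $(1-p)^{K}$ factor as the key capacity $K$ grows, and it is also what makes the bound degrade rapidly with $K$, thereby exposing the false detection problem the paper is highlighting.
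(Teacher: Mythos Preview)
Your proof is correct and follows essentially the same approach as the paper: a Chernoff/MGF bound on the sum of $\text{Exp}(1)$ variables for each key, followed by independence across keys to handle the maximum. The one minor difference is that the paper first optimises the Chernoff parameter (obtaining the tight per-key bound $(\tau_k e^{1-\tau_k})^T$) and then relaxes it via the inequality $\log a \le a/e$, whereas you skip the optimisation and directly substitute the fixed value $t = 1 - 1/e$; the two routes land on exactly the same final expression, and your shortcut is arguably the cleaner of the two.
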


For details, please see \Cref{sec:baseproof}. In \Cref{thm:basebound}, we demonstrate the relationship between false recognition and the parameters \(T\), \(K\), and \(\tau_k\) for the traditional FKE method. Our theory shows that as the number of tokens (\(T\)) increases, false recognition rates decrease. However, as the number of keys \(K\) increases, the false positive bound also increases, leading to the aforementioned false detection problem. This implies that with a larger capacity, traditional FKE methods are more likely to mistakenly identify plain text as watermarked text. 

The key benefit of our proposed DW method is that we use a subset of tokens to indicate if the text is watermarked, rather than using all of them to encode identification watermark. To analyze our method, we assume that we use \(T'=\lfloor r_dT \rfloor\) tokens to encode an detection watermark, where \(r_d \in (0,1)\) represents the ratio of such tokens, and \(\lfloor \cdot \rfloor\) denotes the floor function. In \Cref{thm:ourbound}, we prove the upper bound for false recognition in the DW method.

\begin{theorem}\label{thm:ourbound}
Consider random variables \(u_{ix_i}\) drawn from a uniform distribution on \([0, 1]\). The index \(i = [1, \ldots, T']\) represents the \(i^\text{th}\) token of the generated text. We calculate the score
\(
S_d = -\frac{1}{T'}\sum_{i=1}^{T'} \ln (1 - u_{ix_i}).
\)
We regard a sample as watermarked if \(S_d \ge \tau_d\), where \(\tau_d\) is some threshold. Then the false positive probability is bounded by: 

\quad \quad \quad \quad \quad \quad \quad \quad \quad 
$\begin{aligned}\scriptsize
\Pr\left(S_d \ge \tau_d\right) \le \exp \left( T' \left( \tau_d \left( \frac{1}{e} - 1 \right) + 1 \right) \right).
\end{aligned}$

\end{theorem}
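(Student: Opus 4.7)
\textbf{Proof proposal for \Cref{thm:ourbound}.}

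The plan is to recognize $T' S_d$ as a sum of i.i.d.\ standard exponential random variables and then apply the exponential Chernoff (Markov) inequality with a judiciously chosen parameter. First, I would observe that for $U\sim U(0,1)$ the transformation $Y = -\ln(1-U)$ is standard exponential, since $\Pr(Y\le y) = \Pr(U \le 1-e^{-y}) = 1-e^{-y}$. Setting $Y_i = -\ln(1-u_{ix_i})$, the $Y_i$ are i.i.d.\ with $Y_i\sim\mathrm{Exp}(1)$, and the event $\{S_d \ge \tau_d\}$ is equivalent to $\{\sum_{i=1}^{T'} Y_i \ge T'\tau_d\}$.

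Next, I would apply Markov's inequality to $\exp(\lambda\sum_i Y_i)$ for some $\lambda\in(0,1)$, yielding
\[
\Pr\!\left(\textstyle\sum_i Y_i \ge T'\tau_d\right)\;\le\; e^{-\lambda T'\tau_d}\,\prod_{i=1}^{T'} \mathbb{E}\!\left[e^{\lambda Y_i}\right] \;=\; \left(\frac{e^{-\lambda \tau_d}}{1-\lambda}\right)^{T'},
\]
using independence of the $Y_i$ and the standard exponential MGF $\mathbb{E}[e^{\lambda Y}] = 1/(1-\lambda)$. The key choice is then $\lambda = 1 - 1/e$, for which $-\ln(1-\lambda) = 1$ and $-\lambda\tau_d = \tau_d(1/e - 1)$. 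Substituting, the exponent becomes $T'(\tau_d(1/e-1) + 1)$, which is precisely the bound claimed in the theorem.

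I do not expect any genuine obstacle: the argument is a single-line Chernoff computation once the exponential reformulation is in place. The only non-mechanical choice is the value of $\lambda$. The optimal $\lambda$ maximizing the Chernoff exponent is $\lambda^\star = 1 - 1/\tau_d$, which would give the tighter bound $(\tau_d e^{1-\tau_d})^{T'}$; the author has instead fixed $\lambda = 1 - 1/e$ (optimal at $\tau_d = e$) to obtain a clean closed-form bound that is linear in $\tau_d$ inside the exponent, making it directly comparable to the FKE bound of \Cref{thm:basebound}. I would briefly remark on this choice so that the reader sees the bound is valid for all $\tau_d>0$ (it is loose for $\tau_d$ far from $e$, but still an upper bound since any admissible $\lambda\in(0,1)$ yields a valid Chernoff bound).
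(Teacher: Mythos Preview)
Your proposal is correct and takes essentially the same approach as the paper: recognize $-\ln(1-u)$ as $\mathrm{Exp}(1)$, view $T'S_d$ as a Gamma (sum of exponentials), and apply the Chernoff bound. The only cosmetic difference is that the paper first optimizes the Chernoff parameter to obtain $\Pr(S_d\ge a)\le \exp(T'(\log a - a + 1))$ and then relaxes via $\log a \le a/e$, whereas you plug in $\lambda = 1-1/e$ directly; both routes land on the identical bound.
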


For the proof, see \Cref{sec:ourproof}. \Cref{thm:ourbound} demonstrates the relationship between false recognition and the parameters \(T' = \lfloor r_dT \rfloor\) and \(\tau_d\) for the DW method. Similar to FKE, as \(T'\) increases, the false detection problem can be alleviated. i.e. when \(T\) is fixed, increasing \(r_d\) can help mitigate the  false detection problem. It should be noted that the bound is independent of the capacity \(K\), which significantly helps reduce the false detection problem when \(K\) is large. 

\begin{wrapfigure}{r}{0.5\textwidth}
    \centering
    \begin{minipage}[h]{0.5\textwidth}
        \centering
        \includegraphics[width=0.9\linewidth]{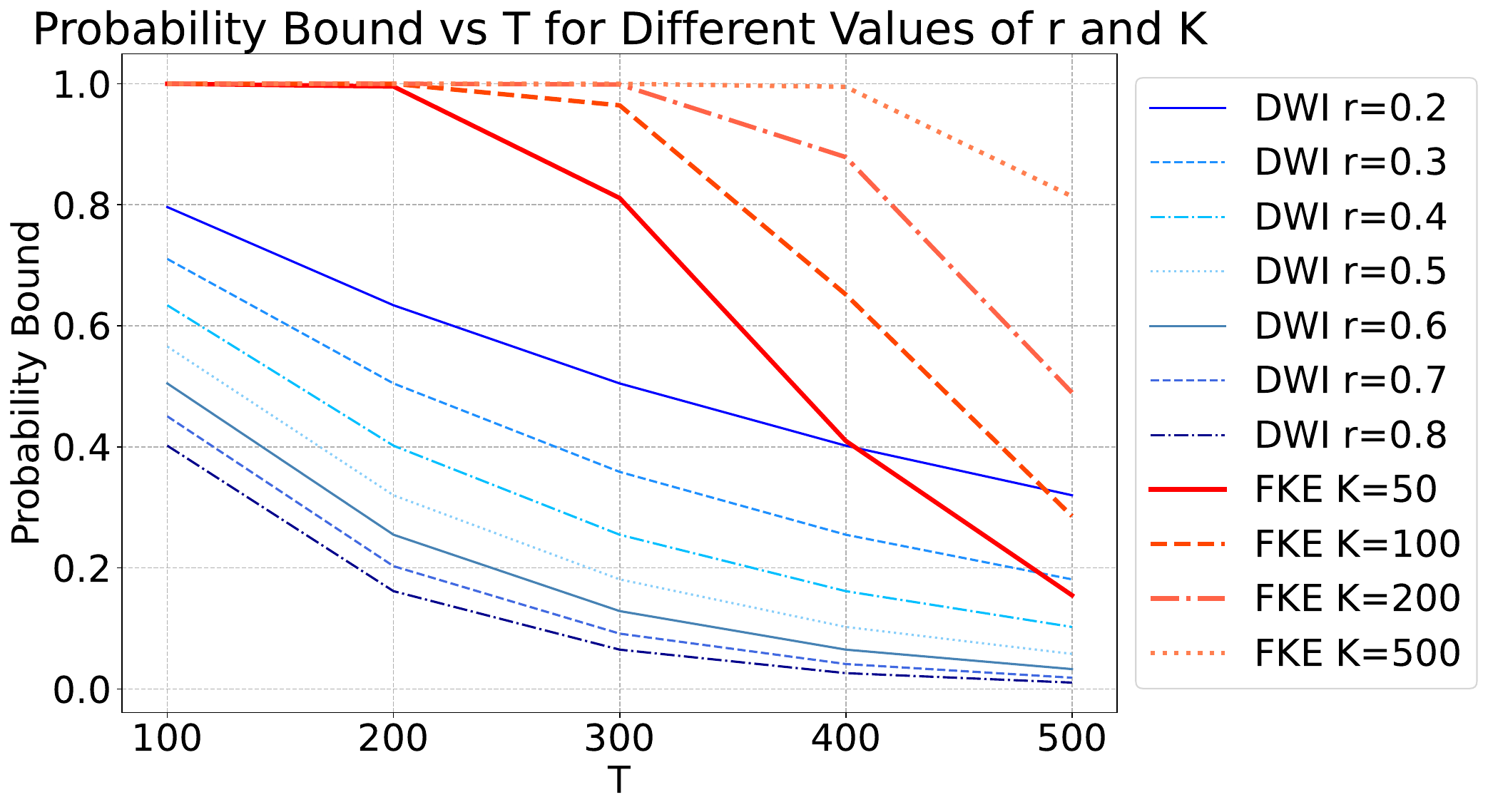}
        \vspace{-1em}
        \caption{Numerical comparison of the probability bounds for DW and FKE methods, presenting the numerical bounds for \Cref{thm:basebound} and \Cref{thm:ourbound}.}
        \label{fig:theory-plot}
    \end{minipage}%
\end{wrapfigure}

To compare FKE and DW methods, we conducted a numerical experiment by plotting the probability bound for \Cref{thm:basebound} with \(K\) ranging from 50 to 500, and the probability bound for \Cref{thm:ourbound} with \(r_d\) ranging from 0.2 to 0.8, as shown in \Cref{fig:theory-plot}. The plot clearly shows that the bound for DW methods is significantly lower than that of FKE, demonstrating the effectiveness of our approach.
As the tokens used in the DW method is smaller than that used in FKE, we calculated the minimal \(K\) value for which DW's bound outperforms FKE's bound, given specific \(T\) and \(r_d\) in \Cref{sec:numerical}.

It is a natural extension to combine the FKE and DW methods to form the HDW method, as discussed in \Cref{sec:detecting}. Here, we also establish a theoretical analysis of the HDW method to demonstrate its effectiveness. The bound is provided in \Cref{thm:hybridbound}.

\begin{theorem}\label{thm:hybridbound}
Using notation introduced in \Cref{thm:basebound} and \Cref{thm:ourbound}, we use \(\lfloor r_dT\rfloor\) tokens to calculate \(S_d\) and \(\lfloor(1-r_d)T\rfloor\) tokens to calculate \(S_k(\xi)\). The hybrid probability \( \Pr(S_d > \tau_d \cap \max_\xi S_k(\xi) > \tau_k) \) is bounded as follows:
\vspace{-1.5em}

$$\begin{aligned}\scriptsize
&\Pr\left(S_d \ge \tau_d \cap \max_\xi S_k(\xi)> \tau_k \right) = \Pr\left(S_d \ge \tau_d \right) \cdot  \Pr\left( \max_\xi S_k(\xi)> \tau_k \right)\nonumber \\ 
&\le \exp \left( \lfloor r_dT \rfloor \left( \tau_d \left( \frac{1}{e} - 1 \right) + 1 \right) \right)\cdot \left(1 - \left(1 - \exp \left( \lfloor (1-r_d)T \rfloor \left( \tau_k \left( \frac{1}{e} - 1 \right) + 1 \right) \right) \right)^K \right)\nonumber
\end{aligned}$$
\vspace{-0.5em}

\end{theorem}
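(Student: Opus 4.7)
The bound decomposes naturally into two stages: a factorization step that turns the joint probability into a product, followed by a bounding step that invokes the two previously established theorems in parallel.

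My plan is to first appeal to the HDW construction to partition the token indices $\{1,\ldots,T\}$ into disjoint sets $D_0 = \{i : d_i = 0\}$ and $D_1 = \{i : d_i = 1\}$, of sizes $\lfloor r_d T \rfloor$ and $\lfloor (1-r_d) T \rfloor$ respectively. By construction, $S_d$ sums only over $i \in D_0$ with salt $A_i = 0$, whereas each $S_k(\xi)$ sums only over $i \in D_1$ with salt $A_i = \xi \neq 0$. Under the random-oracle modelling of the hash $\mathcal{H}$ implicit in \Cref{thm:basebound} and \Cref{thm:ourbound}, seeds $h_g$ built from distinct inputs produce independent uniform draws; since the $D_0$-seeds and $D_1$-seeds differ both in their base token set and in their salt argument, the families $\{u_{i x_i}(0)\}_{i \in D_0}$ and $\{u_{i x_i}(\xi)\}_{i \in D_1,\, \xi}$ are mutually independent. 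This delivers the equality
\[
\Pr\bigl(S_d \ge \tau_d \,\cap\, \max_\xi S_k(\xi) > \tau_k\bigr) = \Pr(S_d \ge \tau_d) \cdot \Pr\bigl(\max_\xi S_k(\xi) > \tau_k\bigr).
\]

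Next, I would bound each factor by directly invoking the corresponding earlier theorem: \Cref{thm:ourbound} applied with $T' = \lfloor r_d T \rfloor$ controls $\Pr(S_d \ge \tau_d)$, and \Cref{thm:basebound} applied with effective sequence length $\lfloor (1-r_d) T \rfloor$ (since only indices in $D_1$ contribute to any $S_k(\xi)$) controls $\Pr(\max_\xi S_k(\xi) > \tau_k)$. Multiplying the two inequalities produces exactly the right-hand side stated in the theorem, so no fresh calculation is needed beyond routine substitution.

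The main obstacle I anticipate is justifying the factorization rigorously. The token sequence $x_1,\ldots,x_T$ and the indicator values $d_i$ are shared across both statistics, so strictly speaking the independence claim is a \emph{conditional} one given the observed tokens, with the independence inherited from salt-dependent uniform draws. Under the random-oracle assumption on $\mathcal{H}$ this is immediate, and the conditional independence lifts to unconditional independence by the tower property. If one instead only assumes a weaker PRF model, this equality would need to be replaced by a Bonferroni-style inequality, which would loosen the constants but preserve the qualitative conclusion: the $S_d$ factor is $K$-independent while the $\max_\xi S_k(\xi)$ factor absorbs the $K$-dependence, so their product is driven to zero far faster than either of the two FKE bounds considered in isolation.
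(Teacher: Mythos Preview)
Your proposal is correct and follows essentially the same route as the paper: factorize the joint probability via independence of the two statistics, then invoke \Cref{thm:ourbound} (with $T' = \lfloor r_d T\rfloor$) and \Cref{thm:basebound} (with effective length $\lfloor (1-r_d)T\rfloor$) on the respective factors and multiply. If anything, your treatment of the factorization step is more careful than the paper's own proof, which simply asserts the product form ``by the definition of joint probability for independent events'' without spelling out why $S_d$ and $\{S_k(\xi)\}_\xi$ are independent; your random-oracle argument via disjoint salt inputs fills exactly that gap.
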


\Cref{thm:hybridbound} is a straightforward combination of the results from \Cref{thm:basebound} and \Cref{thm:ourbound}. A proof can be found in \Cref{sec:hybridproof}. From \Cref{thm:hybridbound}, a similar conclusion can be observed: as \(T\) increases, the bound decreases, indicating an improved ability to alleviate the false detection problem. The effect of \(r_d\) depends on which part is dominant. The bound for the hybrid strategy generally dominates those of method FKE and DW. It strictly dominates strategy DW when \(r_d < 1\) and coincides with it when \(r_d = 1\). Likewise, it coincides with strategy FKE when \(K = 1\) and strictly dominates it when \(K > 1\).

\paragraph{Theoretical Analysis of Dictionary-Based Method}
Dictionary-based methods \citep{kirchenbauer2023watermark, yoo2023advancing, wangtowards} use the binomial approximation for the dominant dictionary partition, but in \Cref{sec:multibit}, we show they follow a Gumbel distribution. As such, the false detection rate grows with key capacity. We also provide numerical experiments to validate it.

\section{Experiments}\label{sec:exps}

\begin{table*}[t]
    \centering
    \scriptsize
    \resizebox{0.99\textwidth}{!}{
    \begin{minipage}[h]{0.28\textwidth}
        \centering
        \scriptsize
        \begin{tabular}{@{~}l@{~}@{~}l@{~}@{~}l@{~}@{~}l@{~}@{~}l@{~}}
        \toprule
         & Accu-I$\uparrow$ & Accu-O$\uparrow$ & FPR$\downarrow$ & Sim$\uparrow$ \\
        \midrule
        FKE & 0.857 & 0.754 & 0.231 & 0.685 \\
        PKE & 0.794 & 0.646 & 0.25 & 0.690 \\
        DW & 0.903 & 0.715 & 0.109 & 0.691 \\
        HDW & 0.906 & 0.718 & 0.092 & 0.690 \\
        MR & 0.907 & 0.719 & 0.0917 & 0.690 \\
        SR & 0.901 & 0.723 & 0.0858 & 0.690 \\
        \bottomrule
        \end{tabular}
        \vspace{1.0em}
        \caption{ Main Experiment. {\scriptsize$\uparrow$}  means higher is better, and  {\scriptsize$\downarrow$} means lower is better.}
        \label{tab:mainexp}
    \end{minipage}%
    \quad \quad 
    \begin{minipage}{0.40\textwidth}
    \begin{minipage}{0.40\textwidth}
        \centering
        \begin{tabular}{@{~}l@{~}@{~}l@{~}@{~}l@{~}@{~}l@{~}}
        \toprule
         & Accu-I$\uparrow$ & Accu-O$\uparrow$ & FPR$\downarrow$ \\
        \midrule
        20 & 0.877 & 0.824 & 0.178 \\
        50 & 0.866 & 0.796 & 0.234 \\
        100 & 0.867 & 0.799 & 0.233 \\
        200 & 0.853 & 0.777 & 0.260 \\
        500 & 0.843 & 0.748 & 0.258 \\
        1000 & 0.832 & 0.735 & 0.289 \\
        2000 & 0.824 & 0.721 & 0.295 \\
        \bottomrule
        \end{tabular}
    \end{minipage}%
    \quad\quad\quad\quad\quad
    \begin{minipage}{0.24\textwidth}
        \centering
        \begin{tabular}{@{~}l@{~}@{~}l@{~}@{~}l@{~}@{~}l@{~}}
        \toprule
         & Accu-I$\uparrow$ & Accu-O$\uparrow$ & FPR$\downarrow$ \\
        \midrule
        20 & 0.887 & 0.786 & 0.173 \\
        50 & 0.904 & 0.786 & 0.149 \\
        100 & 0.901 & 0.777 & 0.12 \\
        200 & 0.904 & 0.757 & 0.127 \\
        500 & 0.902 & 0.742 & 0.103 \\
        1000 & 0.899 & 0.717 & 0.141 \\
        2000 & 0.893 & 0.689 & 0.095 \\
        \bottomrule
        \end{tabular}
    \end{minipage}
    \caption{Key capacity results for FKE (left) and HDW (right) methods at varying key capacities (20 to 2000).}
    \label{tab:keyexp}
    \end{minipage}
    \quad \quad \quad \quad \quad
    \begin{minipage}{0.31\textwidth}
        \centering
        \begin{tabular}{@{~}l@{~}@{~}l@{~}@{~}l@{~}@{~}l@{~}@{~}l@{~}}
        \toprule
         & Accu-I$\uparrow$ & Accu-O$\uparrow$ & FPR$\downarrow$ & Sim$\uparrow$\\
        \midrule
        FKE & 0.926 & 0.883 & 0.120 & 0.550 \\
        PKE & 0.898 & 0.835 & 0.175 & 0.562 \\
        DW & 0.959 & 0.910 & 0.0178 & 0.562 \\
        HDW & 0.955 & 0.909 & 0.0142 & 0.562 \\
        MR & 0.957 & 0.903 & 0.0243 & 0.556 \\
        SR & 0.948 & 0.910 & 0.0174 & 0.562 \\
        \bottomrule
        \end{tabular}
        \vspace{1.1em}
        \caption{Results for dictionary-based methods that utilize the Multi-bit backbone.}
        \label{tab:marylandexp}
    \end{minipage}
    }
    \vspace{-1.5em}
\end{table*}

Our experiments follow the same setup as \citet{fernandez2023three}, using the Guanaco-7b model \citep{dettmers2024qlora}, an instruction fine-tuned LLaMA model \citep{touvron2023llama}, with the first 1,000 prompts from the Alpaca dataset \citep{taori2023stanford}. Our dataset consists of 1,000 samples, including a mix of watermarked and unwatermarked text. A salt key represents IDs ranging from 1 to 1,000. Our framework returns either `None'—indicating that the text is unwatermarked—or an integer representing the key ID. We use 1000 samples mixed with watermarked and unwatermarked text to test all the methods. To assess the models' capability across a range of proportions of watermarked text, we generated 11 datasets with watermarked text ratios ranging from [0\%, 10\%, \ldots, 90\%, 100\%]. We report the overall scores by averaging over watermarked ratios. We use 500 samples as the development set for hyperparameter selection and other 500 as a test set for evaluation with the chosen hyperparameters. 
For hyperparameter selection, we performed a grid search on $\tau_d$ and $\tau_k$, exploring values within the range [0.02, 0.04, \ldots, 7.98, 8.0] for all models on development set. We evaluate using three metrics: Accu-I measures the accuracy of determining if the text is watermarked, irrespective of the correctness of the key prediction. It converts all results to binary outcomes—1 for watermarked and 0 for non-watermarked—and compares these with the gold standard; Accu-O represents the overall accuracy, assessing both the accuracy of watermark indicator predictions and key predictions. and False Positive Ratio (FPR), which indicates the extent of the false detection problem. We follow \citet{fernandez2023three} in using cosine similarity (Sim) between watermarked and unwatermarked text to evaluate generation quality and information loss. A higher cosine similarity score indicates that the generated watermarked text closely resembles the unwatermarked text, reflecting better quality and minimal information loss. We run all models on an NVIDIA A100 GPU with 80GB of memory. 

We compare DW, with the following baseline methods: Full Key Encoding (FKE), Partial Key Encoding (PKE), Hybrid Dual Watermark (HDW), Mean Rebalance (MR), and Second Rebalance (SR). See \Cref{sec:detecting} for details. For the distribution methods, we adopt \citet{fernandez2023three} as backbone. For the dictionary-based method, we utilize Multi-bit \citep{yoo2023advancing, wang2023towards} backbone.

\subsection{Main Experiments} 
We compare DW and HDW methods with several baseline models. From the results shown in \Cref{tab:mainexp}  we see: (1) HDW  and its variants outperform all other approaches in Accu-I and FPR, demonstrating the effectiveness of our proposed methods in detecting watermarked text. The FPR of these models is consistent with our analysis in \Cref{thm:hybridbound}, further showing the correctness of our theory. (2) Compared to the baseline model FKE, DW shows a slight decrease in Accu-O. This  because only half of the tokens are utilized to encode the identification watermark. However, given that this approach significantly mitigates the false detection problem, we found this compromise acceptable. (3) By comparing HDW, SR, and MR, we see that SR and MR effectively improve performance, highlighting the effectiveness of these variant strategies. (4) The performance of PKE lags significantly behind FKE. This discrepancy arises because PKE utilizes only half of the tokens to encode the identification watermark and leaves the remaining tokens unused. This comparison highlights the crucial role of the parameter $T$ in influencing performance and further substantiates the validity of our theory. These results also explain why HDW achieves only marginal improvements, as allocating only half of the tokens to encode the identification watermark can lead to a performance decline. (5) Regarding the FPR score, HDW outperforms both DW and FKE, demonstrating the correctness of our analysis in \Cref{sec:bounds}. (6) The Sim scores across all models are closely clustered around 0.69, suggesting that the quality of the generated text is similar for all models, and that all models maintain an adequate level of similarity to the unwatermarked text.

\subsection{Key Capacity Experiment}\label{sec:keycapacity}

To demonstrate the false detection problem and the effectiveness of our models under different key capacities $K$, we compare the FKE and HDW methods with the capacity $K$ ranging in $[20, 50, 100, 200, 500, 1000, 2000]$. We present the average scores for FKE and HDW in \Cref{tab:keyexp} (left) and \Cref{tab:keyexp} (right), respectively. Additionally, we provide a breakdown of scores relative to different watermarked text ratios in \Cref{sec:keycapacitybreakdown}, respectively. It can be observed from the results that: (1) In the FKE model, since it relies entirely on the maximal score of all keys in the key space, the performance decreases significantly as $K$ increases. This is supported and guaranteed by our analysis in \Cref{thm:basebound}. (2) The HDW results show no significant differences for different key capacities in our HDW method. This is because the indicator is the dominant part and can ensure the FPR avoids the influence of the total key count.

\subsection{Dictionary-Based Method experiments} \label{sec:marylandfpr}

We adapt our DW to Multi-bit model introduced by \citet{yoo2023advancing}, which divides the word dictionary into multiple partitions based on the keys. We report the scores for different models in \Cref{tab:marylandexp} and provide a breakdown of scores relative to varying watermarked text ratios in  \Cref{fig:marylandfpr}. The results indicate that (1) The results of the dictionary-based model are similar to those of the distribution-based model. These findings demonstrate that our proposed method is generalizable to other watermarking approaches and effectively mitigates the false detection problem. (2) Across different models, our methods consistently outperform the FKE model significantly, even with the Multi-bit backbone. This demonstrates that our approach is well-suited for dictionary-based models. 


\subsection{Additional Experiments}

We conduct more extensive experiments to show the effectiveness and limitations of our models in various settings. Due to the page limit, we defer the details to the appendix.

In \Cref{sec:watermarktextratio}, we analyze the relationship between watermarked text ratios and metrics such as Accu-I, Accu-O, and FPR, highlighting performance trends under different watermark ratios. In \Cref{watermarktextratiodictionary}, we extend this analysis to the dictionary-based method, showcasing the adaptability of our models across different backbones.
In \Cref{sec:insdel}, we evaluate robustness against insertion and deletion attacks, demonstrating that our models maintain comparable performance to baseline methods despite token-level modifications. Similarly, in \Cref{sec:paraphraseattack}, we assess resilience to paraphrase attacks, showing that even with text rephrasing, our models effectively recognize watermarked content.
In \Cref{sec:runtimeanalysis}, we perform a runtime analysis to compare computational costs of watermarking and detection processes across various models, revealing that our approach is efficient and scalable. In \Cref{sec:seqlenanalysis}, we explore the impact of sequence length on performance, finding that longer sequences enhance accuracy while introducing slight increases in FPR.
To validate the generalizability of our models, we evaluate performance on domain-specific datasets in \Cref{sec:moredata}, using BioASQ and LegalQA. The results confirm robust performance in specialized contexts with minimal alteration to text similarity.
In \Cref{sec:indicationratioparam}, we study the effect of the indication ratio parameter $r_d$, demonstrating how varying token allocations for watermark detection impacts accuracy and FPR. 
In \Cref{sec:indicationratio}, we further investigate how different ratios of $r_d$ impact FPR, showing that employing more tokens to encode the watermark indicator effectively mitigates FPR. This analysis also highlights the effect of watermarked text ratios on threshold tuning and resulting FPR trends.
In \Cref{sec:samplesizeexp}, we evaluate the sensitivity of our models to variations in sample size, confirming that our results are robust and not significantly influenced by sample size. 
In \Cref{sec:winsizeparam}, we examine the influence of the window size parameter $h$, finding that text quality remains stable across a wide range of values.
In \Cref{sec:keycapacitybreakdown}, we conduct a key capacity breakdown analysis to investigate how different watermark ratios affect detection performance. \Cref{sec:keycapacitybreakdowndictionary} extends this analysis to the dictionary-based method, emphasizing the flexibility of our models in diverse settings.
Lastly, in \Cref{sec:keycapacityexpfordicbase}, we conduct key capacity experiments for dictionary-based methods using the Multi-bit as backbone. The results highlight that while traditional dictionary-based methods show an increase in FPR as key capacity grows, our HDW method effectively maintains a consistent FPR scale.
These experiments provide deeper insights into the capabilities and limitations of our proposed methods.

\section{Conclusion}
In this paper, we address the false detection problem in watermarking methods for text generated by LLMs. We establish a theoretical bound demonstrating the inherent inevitability of false positive errors in watermarking techniques like FKE. To mitigate this problem, we introduce a novel DW method that jointly encodes indicator and identification watermark. Furthermore, we present a analysis of our proposed method and validate it through extensive empirical experiments. Our results, both theoretical and empirical, indicate that the DW method and its variants effectively reduces the false positive ratio, thereby alleviating the false detection problem. This enhancement in watermarking reliability can significantly promote the trustworthiness of LLM-generated content.

\bibliographystyle{unsrtnat}
\bibliography{references}

\newpage
\appendix
\onecolumn

\clearpage
\appendix
\onecolumn
\begin{center}
  {\LARGE \textbf{Appendix. Supplementary Material}}
\end{center}
\renewcommand{\thesection}{A.\arabic{section}}
\setcounter{theorem}{0}
\setcounter{equation}{0}

\section{Impact Statements}\label{sec:impact}

This work proposes the Dual Watermark (DW) scheme to address the false detection problem in watermarking LLM outputs. By significantly reducing false positives while maintaining high detection accuracy, our method enables reliable identification of watermarked text and user-specific tracking, even at large key capacities. These advancements improve the scalability and robustness of watermarking techniques, making them more effective for managing LLM-generated content across diverse applications.

The contributions of this work have several positive implications, such as enhancing content authenticity, improving accountability, and enabling more reliable tracing of LLM outputs. However, the adoption of such techniques is not without potential risks. For instance, the identification-based watermarking introduced in our approach also raises privacy concerns if misused to track or de-anonymize users without their consent. Additionally, improper disclosure or misuse of these tools may lead to unintended consequences, such as penalizing legitimate users based on incorrect or biased watermark detection.

Therefore, while this work provides critical advancements in watermarking techniques, we urge careful governance and transparent usage policies to mitigate potential harms and ensure that these tools are deployed responsibly.

\section{Notation} 
We denote the key ID by \(\xi\in [1,K]\), with \(K\in \mathbb{N}\) representing the total number of keys i.e. the capacity. The token sequence \([x_1, \ldots, x_T]\) is generated by an LLM \(\mathcal{L}\), where each token \(x_i\) is within the range \([1, \ldots, V]\), \(V\) being the vocabulary size. The probabilities for predicting the $i$th token are denoted as \(p_i\), with corresponding logits \(\ell_i\), which are adjusted to \(\tilde{\ell}_i\) by incorporating the Gumbel variable $g_i$. The indicator function \(\mathds{1}(\cdot)\) returns 1 if the condition is true and 0 otherwise. The hash function \(\mathcal{H}\) is used to calculate the hash key. A uniform distribution \(U(0, 1, s)\) is used, where \(s\) is the seed used to generate the standard uniform random variables. The scores \(S_d\) and \(S_k(\xi)\) are statistics used to detect the detection watermark and the identification watermark for key $\xi$, respectively. The parameter $r_d$ determines the proportion of tokens used to encode the detection watermark, while $\tau_d$ and $\tau_k$ serve as threshold variables for detecting the detection watermark and the identification watermark, respectively.

\section{Proof of \Cref{thm:basebound}}\label{sec:baseproof}
\begin{theorem}

Consider random variables \( u_{ix_i}(\xi) \) drawn from a uniform distribution over \([0, 1]\), where \(\xi = [1, \ldots, K]\) represents the key, and \(K\) denotes the total key capacity. \(u_i \in \mathbb{R}^V \), where \( V \) is the vocabulary size, and \( u_{ix_i} \) corresponds to the \(x_i\)th token in \( u_i \). The index \(i = [1, \ldots, T]\) refers to the \(i\)th token in the generated sequence. The score is calculated as:
\(
S_k(\xi) = -\frac{1}{T}\sum_{i=1}^T \ln (1 - u_{ix_i}(\xi)).
\)
We consider the sample is watermarked if \(\max_\xi S_k(\xi) \ge \tau_k\), where \(\tau_k\) is a threshold parameter chosen to bound the FNR. The false positive probability is bounded as:
\begin{equation}\scriptsize
\Pr\left(\max_\xi S_k(\xi) \ge \tau_k\right) \le 1 - \left(1 - \exp \left( T \left( \tau_k \left( \frac{1}{e} - 1 \right) + 1 \right) \right) \right)^K.
\end{equation}
\end{theorem}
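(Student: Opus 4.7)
The plan is to derive a per-key Chernoff tail bound, then combine across the $K$ keys via independence and complementation. First I would observe that for $U \sim U(0,1)$, the variable $-\ln(1-U)$ has an $\mathrm{Exp}(1)$ distribution, with moment generating function $E[e^{\lambda(-\ln(1-U))}] = E[(1-U)^{-\lambda}] = 1/(1-\lambda)$ for $\lambda < 1$ (elementary integration). Consequently, under the null (the text is unwatermarked, so all $u_{i x_i}(\xi)$ are i.i.d.\ uniform), $T\, S_k(\xi) = \sum_{i=1}^T \bigl(-\ln(1 - u_{i x_i}(\xi))\bigr)$ is a sum of $T$ independent $\mathrm{Exp}(1)$ variables, hence $\mathrm{Gamma}(T,1)$, with MGF $(1-\lambda)^{-T}$.

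Next, applying the standard exponential Markov (Chernoff) step gives, for any $\lambda \in (0,1)$,
\[
\Pr\bigl(S_k(\xi) \geq \tau_k\bigr) \;\leq\; (1-\lambda)^{-T}\, e^{-\lambda T \tau_k}.
\]
The exact form of the target bound points to the convenient (but non-optimal) choice $\lambda = 1 - 1/e$, for which $(1-\lambda)^{-T} = e^{T}$ and $\lambda \tau_k = (1 - 1/e)\tau_k$. Substituting yields
\[
\Pr\bigl(S_k(\xi) \geq \tau_k\bigr) \;\leq\; \exp\!\bigl(T\bigl(\tau_k(1/e - 1) + 1\bigr)\bigr) \;=:\; B.
\]

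To pass from a fixed $\xi$ to the maximum over the $K$ candidate keys, I would appeal to independence across $\xi$: since the salt key $A_i(\xi,d_i) = \xi \cdot \mathds{1}(d_i=1)$ enters the hash $h_g(\xi) = \mathcal{H}(x_{i-h},\ldots,x_{i-1}, A_i(\xi,d_i))$, distinct values of $\xi$ yield distinct seeds and hence (under the random-oracle assumption on $\mathcal{H}$) independent uniform vectors $u_i(\xi)$ and therefore independent scores $S_k(\xi_1),\ldots,S_k(\xi_K)$. This gives
\[
\Pr\!\bigl(\max_\xi S_k(\xi) < \tau_k\bigr) \;=\; \prod_{\xi=1}^{K} \Pr\!\bigl(S_k(\xi) < \tau_k\bigr) \;\geq\; (1 - B)^{K},
\]
where the last inequality uses $1 - \Pr(S_k(\xi) \geq \tau_k) \geq 1 - B$ together with monotonicity of $x \mapsto x^K$ on $[0,1]$. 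Taking complements delivers the stated bound $1 - (1 - B)^K$.

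The main obstacle is really the independence step across keys: true independence requires treating $\mathcal{H}$ as an idealised random oracle, which is the standard cryptographic assumption underpinning the watermarking construction and not something the theorem statement makes explicit. The only other subtlety is recognising that the particular exponent $\tau_k(1/e-1)+1$ corresponds to the \emph{specific} Chernoff parameter $\lambda = 1-1/e$ rather than the optimal $\lambda^\star = 1 - 1/\tau_k$ (which would instead give the tighter exponent $T(\tau_k - 1 - \ln \tau_k)$); I would note this trade-off but proceed with the cleaner form because it is the one the theorem targets.
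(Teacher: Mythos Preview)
Your proposal is correct and follows the same overall architecture as the paper: identify each $-\ln(1-u_{ix_i}(\xi))$ as $\mathrm{Exp}(1)$, recognise $S_k(\xi)$ as a (scaled) Gamma sum, apply a Chernoff-type tail bound per key, then combine via independence across $\xi$ and take the complement. Your discussion of the random-oracle assumption behind independence is, if anything, more explicit than the paper's (which simply writes ``Assuming the $u_i$ are independent'' in its Lemma~\ref{lemma:cnbound}).

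There is one small difference worth flagging in the Chernoff step. You arrive at the exponent $T(\tau_k(1/e-1)+1)$ by \emph{directly} plugging in the fixed parameter $\lambda = 1-1/e$, and you remark that the optimal $\lambda^\star = 1-1/\tau_k$ would instead give the tighter $T(\ln\tau_k - \tau_k + 1)$. The paper actually goes the other way around: it \emph{does} optimise the Chernoff parameter, obtaining exactly $\exp(T(\ln\tau_k-\tau_k+1))$ as an intermediate bound, and only then relaxes via the elementary inequality $\ln a \le a/e$ to reach the stated form. Both routes are valid and yield the same final expression; yours is shorter, while the paper's makes the loss from the relaxation visible.
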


\begin{proof}[Proof Sketch]
From \Cref{lemma:expuniform}, we know that if \(r\) is uniformly distributed over \([0, 1]\), then \(X = -\ln(1 - r)\) follows an exponential distribution with parameter 1. 
According to \Cref{lemma:Sisgamma}, if \(u_{ix_i}(\xi)\) are independent and uniformly distributed over \([0, 1]\), then \(S_k(\xi) = -\frac{1}{T}\sum_{i=1}^T \ln(1 - u_{ix_i}(\xi)) \sim \text{Gamma}(T, \frac{1}{T})\). 
Using \Cref{lemma:gammabound}, given \(X \sim \text{Gamma}(T, \frac{1}{T})\), with probability \(1 - \delta\), \(X \leq \frac{\frac{\log \delta}{T} - 1}{1/e - 1}\).
Therefore, for \(S_k(\xi) \sim \text{Gamma}(T, \frac{1}{T})\), with probability \(1 - \delta\), \(S_k(\xi) \leq \frac{\frac{\log \delta}{T} - 1}{1/e - 1}\). Given this bound for each \(S_k(\xi)\), we use \Cref{lemma:cnbound} to bound the probability of the maximum \(S_k(\xi)\) over $\xi$.
Specifically, \Cref{lemma:cnbound} states that \(\Pr\left(\max_\xi S_k(\xi) \le \tau_k \right) \ge \left( 1 - \exp \left( T \left( \tau_k \left( \frac{1}{e} - 1 \right) + 1 \right) \right) \right)^K\). Taking the complement, we get \(\Pr\left(\max_\xi S_k(\xi) \ge \tau_k\right) \le 1 - \left( 1 - \exp \left( T \left( \tau_k \left( \frac{1}{e} - 1 \right) + 1 \right) \right) \right)^K\). 
This completes the proof of the theorem. 
\end{proof}

\begin{proof}

We will prove the theorem step-by-step using the provided lemmas.

\textbf{Step 1: Showing the Transformation is Exponential}

From \Cref{lemma:expuniform}, we know that if \(r\) is uniformly distributed over \([0, 1]\), then \(X = -\ln(1 - r)\) follows an exponential distribution with parameter 1, i.e., \(X \sim \text{Exp}(1)\).

\textbf{Step 2: Distribution of \(S_k(\xi)\)}

From \Cref{lemma:Sisgamma}, we know that if \(u_{ix_i}(\xi)\) are independent and uniformly distributed over \([0, 1]\), then
\[
S_k(\xi) = -\frac{1}{T}\sum_{i=1}^T \ln(1 - u_{ix_i}(\xi)) \sim \text{Gamma}\left(T, \frac{1}{T}\right).
\]

\textbf{Step 3: Bounding \(S_k(\xi)\)}

Using \Cref{lemma:gammabound}, given \(X \sim \text{Gamma}(T, \frac{1}{T})\), with probability \(1 - \delta\),
\[
X \leq \frac{\frac{\log \delta}{T} - 1}{1/e - 1}.
\]

Therefore, for \(S_k(\xi) \sim \text{Gamma}(T, \frac{1}{T})\), with probability \(1 - \delta\),
\[
S_k(\xi) \leq \frac{\frac{\log \delta}{T} - 1}{1/e - 1}.
\]

\textbf{Step 4: Probability Bound on Maximum \(S_k(\xi)\)}

Given \(S_k(\xi) \leq \frac{\frac{\log \delta}{T} - 1}{1/e - 1}\) with probability \(1 - \delta\) for each \(i\), we use \Cref{lemma:cnbound} to bound the probability of the maximum \(S_k(\xi)\).

From \Cref{lemma:cnbound}, we have:
\[
\Pr\left(\max_\xi S_k(\xi) \le \tau_k\right) \ge \left( 1 - \exp \left( T \left( \tau_k \left( \frac{1}{e} - 1 \right) + 1 \right) \right) \right)^K.
\]

\textbf{Step 5: Complement of Maximum Bound}

To find the probability that the maximum of \(S_k(\xi)\) exceeds \(\tau_k\), we take the complement of the bound derived above:
\[
\Pr\left(\max_\xi S_k(\xi) \ge \tau_k\right) \le 1 - \left( 1 - \exp \left( T \left( \tau_k \left( \frac{1}{e} - 1 \right) + 1 \right) \right) \right)^K.
\]

\textbf{Conclusion}

We have shown that the probability of the maximum score \(\max_\xi S_k(\xi)\) being greater than or equal to \(\tau_k\) is bounded by
\[
\Pr\left(\max_\xi S_k(\xi) \ge \tau_k\right) \le 1 - \left( 1 - \exp \left( T \left( \tau_k \left( \frac{1}{e} - 1 \right) + 1 \right) \right) \right)^K.
\]

This completes the proof of the theorem. With probability \(1 - \delta\),
\[
\Pr\left(\max_\xi S_k(\xi) \ge \tau_k\right) \le 1 - \left( 1 - \exp \left( T \left( \tau_k \left( \frac{1}{e} - 1 \right) + 1 \right) \right) \right)^K.
\]

\end{proof}

\section{Proof of \Cref{thm:ourbound}}\label{sec:ourproof}

\begin{theorem}
Consider random variables \(u_{ix_i}\) drawn from a uniform distribution on \([0, 1]\). The index \(i = [1, \ldots, T']\) represents the \(i\)th token of the generated text. We calculate the score as
\(
S_d = -\frac{1}{T'}\sum_{i=1}^{T'} \ln (1 - u_{ix_i}).
\)
We regard the sample as watermarked if \(S_d \ge \tau_d\), where \(\tau_d\) is a threshold parameter. Then the false positive probability is bounded as follows: 
\begin{equation}
\Pr\left(S_d \ge \tau_d\right) \le \exp \left( T' \left( \tau_d \left( \frac{1}{e} - 1 \right) + 1 \right) \right).
\end{equation}
\end{theorem}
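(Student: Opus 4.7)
The plan is to mirror the first three steps of the proof of Theorem \ref{thm:basebound} while simply omitting the final union bound over the $K$ candidate keys. Under the null hypothesis the sample is unwatermarked, so each $u_{i x_i}$ is i.i.d.\ uniform on $[0,1]$; the sum defining $S_d$ is then a deterministic function of these uniforms, and its distribution can be identified in closed form using the same two lemmas invoked in the base case.

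First I would apply Lemma \ref{lemma:expuniform} to conclude that $-\ln(1 - u_{i x_i}) \sim \mathrm{Exp}(1)$ for each $i$, so the $T'$ summands in $S_d$ are i.i.d.\ $\mathrm{Exp}(1)$ variables. Second I would invoke Lemma \ref{lemma:Sisgamma}, which says that the normalized sum $-\frac{1}{T'}\sum_{i=1}^{T'}\ln(1 - u_{i x_i})$ is distributed as $\mathrm{Gamma}(T', 1/T')$. Third I would apply the Chernoff-type tail bound in Lemma \ref{lemma:gammabound} with threshold $\tau_d$ to $S_d$, which immediately produces
\[
\Pr(S_d \ge \tau_d) \;\le\; \exp\!\left( T'\!\left( \tau_d\!\left( \tfrac{1}{e} - 1 \right) + 1 \right) \right),
\]
the desired conclusion.

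There is no substantial obstacle here: this theorem is effectively the $K = 1$ specialization of Theorem \ref{thm:basebound}, and the only nontrivial ingredient is the Gamma tail bound of Lemma \ref{lemma:gammabound}, whose heavy lifting was already absorbed in the earlier proof. The only care required is bookkeeping, namely checking that the shape/scale parameters of the Gamma distribution match the form of the exponent in Lemma \ref{lemma:gammabound} so that the constant $1/e - 1$ and the additive $+1$ in the exponent appear exactly as stated. Once that matching is verified, the result follows in a single line.
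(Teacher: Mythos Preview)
Your proposal is correct and matches the paper's own proof essentially step for step: the paper also invokes Lemma~\ref{lemma:expuniform}, then Lemma~\ref{lemma:Sisgamma} to identify $S_d \sim \mathrm{Gamma}(T',1/T')$, and finally Lemma~\ref{lemma:gammabound}, inverting the bound to express $\delta$ as a function of $\tau_d$. Your observation that this is the $K=1$ specialization of Theorem~\ref{thm:basebound} is exactly right, and the only remaining work is the parameter bookkeeping you already flagged.
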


\begin{proof}[Proof Sketch]
Given \(u_{ix_i}\) drawn from a uniform distribution on \([0, 1]\), we transform \(u_{ix_i}\) using \( -\ln(1 - u_{ix_i}) \), which follows an exponential distribution with parameter 1, as shown in \Cref{lemma:expuniform}. The score \( S_d \) is then the average of \( T' \) such transformed variables, scaled by \(-\frac{1}{T'}\), which, by \Cref{lemma:Sisgamma}, follows a Gamma distribution with shape parameter \( T' \) and scale parameter \(\frac{1}{T'}\). Using the bound from \Cref{lemma:gammabound}, with probability \(1 - \delta\), \( S_d \) is less than or equal to a certain function of \(\log \delta\). By expressing \(\delta\) in terms of \( \tau_d \) and solving, we derive that the false positive probability \( \Pr(S_d \ge \tau_d) \) is bounded by an exponential function \(\exp \left( T' \left( \tau_d \left( \frac{1}{e} - 1 \right) + 1 \right) \right)\). Thus, the false positive probability is bounded as claimed in the theorem. 
\end{proof}

\begin{proof}
We will prove this theorem using the following lemmas.

From \Cref{lemma:expuniform}, we know that if \(u_{ix_i} \sim \text{Uniform}(0, 1)\), then \(-\ln(1 - u_{ix_i}) \sim \text{Exp}(1)\).

Using \Cref{lemma:Sisgamma}, the sum of \(T'\) independent exponential random variables follows a Gamma distribution with shape parameter \(T'\) and scale parameter \(\frac{1}{T'}\):
\[
S_d = -\frac{1}{T'} \sum_{i=1}^{T'} \ln(1 - u_{ix_i}) \sim \text{Gamma}(T', \frac{1}{T'}).
\]

From \Cref{lemma:gammabound}, we know that for \(X \sim \text{Gamma}(T', \frac{1}{T'})\), with shape parameter \(T'\) and scale parameter \(\frac{1}{T'}\), with probability \(1 - \delta\):
\[
X \leq \frac{\frac{\log \delta}{T'} - 1}{1/e - 1}.
\]

Adapting this for our case where the shape parameter is \(T'\), with probability \(1 - \delta\):
\[
S_d \leq \frac{\frac{\log \delta}{T'} - 1}{1/e - 1}.
\]

Let’s express \(\delta\) as a function of \(\tau_d\). We set:
\[
\tau_d = \frac{\frac{\log \delta}{T'} - 1}{1/e - 1}.
\]

Solving for \(\log \delta\):
\[
\tau_d (1/e - 1) = \frac{\log \delta}{T'} - 1,
\]
\[
\log \delta = T' (\tau_d (1/e - 1) + 1).
\]

Thus, we have:
\[
\delta = \exp \left( T' (\tau_d (1/e - 1) + 1) \right).
\]

The false positive probability \(\Pr(S_d \ge \tau_k)\) is given by \(\delta\):
\[
\Pr(S_d \ge \tau_k) = \exp \left( T' (\tau_d (1/e - 1) + 1) \right).
\]

Thus, we have shown that the false positive probability is bounded as follows: with probability \(1 - \delta\),
\[
\Pr\left(S_d \ge \tau_k\right) \le \exp \left( T' \left( \tau_d \left( \frac{1}{e} - 1 \right) + 1 \right) \right).
\]
\end{proof}

\section{Proof of \Cref{thm:hybridbound}}\label{sec:hybridproof}

\begin{theorem}
Using notation introduced in \Cref{thm:basebound} and \Cref{thm:ourbound}, we use \(\lfloor r_dT\rfloor\) tokens to calculate \(S_d\) and \(\lfloor(1-r_d)T\rfloor\) tokens to calculate \(S_k(\xi)\). The hybrid probability \( \Pr(S_d > \tau_d \cap \max_\xi S_k(\xi) > \tau_k) \) is bounded as follows:
\begin{align}
\Pr&\left(S_d \ge \tau_d \cap \max_\xi S_k(\xi)> \tau_k \right) = \Pr\left(S_d \ge \tau_d \right) \cdot  \Pr\left( \max_\xi S_k(\xi)> \tau_k \right)\\ 
&\le \exp \left( \lfloor r_dT \rfloor \left( \tau_d \left( \frac{1}{e} - 1 \right) + 1 \right) \right) \left(1 - \left(1 - \exp \left( \lfloor (1-r_d)T \rfloor \left( \tau_k \left( \frac{1}{e} - 1 \right) + 1 \right) \right) \right)^K \right)
\end{align}
\end{theorem}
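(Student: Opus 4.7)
The plan is to reduce the hybrid bound to a product by exploiting independence between the detection statistic $S_d$ and the identification statistics $\{S_k(\xi)\}_{\xi}$, and then to invoke \Cref{thm:basebound} and \Cref{thm:ourbound} directly on the two factors. This keeps the argument essentially routine once independence is established.

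First, I would set up the index partition. By construction in \Cref{alg:watermark}, each token index $i\in[1,T]$ is deterministically assigned $d_i\in\{0,1\}$ via the hash $h_i$ of the preceding $h$ tokens, and under the null hypothesis the salt-keyed seed $h_g = \mathcal{H}(x_{i-h},\dots,x_{i-1},A_i(\xi,d_i))$ produces a uniform vector $u_i(\xi)$ on $[0,1]^V$. The statistic $S_d$ sums only over the indices $\mathcal{I}_0=\{i : d_i=0\}$ (using salt $A_i=0$), while each $S_k(\xi)$ sums only over $\mathcal{I}_1=\{i : d_i=1\}$ (using salt $A_i=\xi\neq 0$). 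Since $\mathcal{I}_0$ and $\mathcal{I}_1$ are disjoint and, further, the seeds $h_g$ feeding into $S_d$ differ from those feeding any $S_k(\xi)$ (they are computed with distinct salt values), the collection of uniform random variables driving $S_d$ is independent of the collection driving $\{S_k(\xi)\}_\xi$, under the standard random-oracle idealization of $\mathcal{H}$ already implicit in \Cref{thm:basebound} and \Cref{thm:ourbound}.

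Second, independence of the underlying random variables yields independence of the events $\{S_d\ge\tau_d\}$ and $\{\max_\xi S_k(\xi) > \tau_k\}$, so
\[
\Pr\!\left(S_d\ge\tau_d \cap \max_\xi S_k(\xi)>\tau_k\right) = \Pr(S_d\ge\tau_d)\cdot\Pr\!\left(\max_\xi S_k(\xi)>\tau_k\right).
\]
I would then apply \Cref{thm:ourbound} with sequence length $T'=\lfloor r_d T\rfloor$ to bound the first factor by $\exp\!\bigl(\lfloor r_d T\rfloor(\tau_d(1/e-1)+1)\bigr)$, and apply \Cref{thm:basebound} with sequence length $\lfloor(1-r_d)T\rfloor$ to bound the second factor by $1-\bigl(1-\exp(\lfloor(1-r_d)T\rfloor(\tau_k(1/e-1)+1))\bigr)^K$. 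Multiplying yields exactly the stated bound.

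The main obstacle is justifying the independence step cleanly. Strictly speaking, the assignments $d_i$ are themselves functions of the (observed) tokens, so one must argue that conditional on the token sequence $[x_1,\dots,x_T]$ the partition $(\mathcal{I}_0,\mathcal{I}_1)$ is fixed, after which the hash-seeded uniform variables on disjoint $(\text{index},\text{salt})$ pairs are independent and still uniform on $[0,1]$ under the null. A small care point is that \Cref{thm:basebound} was stated for sequence length $T$; I would note that its proof applies verbatim to any length, so substituting $\lfloor(1-r_d)T\rfloor$ is immediate. Once these minor bookkeeping items are dispatched, the remaining algebra is a single multiplication.
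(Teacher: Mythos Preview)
Your proposal is correct and follows essentially the same route as the paper: factor the joint event by independence of $S_d$ and $\{S_k(\xi)\}_\xi$, then apply \Cref{thm:ourbound} with $T'=\lfloor r_dT\rfloor$ and \Cref{thm:basebound} with $\lfloor(1-r_d)T\rfloor$ to the two factors. If anything, you are more careful than the paper in justifying the independence step and in noting that \Cref{thm:basebound} applies with any sequence length substituted for $T$.
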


\begin{proof}
We start by considering the two probabilities involved in the hybrid probability \( \Pr(S_d \ge \tau_d \cap \max_\xi S_k(\xi) > \tau_k) \). By the definition of joint probability for independent events, we can express the hybrid probability as the product of the individual probabilities:

\[
\Pr\left(S_d \ge \tau_d \cap \max_\xi S_k(\xi) > \tau_k \right) = \Pr\left(S_d \ge \tau_d \right) \cdot \Pr\left(\max_\xi S_k(\xi) > \tau_k \right).
\]

Given that \(S_d\) is calculated using \(\lfloor r_dT \rfloor\) tokens and \(S_k(\xi)\) is calculated using \(\lfloor(1-r_d)T \rfloor\) tokens, we can apply the bounds from \Cref{thm:ourbound} and \Cref{thm:basebound} respectively.

First, by applying the bound from \Cref{thm:ourbound} to the probability \(\Pr(S_d \ge \tau_d)\), we have:

\[
\Pr\left(S_d \ge \tau_d\right) \le \exp \left( \lfloor r_dT \rfloor \left( \tau_d \left( \frac{1}{e} - 1 \right) + 1 \right) \right).
\]

Next, by applying the bound from \Cref{thm:basebound} to the probability \(\Pr(\max_\xi S_k(\xi) > \tau_k)\), we obtain:

\[
\Pr\left(\max_\xi S_k(\xi) > \tau_k \right) \le 1 - \left(1 - \exp \left( \lfloor(1-r_d)T\rfloor \left( \tau_k \left( \frac{1}{e} - 1 \right) + 1 \right) \right) \right)^K.
\]

Thus, combining these two results, the hybrid probability can be bounded as follows:

\begin{align}
\Pr\left(S_d \ge \tau_d \cap \max_\xi S_k(\xi) > \tau_k \right) \le &\exp \left( \lfloor r_dT \rfloor \left( \tau_d \left( \frac{1}{e} - 1 \right) + 1 \right) \right) \cdot \\&\left(1 - \left(1 - \exp \left( \lfloor(1-r_d)T\rfloor \left( \tau_k \left( \frac{1}{e} - 1 \right) + 1 \right) \right) \right)^K \right).
\end{align}

This completes the proof.
\end{proof}

\section{Proof of the Equivalence of Gumbel-Max Trick}\label{sec:equivalent}

\begin{proposition}\label{thm:equivalent}
Consider a discrete distribution \(p = (p_1, \ldots, p_V)\) and \(V\) random variables \(u = (u_1, \ldots, u_V)\) such that \(u_v\) are i.i.d. with \(u_v \sim \mathcal{U}_{[0,1]}\). Let \(V^{\star} = \arg \max_v u_v^{1/p_v}\). Define \(G_v = \log(p_v) + g_v\), where \(g_v = -\log(-\log(u_v))\). Then
\[ V^{\star} = G^{\star} \]
\end{proposition}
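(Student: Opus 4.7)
The plan is to exploit the fact that $\arg\max$ is invariant under strictly monotone increasing transformations of its argument, and to reduce $u_v^{1/p_v}$ to $G_v = \log(p_v) + g_v$ by a short chain of such transformations. Since the $u_v$ are i.i.d.\ uniform on $[0,1]$, we have $u_v \in (0,1)$ almost surely, which is the only analytic input needed; the rest is a sign/log bookkeeping exercise.

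The first step I would carry out is to apply $\log(\cdot)$ to the expression $u_v^{1/p_v}$, giving $\tfrac{1}{p_v}\log u_v$. Because $\log$ is strictly increasing on $(0,\infty)$ and $u_v^{1/p_v}>0$ almost surely, this does not change the $\arg\max$, so $V^\star = \arg\max_v \tfrac{\log u_v}{p_v}$. Next, since $\log u_v < 0$ and $p_v>0$, the quantity $\tfrac{\log u_v}{p_v}$ is negative, and maximizing it is equivalent to minimizing $\tfrac{-\log u_v}{p_v}>0$. I would then apply $\log(\cdot)$ once more to this positive quantity (again preserving the $\arg\min$) to obtain $\log(-\log u_v) - \log(p_v)$. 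Finally, I would flip the sign, converting the $\arg\min$ back into an $\arg\max$ of $\log(p_v) - \log(-\log u_v) = \log(p_v) + g_v = G_v$, which by definition equals $G^\star$. Chaining these identifications yields $V^\star = G^\star$.

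There is no genuine obstacle, but the one place to be careful is justifying the sign flip: I must verify that $\log u_v$ is strictly negative (so that $-\log u_v>0$ and the outer $\log$ is well-defined) and that $p_v>0$ for all $v$ in the support of $p$; both hold almost surely by assumption on $u$ and because indices with $p_v=0$ contribute $u_v^{1/p_v}=0$ at the $\arg\max$ and $G_v=-\infty$ at $G^\star$, so they can be ignored consistently on both sides. Once these trivialities are handled, the equality $V^\star = G^\star$ follows purely from monotonicity of $\log$ and the compatibility of $\arg\max$ with affine rescalings.
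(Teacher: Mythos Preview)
Your proposal is correct and follows essentially the same chain of monotone transformations as the paper's own proof, just traversed in the opposite direction (from $u_v^{1/p_v}$ to $G_v$ rather than from $G_v$ to $u_v^{1/p_v}$). If anything, your version is slightly more careful in justifying the sign conditions and the $p_v=0$ edge case, which the paper leaves implicit.
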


\begin{proof}
$$
\begin{aligned}
    \arg \max_v G_v &= \arg \max_v \left(\log(p_v) + g_v\right) \\
    &= \arg \max_v \left(\log(p_v) - \log(-\log(u_v))\right) \\
    &= \arg \max_v \exp \left(\log(p_v) - \log(-\log(u_v))\right) \\
    &= \arg \max_v \left(\exp(\log(p_v)) \cdot \exp(-\log(-\log(u_v)))\right) \\
    &= \arg \max_v \left(p_v \cdot \frac{1}{-\log(u_v)}\right) \\
    &= \arg \min_v \left(-\frac{\log(u_v)}{p_v}\right) \\
    &= \arg \max_v \left(\frac{\log(u_v)}{p_v}\right) \\
    &= \arg \max_v \left(\log(u_v^{1/p_v})\right) \\
    &= \arg \max_v \left(u_v^{1/p_v}\right)
    \end{aligned}
$$

Therefore,
\[ 
V^{\star} = \arg \max_v u_v^{1/p_v} 
\]

Thus, the theorem is proved:
\[ 
V^{\star} = G^{\star} 
\]
\end{proof}

\section{Lemmas}

\begin{lemma}\label{lemma:expuniform}
Let $r$ be a random variable uniformly distributed over the interval $[0, 1]$. Define $X = -\ln(1 - r)$. Then $X$ follows an exponential distribution with parameter 1, i.e., $X \sim \text{Exp}(1)$.
\end{lemma}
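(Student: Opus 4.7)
The plan is to identify the distribution of $X$ directly by computing its cumulative distribution function, exploiting the fact that $X$ is a strictly increasing deterministic transformation of the uniform variable $r$ whose CDF is trivial.

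First, I would observe that the map $r \mapsto -\ln(1-r)$ is strictly increasing on $[0,1)$, sends $r=0$ to $0$, and blows up as $r \to 1^{-}$, so $X$ takes values in $[0,\infty)$ almost surely (the null event $r=1$ is irrelevant). Then, for an arbitrary $x \ge 0$, I would chain equivalences to turn the event $\{X \le x\}$ into an event about $r$:
$$\Pr(X \le x) = \Pr(-\ln(1-r) \le x) = \Pr(1-r \ge e^{-x}) = \Pr(r \le 1 - e^{-x}).$$
Since $1 - e^{-x} \in [0,1]$ for $x \ge 0$ and $r \sim U(0,1)$, the right-hand side equals $1 - e^{-x}$; for $x < 0$ it is zero because $X \ge 0$. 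Comparing with the CDF of the $\mathrm{Exp}(1)$ distribution, namely $F(x) = 1 - e^{-x}$ on $[0,\infty)$, finishes the identification.

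There is essentially no serious obstacle here — the argument is just the inverse-CDF (Smirnov) transform in its cleanest form. The only points needing care are (i) tracking the direction of the inequalities when negating the logarithm and then exponentiating, and (ii) noting that $1-r$ is itself uniform on $[0,1]$, which gives an equivalent one-line alternative proof if one prefers: $X = -\ln U$ for $U := 1 - r \sim U(0,1)$, which is the textbook inverse-CDF construction of a unit-rate exponential. Either route yields the claim immediately.
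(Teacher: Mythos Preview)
Your proposal is correct and mirrors the paper's proof almost exactly: both compute the CDF of $X$ by chaining the same inequalities $\Pr(X\le x)=\Pr(r\le 1-e^{-x})=1-e^{-x}$ and identify it as the $\mathrm{Exp}(1)$ distribution. The only cosmetic difference is that the paper additionally differentiates to exhibit the PDF $e^{-x}$, whereas you (equivalently) stop at the CDF; your alternative one-liner via $U=1-r$ is a nice bonus not in the paper.
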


\begin{proof}
To show that $X = -\ln(1 - r)$ follows an exponential distribution with parameter 1, we first find the cumulative distribution function (CDF) of $X$.

For any $x \geq 0$,
\begin{align*}
F_X(x) &= P(X \leq x) \\
&= P(-\ln(1 - r) \leq x) \\
&= P(\ln(1 - r) \geq -x) \\
&= P(1 - r \geq e^{-x}) \\
&= P(r \leq 1 - e^{-x}).
\end{align*}

Since $r$ is uniformly distributed over $[0, 1]$, its CDF is $F_r(r) = r$. Therefore,
\[
F_X(x) = 1 - e^{-x}, \quad \text{for } x \geq 0.
\]

Next, we differentiate the CDF to obtain the probability density function (PDF):
\[
f_X(x) = \frac{d}{dx} F_X(x) = \frac{d}{dx} (1 - e^{-x}) = e^{-x}, \quad \text{for } x \geq 0.
\]

The PDF $f_X(x) = e^{-x}$ is the PDF of an exponential distribution with parameter 1. Therefore, $X \sim \text{Exp}(1)$.
\end{proof}

\begin{lemma}\label{lemma:Sisgamma}
Let $r_i$ be independent and uniformly distributed over the interval $[0, 1]$ for $i = 1, 2, \ldots, T$. Define $S = -\frac{1}{T} \sum_{i=1}^{T} \ln(1 - r_i)$. Then $S$ follows a Gamma distribution with shape parameter $T$ and scale parameter $\frac{1}{T}$, i.e., $S \sim \text{Gamma}(T, \frac{1}{T})$.
\end{lemma}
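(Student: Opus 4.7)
The plan is to build $S$ up from exponentials using the two standard facts: (i) an i.i.d.\ sum of exponentials is Gamma, and (ii) positive scaling of a Gamma rescales its scale parameter. First, I would invoke \Cref{lemma:expuniform} to set $X_i := -\ln(1-r_i)$ and note that the $X_i$ are i.i.d.\ with $X_i \sim \text{Exp}(1)$, since the $r_i$ are i.i.d.\ uniform on $[0,1]$ and the transformation is applied coordinatewise.

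Second, I would use the classical fact that a sum of $T$ independent $\text{Exp}(1)$ random variables follows $\text{Gamma}(T, 1)$. This is most cleanly shown via moment generating functions: each $X_i$ has MGF $M_{X_i}(t) = (1-t)^{-1}$ for $t < 1$, so $\sum_{i=1}^T X_i$ has MGF $(1-t)^{-T}$, which is the MGF of $\text{Gamma}(T, 1)$ in the shape/scale parameterization. Hence $W := \sum_{i=1}^T X_i \sim \text{Gamma}(T, 1)$.

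Third, I would apply the scaling property: if $W \sim \text{Gamma}(k, \theta)$ and $c > 0$, then $cW \sim \text{Gamma}(k, c\theta)$. Setting $c = 1/T$, this yields $S = \frac{1}{T} W \sim \text{Gamma}(T, 1/T)$, which is the claim. The scaling property itself is immediate from a change of variables in the density $f_W(w) = \frac{w^{k-1} e^{-w/\theta}}{\Gamma(k)\theta^k}$, or equivalently from the MGF becoming $(1 - c t)^{-T}$.

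The argument is essentially bookkeeping; the only place a reader could trip is in the Gamma parameterization, since some sources use rate rather than scale. I would therefore state explicitly at the outset which convention is in force (scale parameter $\theta$, so that mean $=k\theta$), so that the final identity $\mathbb{E}[S] = T \cdot (1/T) = 1$ provides a quick sanity check consistent with $\mathbb{E}[X_i] = 1$.
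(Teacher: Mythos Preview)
Your proposal is correct and follows essentially the same route as the paper's proof: invoke \Cref{lemma:expuniform} to get i.i.d.\ $\text{Exp}(1)$ variables, sum them to obtain $\text{Gamma}(T,1)$, and then scale by $1/T$ to reach $\text{Gamma}(T,1/T)$. Your version is slightly more detailed (the MGF justification for the sum and the explicit note on the shape/scale convention), but the structure is identical.
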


\begin{proof}
From \Cref{lemma:expuniform}, we know that if $r$ is uniformly distributed over $[0, 1]$, then $X = -\ln(1 - r) \sim \text{Exp}(1)$.

Given that $r_i \sim \text{Uniform}(0, 1)$, it follows that $u_i = -\ln(1 - r_i) \sim \text{Exp}(1)$ for each $i$.

Now, consider the sum of $T$ such independent exponential random variables:
\[
Y = \sum_{i=1}^{T} u_i
\]

Since the sum of $T$ independent $\text{Exp}(1)$ random variables follows a Gamma distribution with shape parameter $T$ and scale parameter $1$, we have:
\[
Y \sim \text{Gamma}(T, 1)
\]

Next, consider the scaled variable:
\[
S = \frac{Y}{T}
\]

Since $Y \sim \text{Gamma}(T, 1)$, scaling $Y$ by $1/T$ (which is equivalent to dividing by $T$) gives us a new Gamma distributed random variable with the same shape parameter $T$ and a scale parameter of $1/T$. Therefore:
\[
S \sim \text{Gamma}\left(T, \frac{1}{T}\right)
\]

Thus, we have shown that $S = -\frac{1}{T} \sum_{i=1}^{T} \ln(1 - r_i)$ follows a Gamma distribution with shape parameter $T$ and scale parameter $\frac{1}{T}$.
\end{proof}

\begin{lemma}\label{lemma:gammabound}
Given \(X \sim \text{Gamma}(T, \frac{1}{T})\), with shape parameter \(T\) and scale parameter \(\frac{1}{T}\), we can state:

With probability \(1 - \delta\),

\[
X \leq \frac{\frac{\log \delta}{T}-1}{1/e-1}.
\]
\end{lemma}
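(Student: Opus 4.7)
The plan is to prove the lemma via a Chernoff-style moment generating function argument on $X$, with a specific non-optimal choice of dual parameter that produces exactly the constant $1/e - 1$ appearing in the statement.

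First I would compute the MGF of $X$. Using the decomposition already at hand from \Cref{lemma:Sisgamma}, $X$ can be written as $(1/T)\sum_{i=1}^{T} Y_i$ with $Y_i \sim \text{Exp}(1)$ i.i.d., whose MGF is $\mathbb{E}[e^{sY_i}] = 1/(1-s)$ for $s < 1$. Independence gives $\mathbb{E}[e^{sX}] = (1 - s/T)^{-T}$ for every $s < T$. A standard Markov argument then yields the Chernoff bound $\Pr(X \geq \tau) \leq e^{-s\tau}(1 - s/T)^{-T}$ for every admissible $s > 0$.

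The crucial step is to choose $s = T(1 - 1/e)$. Substituting makes $1 - s/T = 1/e$, so $(1 - s/T)^{-T} = e^{T}$, while $e^{-s\tau} = e^{T\tau(1/e - 1)}$. Multiplying collapses the bound to the affine-in-$\tau$ form $\Pr(X \geq \tau) \leq \exp\bigl(T[\tau(1/e - 1) + 1]\bigr)$. Setting the right-hand side equal to $\delta$ and solving for $\tau$ inverts the inequality and gives, with probability at least $1 - \delta$, $X \leq \frac{(\log \delta)/T - 1}{1/e - 1}$, which is exactly the stated conclusion.

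The main obstacle is conceptual rather than computational: one must resist the reflex to optimize over $s$. The Chernoff-optimal choice $s = T(1 - 1/\tau)$ delivers the tighter exponent $T(\log \tau - \tau + 1)$, but that bound does not invert cleanly as a function of $\log \delta$. Fixing $s = T(1 - 1/e)$ deliberately sacrifices sharpness away from $\tau = e$ in exchange for an exponent that is linear in $\tau$ with slope $1/e - 1$; this is precisely what permits the closed-form algebraic inversion required downstream by \Cref{thm:basebound} and \Cref{thm:ourbound}. A small side check to include at the end is that the chosen $s$ is admissible ($0 < s < T$ since $1/e < 1$) and that the bound is only informative in the nontrivial range $\tau \geq e/(e-1)$, outside of which it is trivially satisfied.
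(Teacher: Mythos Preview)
Your argument is correct and arrives at exactly the same tail bound as the paper, but the execution differs in one interesting respect. The paper first \emph{optimizes} the Chernoff parameter, obtaining $t = T(1 - 1/a)$ and the tight exponent $T(\log a - a + 1)$, and only then relaxes via the elementary inequality $\log a \le a/e$ (equality at $a=e$) to reach the linear-in-$a$ form $\exp\bigl(T(a/e - a + 1)\bigr)$. You instead bypass the optimization entirely by fixing $s = T(1 - 1/e)$ from the outset, which lands directly on the linear exponent. The two routes are equivalent in outcome: your fixed $s$ is precisely the paper's optimal $t$ evaluated at $a = e$, and the paper's relaxation $\log a \le a/e$ is exactly the loss one incurs by using that non-optimal $s$ at general $a$. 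Your version is a bit more economical, since it avoids the detour through the optimization and the subsequent relaxation; the paper's version has the minor expository advantage of displaying the sharper intermediate bound $\exp(T(\log a - a + 1))$ before weakening it.
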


\begin{proof}
We use the Chernoff bound to derive this result.

First, recall the moment generating function (MGF) of \(X \sim \text{Gamma}(T, \frac{1}{T})\):

\[
M_X(t) = \mathbb{E}[e^{tX}] = \left(1 - \frac{t}{T}\right)^{-T},
\]

for \( t < T \).

Using the Chernoff bound, for any \(t > 0\), we have:

\[
\mathbb{P}(X \geq a) = \mathbb{P}(e^{tX} \geq e^{ta}) \leq \frac{\mathbb{E}[e^{tX}]}{e^{ta}} = \frac{M_X(t)}{e^{ta}}.
\]

Substituting the MGF, we get:

\[
\mathbb{P}(X \geq a) \leq \frac{\left(1 - \frac{t}{T}\right)^{-T}}{e^{ta}}.
\]

To optimize this bound, we need to minimize the right-hand side with respect to \(t\). Therefore, we have:

\[
\log \left( \frac{\left(1 - \frac{t}{T}\right)^{-T}}{e^{ta}} \right) = -T \log\left(1 - \frac{t}{T}\right) - ta.
\]

Differentiate with respect to \(t\) and set the derivative to zero to find the optimal \(t\):

\[
\begin{aligned}
    \frac{d}{dt} \left( -T \log\left(1 - \frac{t}{T}\right) - ta \right) &= 0 \\
    -T \cdot \left( -\frac{1}{T} \cdot \frac{1}{1 - \frac{t}{T}} \right) - a &= 0 \\
    \frac{1}{1 - \frac{t}{T}} - a &= 0 \\
    1 - \frac{1}{a} &= \frac{t}{T} \\
    t &= T \left( 1 - \frac{1}{a} \right).
\end{aligned}
\]

Substituting \(t = T\left(1 - \frac{1}{a}\right)\) back into the Chernoff bound, we have:

\[
\mathbb{P}(X \geq a) \leq \exp \left( -T \log\left(1 - \left(1 - \frac{1}{a}\right)\right) - T\left(1 - \frac{1}{a}\right) a \right).
\]

Simplifying further:

\[
\mathbb{P}(X \geq a) \leq \exp \left( -T \log\left(\frac{1}{a}\right) - T \left(a - 1\right) \right).
\]

For \(a > 0\), we can simplify the expression:

\[
\begin{aligned}
\mathbb{P}(X \geq a) &\leq \exp \left( T \log(a) - T(a - 1) \right)\\
&\leq \exp \left( T \frac{a}{e} - T(a - 1) \right)\\
&= \exp \left( T (\frac{a}{e} - a + 1) \right)
\end{aligned}
\]

Setting this bound to \(\delta\), we get:

\[
\exp \left( T (\frac{a}{e} - a + 1) \right) = \delta.
\]

Taking the natural logarithm:

\[
T (\frac{a}{e} - a + 1) = \log \delta,
\]

\[
a = \frac{\frac{\log \delta}{T}-1}{1/e-1},
\]

Therefore, with probability \(1 - \delta\):

\[
X \leq \frac{\frac{\log \delta}{T}-1}{1/e-1}.
\]

\end{proof}

\begin{lemma}\label{lemma:cnbound}
Given random variables \(u_1, u_2, \ldots, u_K\) where \(K > 0\), such that with probability \(1 - \delta\):
\[
u_i \leq \frac{\frac{\log \delta}{T} - 1}{1/e - 1},
\]
it follows that:
\[
\Pr\left(\max_i u_i \le s\right) \ge \left( 1 - \exp \left( T \left( s \left( \frac{1}{e} - 1 \right) + 1 \right) \right) \right)^K.
\]
\end{lemma}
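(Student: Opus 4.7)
\textbf{Proof plan for \Cref{lemma:cnbound}.} The strategy is to invert the given per-variable tail bound to obtain a uniform upper-tail statement in the threshold $s$, and then apply independence across the $K$ variables. Concretely, the hypothesis says that for every $\delta \in (0,1)$, with probability at least $1-\delta$ we have $u_i \le \frac{(\log \delta)/T - 1}{1/e - 1}$. Setting $s := \frac{(\log \delta)/T - 1}{1/e - 1}$ and solving for $\delta$ gives
\begin{equation*}
\delta \;=\; \exp\!\left( T \left( s \left( \tfrac{1}{e} - 1 \right) + 1 \right) \right).
\end{equation*}
Substituting back, this rewrites the hypothesis as: for every threshold $s$,
\begin{equation*}
\Pr\!\left(u_i \le s\right) \;\ge\; 1 - \exp\!\left( T \left( s \left( \tfrac{1}{e} - 1 \right) + 1 \right) \right).
\end{equation*}

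With the per-variable bound in hand, the second step is to control the maximum. Note that $\{\max_i u_i \le s\} = \bigcap_{i=1}^K \{u_i \le s\}$. Under the standing assumption (implicit in the upstream context, where the $u_i(\xi)$ are generated from distinct hash seeds indexed by different candidate keys $\xi$) that the $u_i$ are mutually independent, the probability of the intersection factors as the product of the marginals. Applying the one-sided bound derived above to each factor yields
\begin{equation*}
\Pr\!\left(\max_i u_i \le s\right) \;=\; \prod_{i=1}^K \Pr\!\left(u_i \le s\right) \;\ge\; \left( 1 - \exp\!\left( T \left( s \left( \tfrac{1}{e} - 1 \right) + 1 \right) \right) \right)^K,
\end{equation*}
which is the claim.

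The genuinely substantive steps are therefore (i) the algebraic inversion that converts the confidence-level statement into a threshold-indexed tail statement, and (ii) the factorization of the intersection, which requires independence of the $u_i$. The main conceptual obstacle is precisely (ii): the lemma as stated does not explicitly mention independence, so I would either assume independence as a standing hypothesis consistent with how this lemma is invoked in \Cref{sec:baseproof} (where distinct candidate keys $\xi$ yield independently seeded uniform draws), or alternatively weaken the conclusion to a union-bound version $\Pr(\max_i u_i > s) \le K \exp(T(s(1/e-1)+1))$ that does not require independence. No new calculations beyond the inversion above are needed; the argument is essentially bookkeeping once the per-variable bound is rewritten in threshold form.
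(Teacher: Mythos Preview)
Your proposal is correct and mirrors the paper's own proof: both invert the per-variable confidence statement by setting $s=\frac{(\log\delta)/T-1}{1/e-1}$ and solving for $\delta$, then factor $\Pr(\max_i u_i\le s)=\prod_i\Pr(u_i\le s)$ under an explicit independence assumption to obtain the $K$th-power bound. Your observation that independence is an unstated but necessary hypothesis is apt---the paper's proof simply inserts ``Assuming the $u_i$ are independent'' at the factorization step without further comment.
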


\begin{proof}
Given the condition:
\[
\Pr\left(u_i \leq \frac{\frac{\log \delta}{T} - 1}{1/e - 1}\right) \ge 1 - \delta,
\]
we denote:
\[
b = \frac{\frac{\log \delta}{T} - 1}{1/e - 1}.
\]

We aim to express this condition in terms of \(s\) and derive a bound for:
\[
\Pr\left(\max_i u_i \le s\right).
\]

First, consider:
\[
\Pr\left(u_i \leq b\right) \ge 1 - \delta.
\]

We need to find a function of \(s\) that relates \(\delta\) to \(s\). Suppose \(s \ge b\). Then:
\[
\Pr\left(u_i \leq s\right) \ge \Pr\left(u_i \leq b\right) \ge 1 - \delta.
\]

We aim to find the probability that all \(u_i\) are less than or equal to \(s\):
\[
\Pr\left(\max_i u_i \le s\right) = \Pr\left(u_1 \le s, u_2 \le s, \ldots, u_K \le s\right).
\]

Assuming the \(u_i\) are independent, we can write:
\[
\Pr\left(u_1 \le s, u_2 \le s, \ldots, u_K \le s\right) = \prod_{i=1}^K \Pr\left(u_i \le s\right).
\]

Since:
\[
\Pr\left(u_i \le s\right) \ge 1 - \delta,
\]
we have:
\[
\Pr\left(\max_i u_i \le s\right) \ge (1 - \delta)^K.
\]

Now, we need to express \(\delta\) in terms of \(s\). Recall the expression for \(b\):
\[
b = \frac{\frac{\log \delta}{T} - 1}{1/e - 1}.
\]

Solving for \(\log \delta\), we get:
\[
b (1/e - 1) = \frac{\log \delta}{T} - 1,
\]
\[
b (1/e - 1) + 1 = \frac{\log \delta}{T},
\]
\[
T \left( b (1/e - 1) + 1 \right) = \log \delta,
\]
\[
\delta = \exp \left( T \left( b (1/e - 1) + 1 \right) \right).
\]

Now, substitute \(b = s\):
\[
\delta = \exp \left( T \left( s \left( \frac{1}{e} - 1 \right) + 1 \right) \right).
\]

Hence:
\[
\Pr\left(\max_i u_i \le s\right) \ge \left( 1 - \exp \left( T \left( s \left( \frac{1}{e} - 1 \right) + 1 \right) \right) \right)^K.
\]

This completes the proof of the lemma.
\end{proof}

\begin{lemma}\label{lemma:xbound}
Given a random variable \(X\), such that with probability \(1 - \delta\):
\[
X \leq \frac{\frac{\log \delta}{r_dT} - 1}{1/e - 1},
\]
it follows that:
\[
\Pr\left(X \le s\right) \ge 1 - \exp \left( r_dT \left( s \left( \frac{1}{e} - 1 \right) + 1 \right) \right).
\]
\end{lemma}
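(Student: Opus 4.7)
The plan is to mimic the argument used in \Cref{lemma:cnbound} but in the simpler single-variable setting (no product over $K$ is needed). The core idea is that the hypothesis already supplies a one-parameter family of tail bounds indexed by $\delta$, and all we need to do is reparametrize this family by the threshold $s$ instead of the failure probability $\delta$.

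First I would introduce the shorthand $b(\delta) := \frac{\log\delta/(r_dT) - 1}{1/e - 1}$ so that the hypothesis reads $\Pr(X \le b(\delta)) \ge 1 - \delta$. Next I would observe monotonicity: for any $s \ge b(\delta)$ the event $\{X \le b(\delta)\}$ is contained in $\{X \le s\}$, so $\Pr(X \le s) \ge 1 - \delta$. This means I can pick $\delta$ as a function of a chosen $s$ by solving $s = b(\delta)$, i.e., inverting the map $\delta \mapsto b(\delta)$.

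The inversion is purely algebraic: starting from $s(1/e - 1) = \log\delta/(r_dT) - 1$, I rearrange to $\log\delta = r_dT(s(1/e - 1) + 1)$ and exponentiate to obtain
\[
\delta = \exp\!\bigl(r_dT(s(1/e - 1) + 1)\bigr).
\]
Substituting this back yields $\Pr(X \le s) \ge 1 - \exp(r_dT(s(1/e - 1) + 1))$, which is the claim.

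There is essentially no hard step here; the only thing to be careful about is the direction of the inequality and the sign of $1/e - 1 < 0$, which is why the reparametrization preserves monotonicity in the required direction (as $s$ decreases, $\delta$ decreases, matching the intuition that smaller thresholds have smaller survival probabilities). I would also briefly note that the bound is only informative in the regime where the right-hand side is positive, i.e., when $s$ is sufficiently large that $r_dT(s(1/e - 1) + 1) < 0$; otherwise the inequality is vacuous but still correct. This is the same caveat implicit in the earlier lemmas, so no additional work is required.
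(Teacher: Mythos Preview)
Your proposal is correct and follows essentially the same approach as the paper's proof: both introduce the threshold $b$ as a function of $\delta$, invoke monotonicity to pass from $\Pr(X\le b)\ge 1-\delta$ to $\Pr(X\le s)\ge 1-\delta$ for $s\ge b$, and then invert $s=b(\delta)$ algebraically to express $\delta$ in terms of $s$. Your remarks on the sign of $1/e-1$ and on the regime where the bound is non-vacuous are additional observations not made explicit in the paper, but the underlying argument is identical.
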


\begin{proof}
Given the condition:
\[
\Pr\left(X \leq \frac{\frac{\log \delta}{r_dT} - 1}{1/e - 1}\right) \ge 1 - \delta,
\]
we denote:
\[
b = \frac{\frac{\log \delta}{r_dT} - 1}{1/e - 1}.
\]

We aim to express \(\delta\) as a function of \(s\) and find the probability bound for \(X \le s\).

Rearranging the expression for \(b\):
\[
b = \frac{\frac{\log \delta}{r_dT} - 1}{1/e - 1},
\]
we solve for \(\log \delta\):
\[
b \left(\frac{1}{e} - 1\right) = \frac{\log \delta}{r_dT} - 1,
\]
\[
b \left(\frac{1}{e} - 1\right) + 1 = \frac{\log \delta}{r_dT},
\]
\[
r_dT \left(b \left(\frac{1}{e} - 1\right) + 1\right) = \log \delta,
\]
\[
\delta = \exp \left( r_dT \left( b \left(\frac{1}{e} - 1\right) + 1 \right) \right).
\]

Next, we relate \(b\) to \(s\). Suppose \(s \ge b\), then:
\[
\Pr\left(X \le s\right) \ge \Pr\left(X \le b\right) \ge 1 - \delta.
\]

Substitute \(b\) with \(s\):
\[
b = s.
\]

Now we have:
\[
\delta = \exp \left( r_dT \left( s \left(\frac{1}{e} - 1\right) + 1 \right) \right).
\]

Thus:
\[
\Pr\left(X \le s\right) \ge 1 - \delta,
\]
where \(\delta = \exp \left( r_dT \left( s \left(\frac{1}{e} - 1\right) + 1 \right) \right)\).

Therefore:
\[
\Pr\left(X \le s\right) \ge 1 - \exp \left( r_dT \left( s \left(\frac{1}{e} - 1 \right) + 1 \right) \right).
\]

This completes the proof of the lemma.
\end{proof}

\begin{lemma}\label{lemma:Kbound}
When $ K \ge \frac{\ln(1 - \exp(r_dT (s (\frac{1}{e} - 1) + 1)))}{\ln(1 - \exp(T (s (\frac{1}{e} - 1) + 1)))} $, it follows that:
\[
1 - \exp \left( r_dT \left( s \left( \frac{1}{e} - 1 \right) + 1 \right) \right) \ge \left( 1 - \exp \left( T \left( s \left( \frac{1}{e} - 1 \right) + 1 \right) \right) \right)^K.
\]
\end{lemma}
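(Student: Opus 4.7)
The plan is to reduce the claim to a direct algebraic manipulation after a convenient change of notation. Write $\alpha = s\bigl(\tfrac{1}{e}-1\bigr)+1$, and set $A = \exp(r_d T \alpha)$ and $B = \exp(T \alpha)$, so that the stated hypothesis becomes $K \ge \ln(1-A)/\ln(1-B)$ and the target inequality becomes $1 - A \ge (1-B)^K$. Note that the setting in which this lemma is used (the tail bounds from \Cref{thm:basebound}--\Cref{thm:hybridbound}) implicitly requires $\alpha < 0$, so that $A, B \in (0,1)$; we will make this assumption explicit at the start. Under $\alpha<0$ and $r_d \in (0,1)$, we also have $r_d \alpha > \alpha$, hence $A > B$ and $0 < 1-A < 1-B < 1$, so both $\ln(1-A)$ and $\ln(1-B)$ are well-defined and strictly negative.

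The main step is then a one-line manipulation. Starting from the hypothesis $K \ge \ln(1-A)/\ln(1-B)$ and multiplying both sides by $\ln(1-B) < 0$, the inequality flips to $K \ln(1-B) \le \ln(1-A)$. Exponentiating (which preserves the inequality because $\exp$ is increasing) yields $(1-B)^K \le 1-A$, which is exactly the desired conclusion $1 - A \ge (1-B)^K$. Substituting back $A$ and $B$ gives the lemma as stated.

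There is no genuine obstacle; the only care required is sign bookkeeping. One must remember that $\ln(1-B)$ is negative so that dividing or multiplying by it reverses the inequality, and one must verify that $1 - A$ is positive so that the exponentiation step is meaningful. Both checks follow from the same implicit assumption $\alpha < 0$ needed for the right-hand side of the hypothesis to be a real number. With those verifications in place, the proof is essentially an equivalence: under $\alpha<0$, the hypothesis on $K$ and the target inequality are simply two ways of writing the same condition, so no estimation or auxiliary lemma is needed.
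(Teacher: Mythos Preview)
Your proposal is correct and follows essentially the same one-line algebraic manipulation as the paper: the paper simply writes the chain from the target inequality to the hypothesis via taking logarithms and dividing by $\ln(1-B)$, while you run it in the forward direction. If anything, you are more careful than the paper, which does not explicitly record the sign condition $\alpha<0$ (equivalently $s>e/(e-1)$) needed for the logarithms to be defined and for the inequality to flip.
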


\begin{proof}

\[
\begin{aligned}
1 - \exp \left( r_dT \left( s \left( \frac{1}{e} - 1 \right) + 1 \right) \right) &\ge \left( 1 - \exp \left( T \left( s \left( \frac{1}{e} - 1 \right) + 1 \right) \right) \right)^K\\
\ln\left(1 - \exp \left( r_dT \left( s \left( \frac{1}{e} - 1 \right) + 1 \right) \right) \right) &\ge K\ln \left( 1 - \exp \left( T \left( s \left( \frac{1}{e} - 1 \right) + 1 \right) \right) \right)\\
K &\ge \frac{\ln\left(1 - \exp \left( r_dT \left( s \left( \frac{1}{e} - 1 \right) + 1 \right) \right) \right)}{\ln \left( 1 - \exp \left( T \left( s \left( \frac{1}{e} - 1 \right) + 1 \right)  \right)\right)}
\end{aligned}
\]

This completes the proof.

\end{proof}

\section{Multi-bit Error Bound Analysis}\label{sec:multibit}

The dictionary-based methods \citep{kirchenbauer2023watermark, yoo2023advancing, wangtowards}, such as the multi-bit approach, partition the vocabulary into multiple blocks and identify watermarked text by determining the dominant partition using the maximum of several binomial variables. While prior work approximates this variable with a binomial distribution, our theoretical and numerical analysis in \Cref{sec:multibit} reveals that it instead follows a Gumbel distribution. As a result, the variable’s mean shifts with increasing capacity, exhibiting behavior akin to distribution-based methods. Our findings highlight that larger key capacities amplify the false detection problem. Numerical experiments further validate that as the message length grows, the detection statistic’s distribution shifts, making unwatermarked text increasingly indistinguishable from watermarked text. These insights emphasize the limitations of existing dictionary-based watermarking methods and the necessity of DW.

\citet{yoo2023advancing,wangtowards} extended \citet{kirchenbauer2023watermark}'s method to support multi-bit encoding. Their approach detects if a text is watermarked by use of a binomial statistic \citep{yoo2023advancing}. However, since the statistic is based on the maximal value of multiple binomial variables, it should no longer be considered a measure of a binomial distribution, but instead an approximate Gumbel distribution \citep{kotz2000extreme,haan2006extreme}. 

As the parameter for the Gumbel distribution is challenging to compute, we directly derive a novel bound for the composed extreme variable. Our analysis reveals that this method continues to suffer from the false detection problem.

We follow the notation in \citet{yoo2023advancing}, and use $[r]$ to denote the sequence of length $r$, \( [r] = [1, 2, \cdots, r] \). Given a generated sequence \([x_1, \cdots, x_T]\), \citet{yoo2023advancing} first uses a hash key to compute the position \( p_t \) of the message \( m \) for the \( t \)-th token, denoted as \( \rho_t = m[p_t] \), where \( p_t \in [b] \) and \( \rho_t \in [r] \). Here, \( b \) is the message length, and \( r \) indicates the number of bits each position encodes. Finally, the vocabulary $V$ is divided into \( r \) blocks \( [V_1, \cdots, V_r] \), and \( \delta \) is added to the logits of all tokens in the \( \rho_t \)-th partition \( V_{\rho_t} \).

When determining if a text is watermarked, the method calculates the maximal count in each vocabulary block for each position \( p_t \) normalized by the total count allocated to that position. For the \( p_t \)-th position, the random variable $C_{p_t}\in[0,1]$ can be denoted as:
\[
C_{p_t} = \max_{\rho \in [r]} \left\{ \frac{\sum_{t=1}^T \mathds{1}(x_t \in V_\rho) \cdot \mathds{1}(p_t = \rho)}{\sum_{t=1}^T \mathds{1}(p_t = \rho)} \right\}.
\]
Following \citet{yoo2023advancing}, we approximate the distribution for each block \( \rho \) using a binomial distribution. The total count for each block is approximated as \( \frac{T}{b} \). Therefore, we have:
\[
C_{p_t} = \max\left(\frac{X_1}{T/b}, \cdots, \frac{X_r}{T/b}\right), \quad \text{where} \quad X_\rho \sim \text{Binomial}\left(\frac{T}{b}, \frac{1}{r}\right).
\]
\citet{yoo2023advancing} claims that if the text is unwatermarked, \( C_{p_t} \approx \frac{1}{r} \). Based on this, the detection method tests if \( C_{p_t} \) exceeds a predefined threshold, classifying the text as watermarked if this is the case. However, \citet{yoo2023advancing}'s approach approximates the distribution of \( C_{p_t} \) with a binomial distribution. Since \( C_{p_t} \) is the maximum of i.i.d. distributions, it is, in fact, a Gumbel distribution. As a result, even when the text is not watermarked, \( C_{p_t} \) is still likely to exceed \( \frac{1}{r} \), leading to excess false positives.

To further demonstrate this issue, we provide a theoretical analysis of how the random variable \( C_{p_t} \) grows as the key capacity increases. Given the difficulty of computing the parameters of the Gumbel distribution, we further analyze its tail bounds to examine how the parameter \( b \) affects \( C_{p_t} \). We first present the following theorem:

\begin{theorem}\label{thm:mbit}
Let \( C_{p_t} = \max\left(\frac{X_1}{T/b}, \frac{X_2}{T/b}, \cdots, \frac{X_r}{T/b}\right) \), where \( X_\rho \sim \text{Binomial}\left(\frac{T}{b}, \frac{1}{r}\right) \) for all \( \rho \in [r] \). Then, the probability that \( C_{p_t} \) exceeds a threshold \( y \) is bounded by:
\[
\Pr(C_{p_t} \geq y) \leq r \cdot \exp\left(-\frac{2 T \left(y - \frac{1}{r}\right)^2}{b}\right).
\]
\end{theorem}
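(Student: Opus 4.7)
The plan is to combine a simple union bound over the $r$ vocabulary blocks with a standard Hoeffding tail inequality for each individual binomial count. Concretely, since
$C_{p_t} \geq y$ holds iff at least one of the normalized counts $X_\rho/(T/b)$ exceeds $y$, the union bound gives
$\Pr(C_{p_t} \geq y) \leq \sum_{\rho=1}^{r} \Pr\bigl( X_\rho / (T/b) \geq y \bigr)$,
and it suffices to control a single term by $\exp(-2(T/b)(y - 1/r)^2)$.

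First I would unpack the event $\{C_{p_t} \geq y\}$ as the union $\bigcup_{\rho=1}^{r}\{X_\rho/(T/b) \geq y\}$ and write $\Pr(C_{p_t} \geq y) \leq \sum_{\rho} \Pr(X_\rho \geq y\,T/b)$. Next, because each $X_\rho$ is a sum of $n := T/b$ i.i.d.\ Bernoulli$(1/r)$ indicators (using the paper's approximation that each position receives exactly $T/b$ tokens), I would invoke Hoeffding's inequality in the form
$\Pr\bigl( X_\rho/n - 1/r \geq t \bigr) \leq \exp(-2 n t^2)$
for $t \geq 0$. Setting $t = y - 1/r$ (which is the relevant regime: if $y \leq 1/r$ the stated bound exceeds $1$ and is vacuous) and $n = T/b$ yields
$\Pr(X_\rho/(T/b) \geq y) \leq \exp\bigl(-2(T/b)(y - 1/r)^2\bigr)$
for every block $\rho$.

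Finally I would substitute this per-block bound back into the union bound, obtaining $r$ identical terms and hence
$\Pr(C_{p_t} \geq y) \leq r \cdot \exp\bigl(-2T(y - 1/r)^2 / b\bigr)$,
which is exactly the claim. The main potential obstacle is cosmetic rather than mathematical: the random variables $X_\rho$ are only approximately Binomial$(T/b, 1/r)$ because the number of tokens assigned to position $p_t = \rho$ is itself random (it equals $T/b$ only in expectation). I would therefore note explicitly that the argument adopts the same $T/b$ approximation used by \citet{yoo2023advancing}, under which the Hoeffding bound applies cleanly; an honest treatment of the randomness in the allocation would require conditioning on the counts $\sum_t \mathds{1}(p_t = \rho)$ and applying Hoeffding conditionally, but this does not change the final scaling and is the only real subtlety in the argument.
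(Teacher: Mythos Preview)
Your proposal is correct and follows essentially the same route as the paper: a union bound over the $r$ blocks combined with Hoeffding's inequality applied to each normalized binomial count $X_\rho/(T/b)$, yielding the per-block tail $\exp\!\bigl(-2(T/b)(y-1/r)^2\bigr)$ and hence the factor $r$ out front. Your added remark about the $T/b$ approximation (and the option of conditioning on the actual allocation counts) is a nice caveat that the paper does not spell out, but it does not change the argument.
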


\begin{proof}
For each block \( \rho \in [r] \), the normalized count is \( \frac{X_\rho}{T/b} \), where \( X_\rho \sim \text{Binomial}\left(\frac{T}{b}, \frac{1}{r}\right) \). The expectation of \( \frac{X_\rho}{T/b} \) is:
\[
\mathbb{E}\left[\frac{X_\rho}{T/b}\right] = \frac{\mathbb{E}[X_\rho]}{T/b} = \frac{1}{r}.
\]

We aim to bound the probability \( \Pr\left(\frac{X_\rho}{T/b} \geq y\right) \). This is equivalent to:
\[
\Pr\left(\frac{X_\rho}{T/b} \geq y\right) = \Pr\left(X_\rho \geq y \cdot \frac{T}{b}\right).
\]

Using Hoeffding's inequality for \( X_\rho \), we have:
\[
\Pr\left(X_\rho \geq y \cdot \frac{T}{b}\right) \leq \exp\left(-\frac{2 \left(y \cdot \frac{T}{b} - \mu\right)^2}{T/b}\right),
\]
where \( \mu = \mathbb{E}[X_\rho] = \frac{T}{b} \cdot \frac{1}{r} \).

Substitute \( \mu \) into the inequality:
\[
\Pr\left(X_\rho \geq y \cdot \frac{T}{b}\right) \leq \exp\left(-\frac{2 \left(y \cdot \frac{T}{b} - \frac{T}{b} \cdot \frac{1}{r}\right)^2}{T/b}\right).
\]

Simplify the argument of the exponential:
\[
\Pr\left(\frac{X_\rho}{T/b} \geq y\right) \leq \exp\left(-\frac{2 T \left(y - \frac{1}{r}\right)^2}{b}\right).
\]

Now, for the maximum \( C_{p_t} = \max\left(\frac{X_1}{T/b}, \frac{X_2}{T/b}, \cdots, \frac{X_r}{T/b}\right) \), we use the union bound:
\[
\Pr(C_{p_t} \geq y) \leq \sum_{\rho=1}^r \Pr\left(\frac{X_\rho}{T/b} \geq y\right).
\]

Since the bound for each \( \rho \) is identical, we multiply the single block bound by \( r \):
\[
\Pr(C_{p_t} \geq y) \leq r \cdot \exp\left(-\frac{2 T \left(y - \frac{1}{r}\right)^2}{b}\right).
\]

This completes the proof.
\end{proof}

It can be observed from \Cref{thm:mbit} that as the message length \( b \) increases, the probability that \( C_{p_t} \) exceeds a certain threshold, \( \Pr(C_{p_t} \geq y) \), also increases. This implies that as the key capacity grows, the method becomes more prone to false detection problems. 
One might argue that increasing \( r \) can also increase the key capacity. However, it should be noted that as \( r \) increases significantly, the vocabulary will be divided into \( r \) blocks, causing the "green list" to become smaller and smaller. This reduction in the green list size makes it increasingly difficult to contain feasible next tokens, further complicating the watermarking process.

We further conducted a numerical experiment to demonstrate how the distribution shifts as \( b \) increases. The results are presented in \Cref{fig:multibitb}. We fix \( r=10 \), indicating that each position contains 10 bits of information, and vary the message length \( b \in [2, 20] \). Additionally, we plot the desired binomial distribution for \( r=10 \) using the red line, as expected in the original paper. The results demonstrate that (1) as the message length \( b \) increases, the expectation of the random variable \( C_{p_t} \) also rises. For the origional Multibit method, the expected value is 0.1, but it continues to grow as \( b \) increases, further validating the correctness of our theoretical analysis. This shift causes the unwatermarked text to resemble watermarked text, making it more challenging to distinguish them using a threshold. (2) Compared with the original binomial distribution, applying the max operation shifts the distribution to the right, resulting in a narrower distribution with reduced variance.

\begin{figure}[H]
    \centering
        \includegraphics[width=0.5\linewidth]{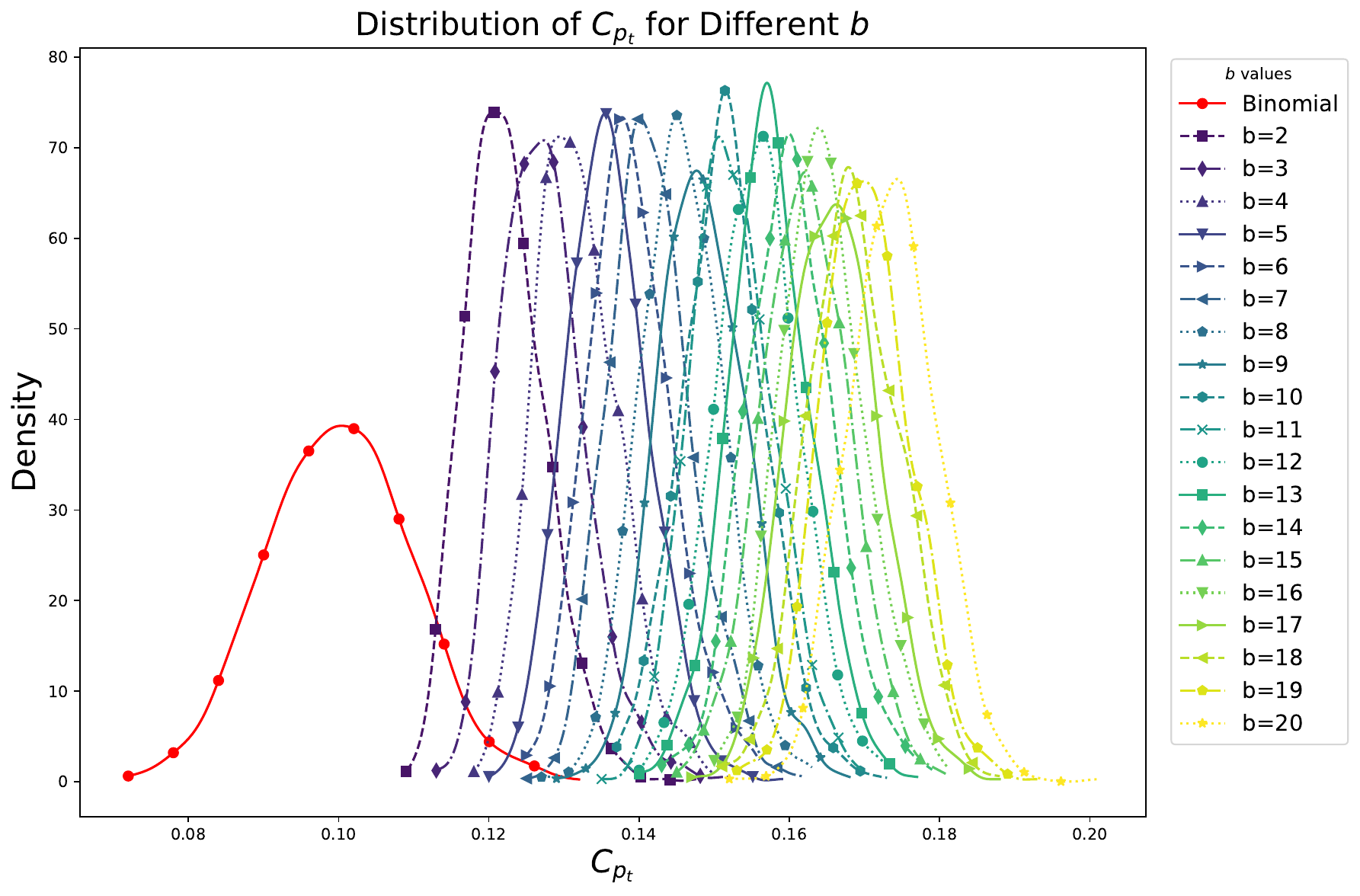}
        \vspace{0em}
        \caption{Distribution of $C_{p_t}$ with respect to different walues of $b$.}
        \label{fig:multibitb}
\end{figure}

\section{Numerical Bound Analysis} \label{sec:numerical}
Since the number of tokens used in the DW method is smaller than that used in FKE, we numerically calculated the minimal \(K\) value for which DW's bound outperforms FKE's bound, given specific \(T\) and \(r\). As shown in \Cref{tab:minK}, if \(K\) is larger than 114.7, the DW method's bound is superior for all settings we test. This capacity is relatively small, indicating that, our DW method outperforms the FKE method for almost any capacity \(K\). This conclusion is also supported by our analysis in \Cref{lemma:Kbound}.

\begin{table}[H]
        \centering
        \begin{tabular}{cccc}
        \toprule
        $T$ & $r = 0.2$ & $r = 0.5$ & $r = 0.8$ \\
        \hline
        200 & 9.3 & 3.6 & 1.6 \\
        300 & 21.1 & 6.0 & 2.0 \\
        400 & 48.7 & 10.2 & 2.5 \\
        500 & 114.7 & 17.7 & 3.1 \\
        \bottomrule
        \end{tabular}
        \captionof{table}{Lower bounds for \(K\) that DW is better than FKE with \(\tau_d=\tau_k = 1.6\) and varying \(T\) and \(r\).}
        \label{tab:minK}
\end{table}

\section{Watermarked Text Ratio}\label{sec:watermarktextratio}
The thresholds $\tau_k$ and $\tau_d$ are critical in determining whether a text is watermarked, with their optimal values varying according to the  watermarked text ratio. The metrics presented in the main experiment are averaged across datasets with differing ratios of watermarked text. To provide a comprehensive analysis of performance across various watermarked text ratios, we examine the relationship between the watermarked text ratio and the metrics Accu-I, Accu-O, and FPR. As illustrated in \Cref{fig:watermarkratio}, the following observations are made: (1) As the watermarked text ratio increases from 0\% to 100\%, Accu-I initially decreases and then rises after the 50\% mark. This trend occurs because, when the watermark ratio is 0\%, tuning the threshold on the development set to a very high value results in classifying all samples as unwatermarked, thereby leading to optimal performance. A similar situation arises when the watermark ratio nears 100\%, tuning the threshold to classify all samples as watermarked yields the best performance on both the development set and the test set. (2) Accu-O decreases as the watermark ratio increases, due to models ``overfitting'' to predict all samples as unwatermarked when the ratio is 0\%. As the ratio increases, predicting the exact key ID becomes more challenging than merely predicting whether the text is watermarked. However, Accu-O slightly increases as the watermark ratio approaches 100\% for FKE and PKE. This improvement occurs because the models are tuned to avoid predicting any samples as unwatermarked. (3) With an increasing watermarked text ratio, FPR also increases. This is because, at low watermark ratios, the tuned thresholds are set very high, making it unlikely for any sample to be classified as watermarked, thus eliminating false positive inferences. (4) It is evident that our proposed DW and HDW models outperform the FKE model across nearly all watermarked text ratios, demonstrating the effectiveness of our approach in various scenarios.

\begin{figure}[H]
    \centering
    \begin{minipage}{0.3\linewidth}
        \centering        \includegraphics[width=1.0\linewidth]{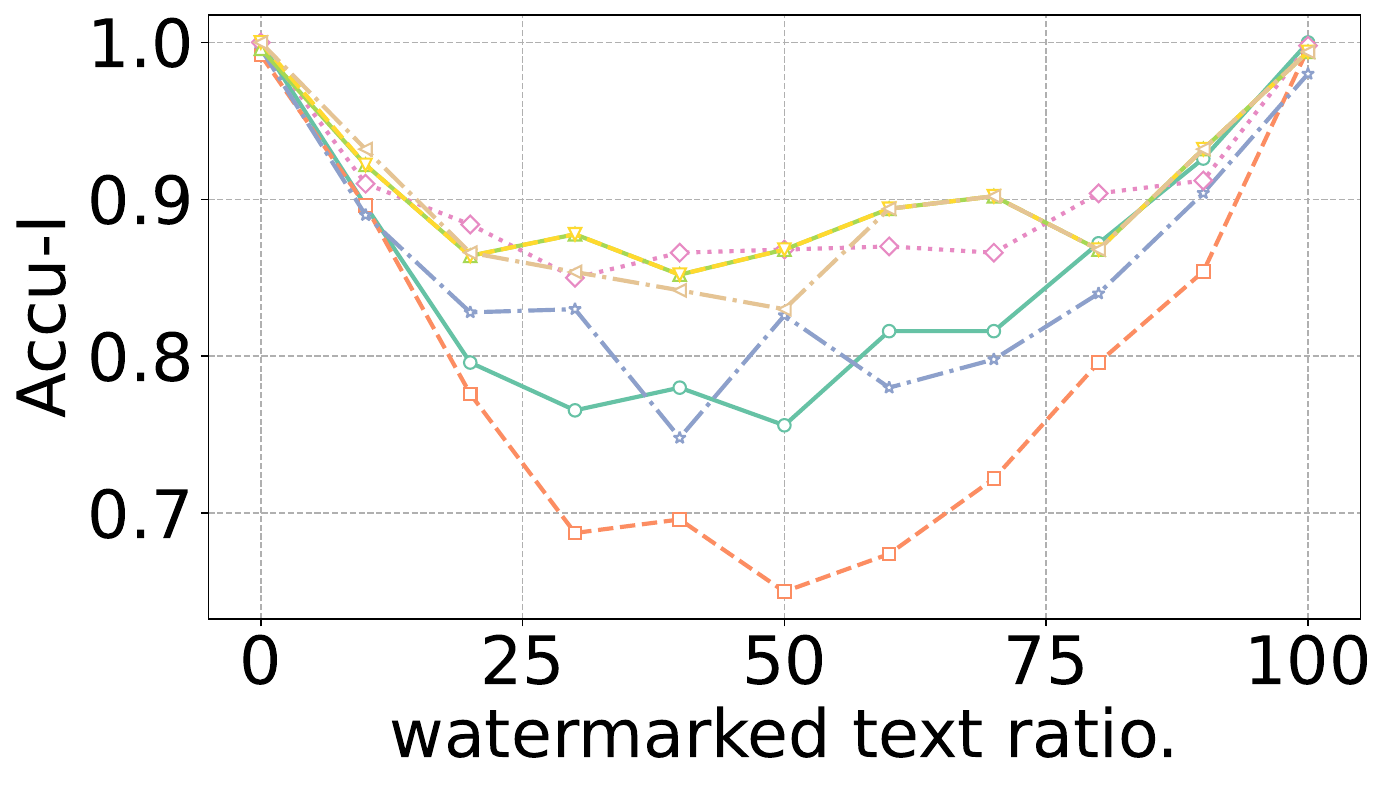}
        \label{fig:accud}
    \end{minipage}
    \hfill
    \begin{minipage}{0.3\linewidth}
        \centering
        \includegraphics[width=\linewidth]{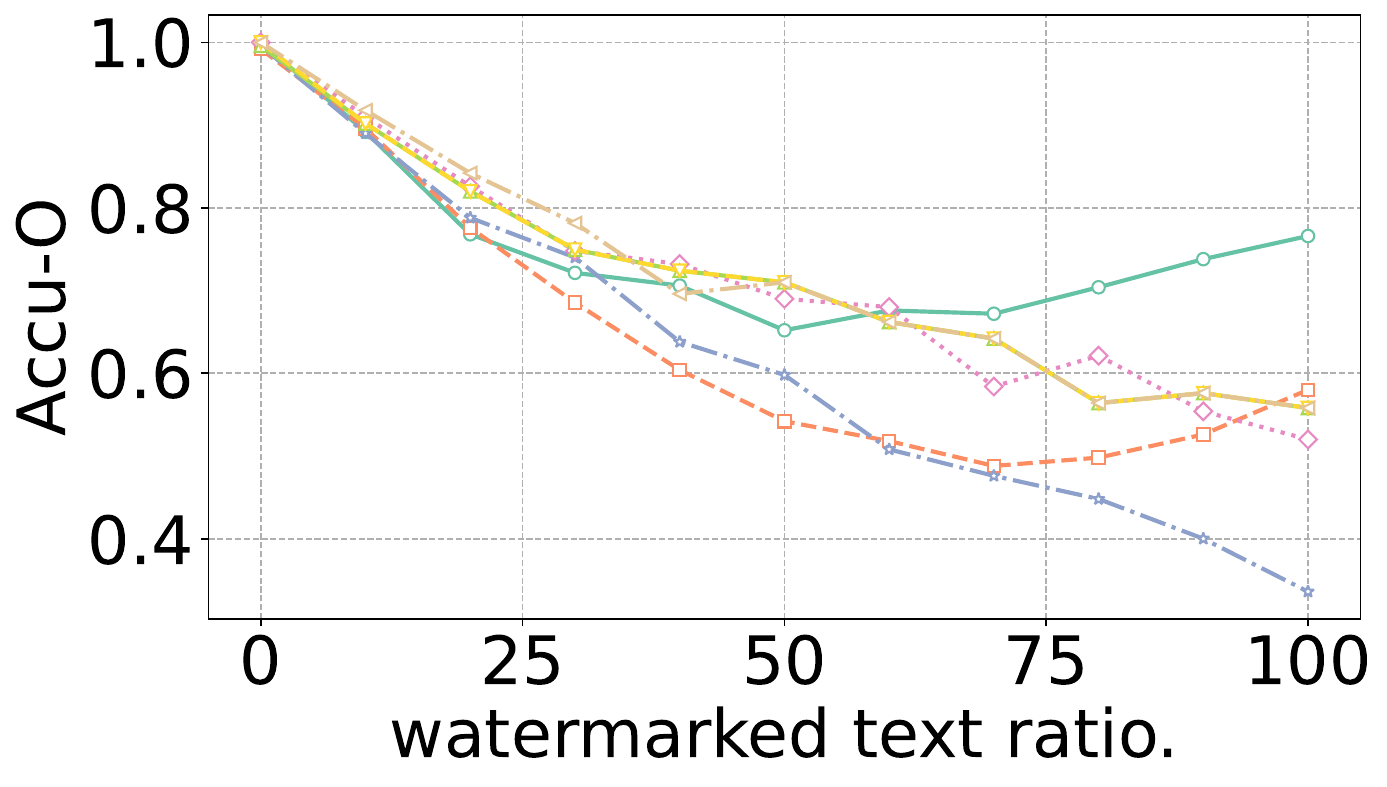}
        \label{fig:accuu}
    \end{minipage}
    \hfill
    \begin{minipage}{0.38\linewidth}
        \centering
        \includegraphics[width=\linewidth]{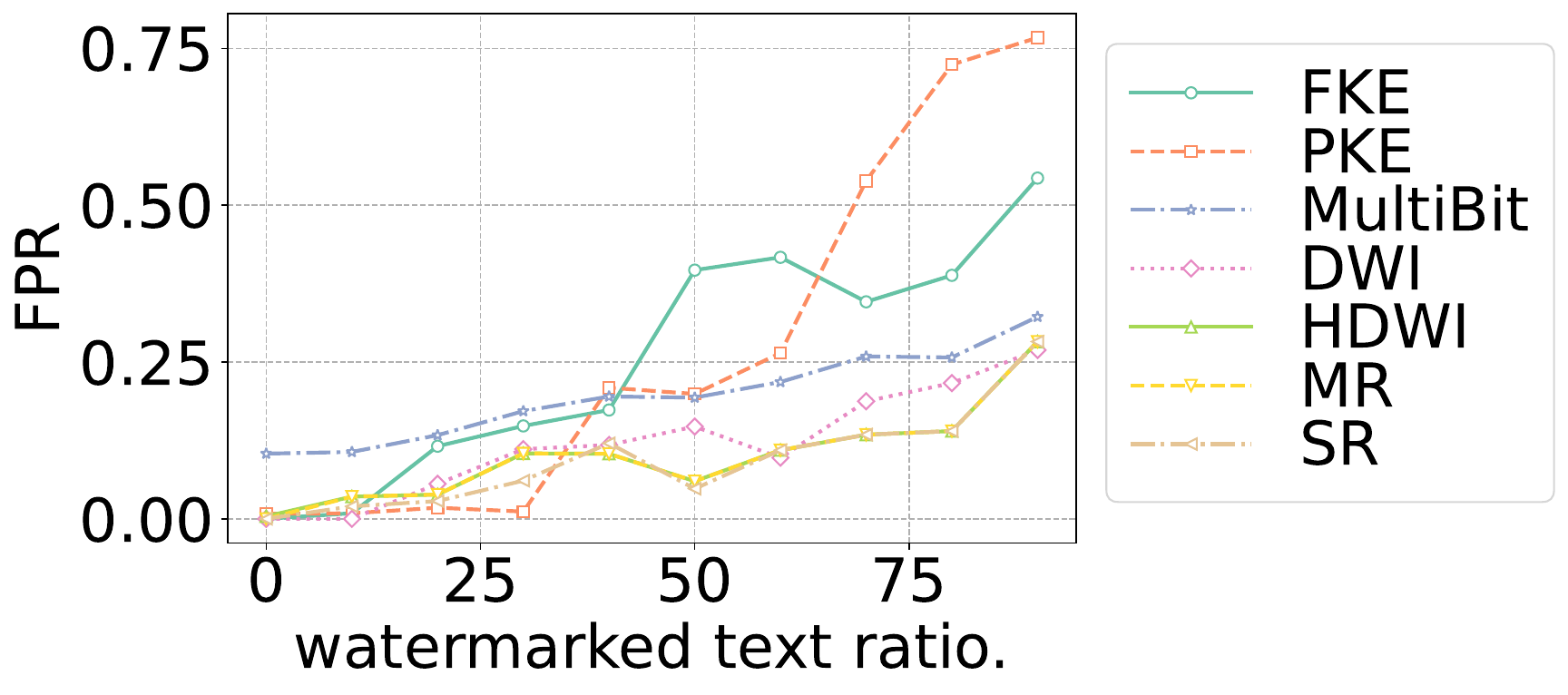}
        \label{fig:fpr}
    \end{minipage}
    \vspace{-1em}
    \caption{Watermarked text ratio results. The figures illustrate the relationship between the watermarked text ratio $r$ and the corresponding metrics. Each plot represents a specific metric, with metrics calculated by varying the thresholds $\tau_k$ and $\tau_d$ according to the watermark ratio $r$.}
    \label{fig:watermarkratio}
\end{figure}

\section{Watermarked Text Ratio for Dictionary-based method}\label{watermarktextratiodictionary}
Similar to \Cref{sec:watermarktextratio}, we conduct a watermarked text ratio analysis for the dictionary-based method using Multi-bit as the backbone. The results are presented in \Cref{fig:watermarkratiomaryland}. From the results, it can be observed that the trends for Accu-I, Accu-O, and FPR are similar to those observed in the experiments in \Cref{sec:watermarktextratio}, demonstrating that our method performs effectively in the dictionary-based approach. This observation further highlights the extensibility of our method across different backbones.

\begin{figure}[H]
    \centering
    \begin{minipage}{0.3\linewidth}
        \centering        \includegraphics[width=1.0\linewidth]{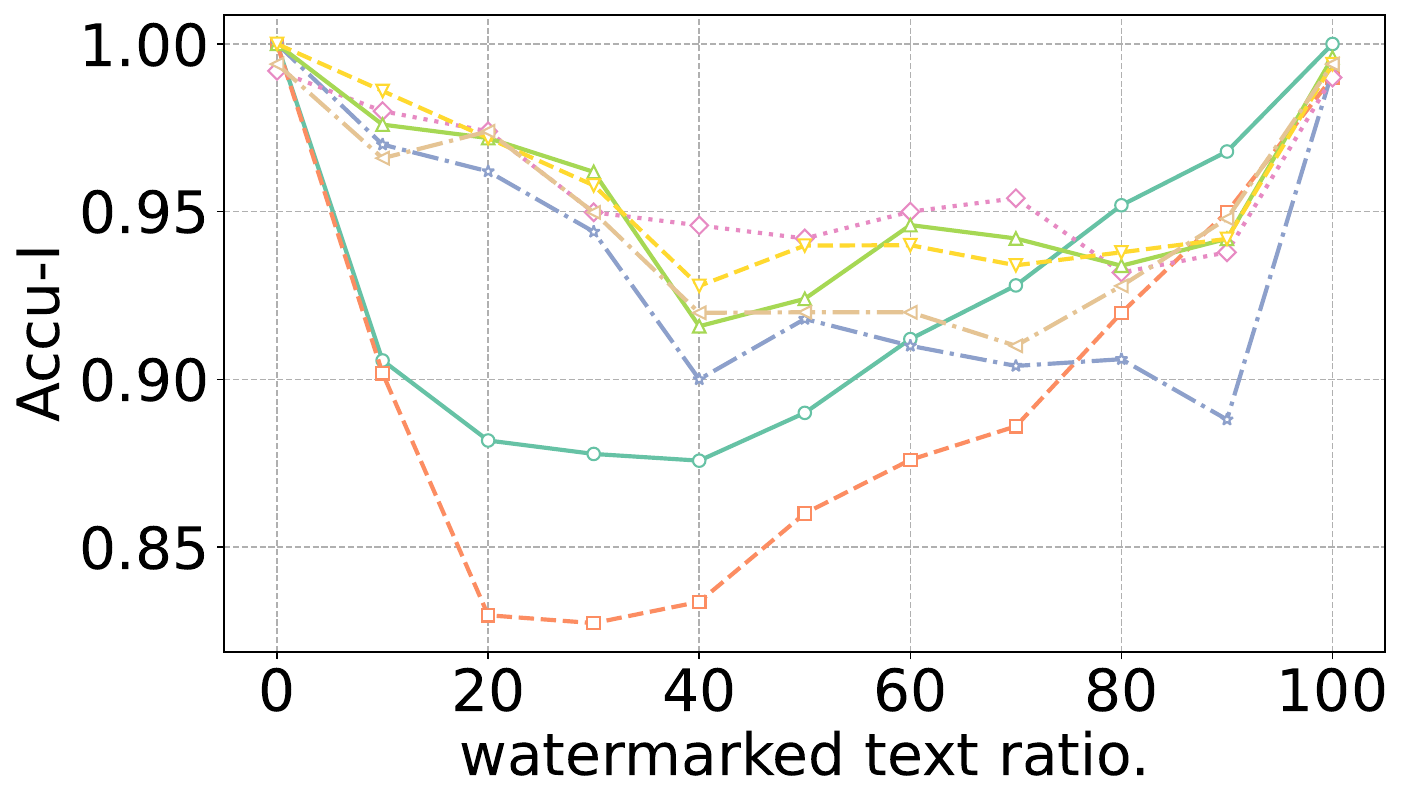}
        \label{fig:marylandaccud}
    \end{minipage}
    \hfill
    \begin{minipage}{0.3\linewidth}
        \centering
        \includegraphics[width=\linewidth]{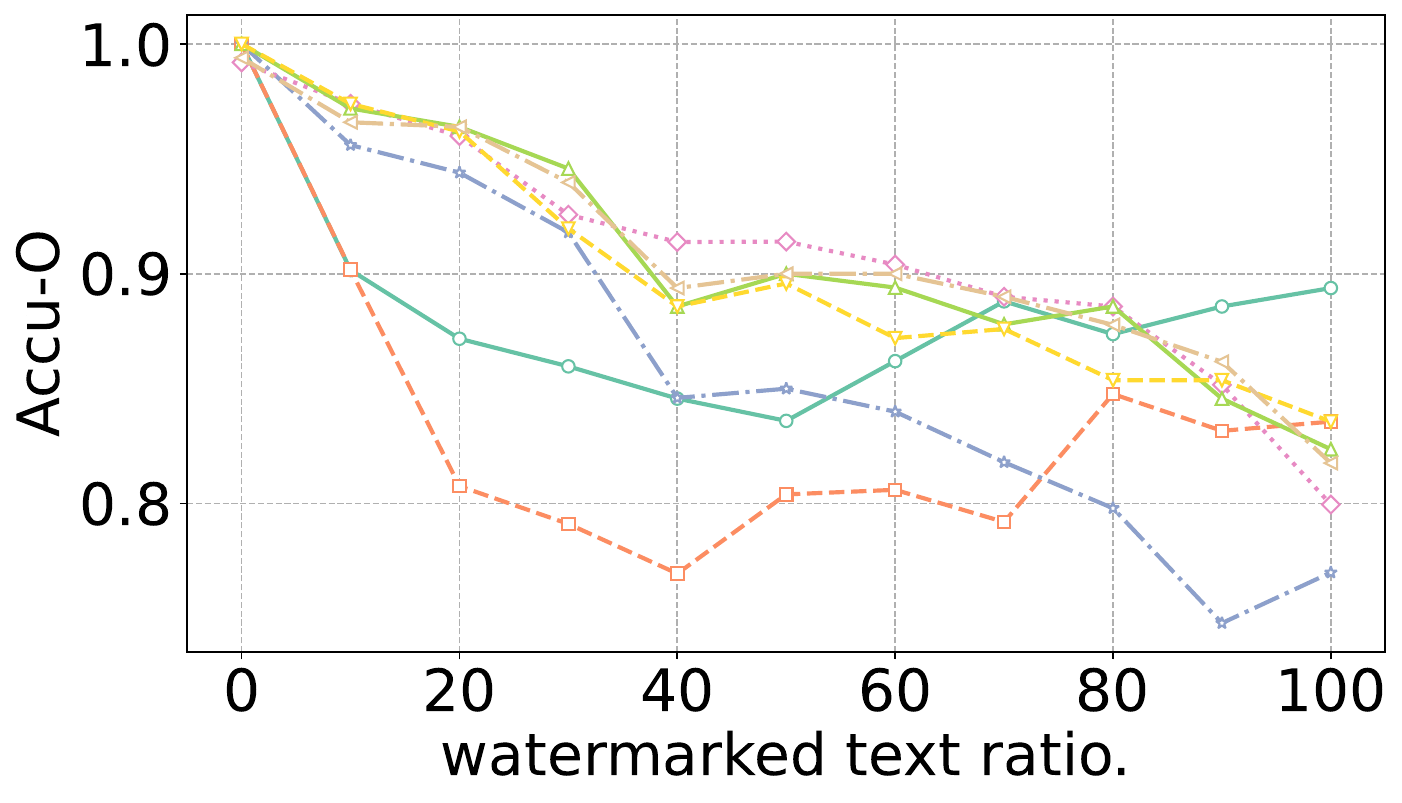}
        \label{fig:marylandaccuu}
    \end{minipage}
    \hfill
    \begin{minipage}{0.38\linewidth}
        \centering
        \includegraphics[width=\linewidth]{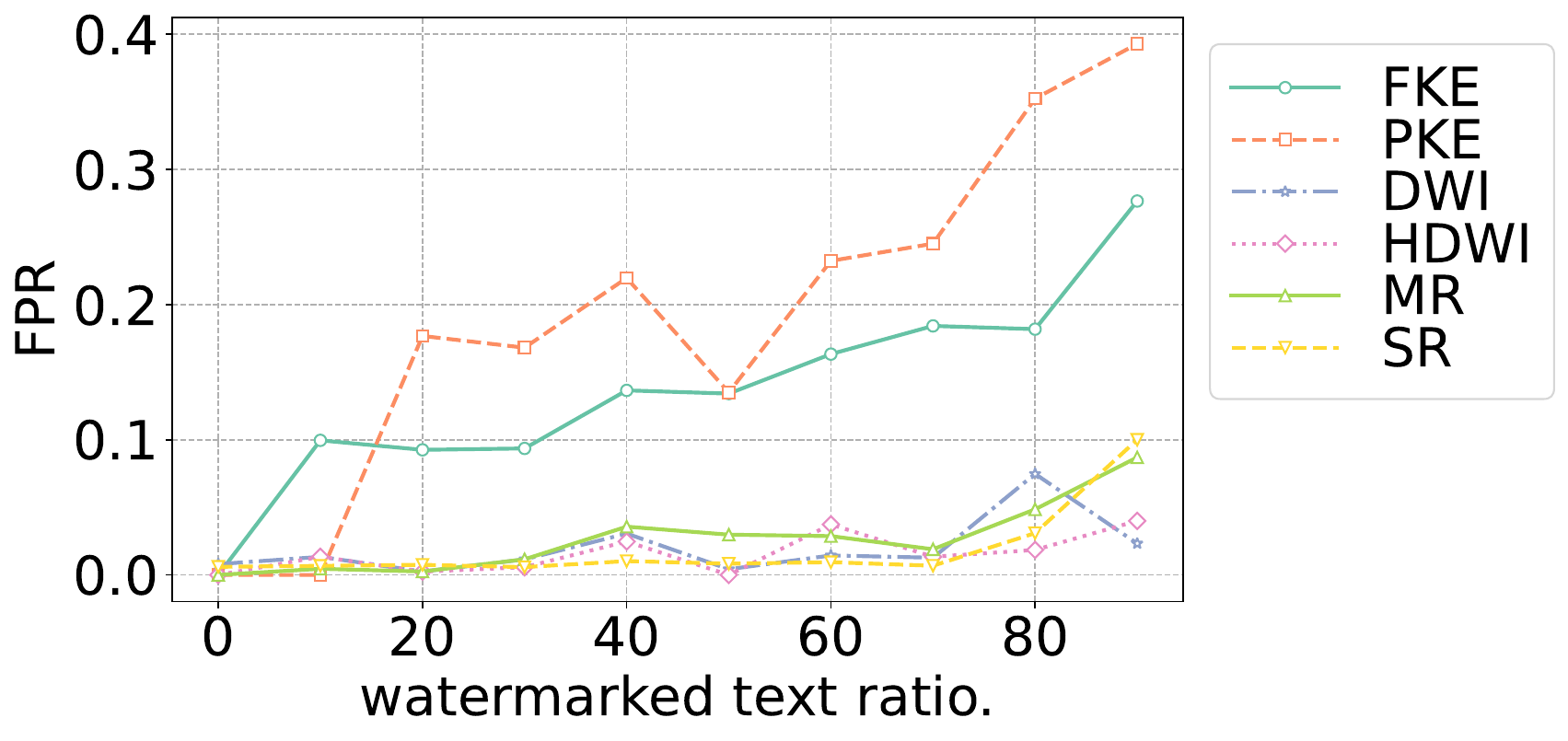}
        \label{fig:marylandfpr}
    \end{minipage}
    \vspace{-1em}
    \caption{Dictionary-based watermarked text ratio results. All models are based on Multi-bit. The figures illustrate the relationship between the watermarked text ratio $r$ and the corresponding metrics. Each plot represents a specific metric, with metrics calculated by varying the thresholds $\tau_k$ and $\tau_d$ according to the watermark ratio $r$.}
    \label{fig:watermarkratiomaryland}
\end{figure}

\section{Insertion and Deletion Attack}\label{sec:insdel}

Following \citet{fernandez2023three}, we also perform insertion and deletion attacks, randomly inserting or deleting tokens from the generated text to assess whether such modifications can effectively remove the watermark. We vary the insertion/deletion ratios in the range $[10\%,\cdots, 90\%]$. For instance, if the insertion ratio is 10\%, this indicates that we insert tokens amounting to 10\% of the total sequence length. Similarly, a deletion ratio of 10\% means removing 10\% of the tokens from the generated sequence. The experimental results are presented in \Cref{fig:insexp} and \Cref{fig:delexp} respectively.
The results indicate the following observations: (1) As the insertion/deletion ratio increases, all scores decrease. This is expected, as modifying more tokens introduces additional noise, making it increasingly difficult to classify the tokens. (2) Our proposed DW and HDW methods perform nearly identically to the original FKE method, demonstrating that our approach retains the same robustness capabilities as the original methods.

\begin{figure}[H]
    \centering
    \begin{minipage}{0.32\linewidth}
        \centering
        \includegraphics[width=\linewidth]{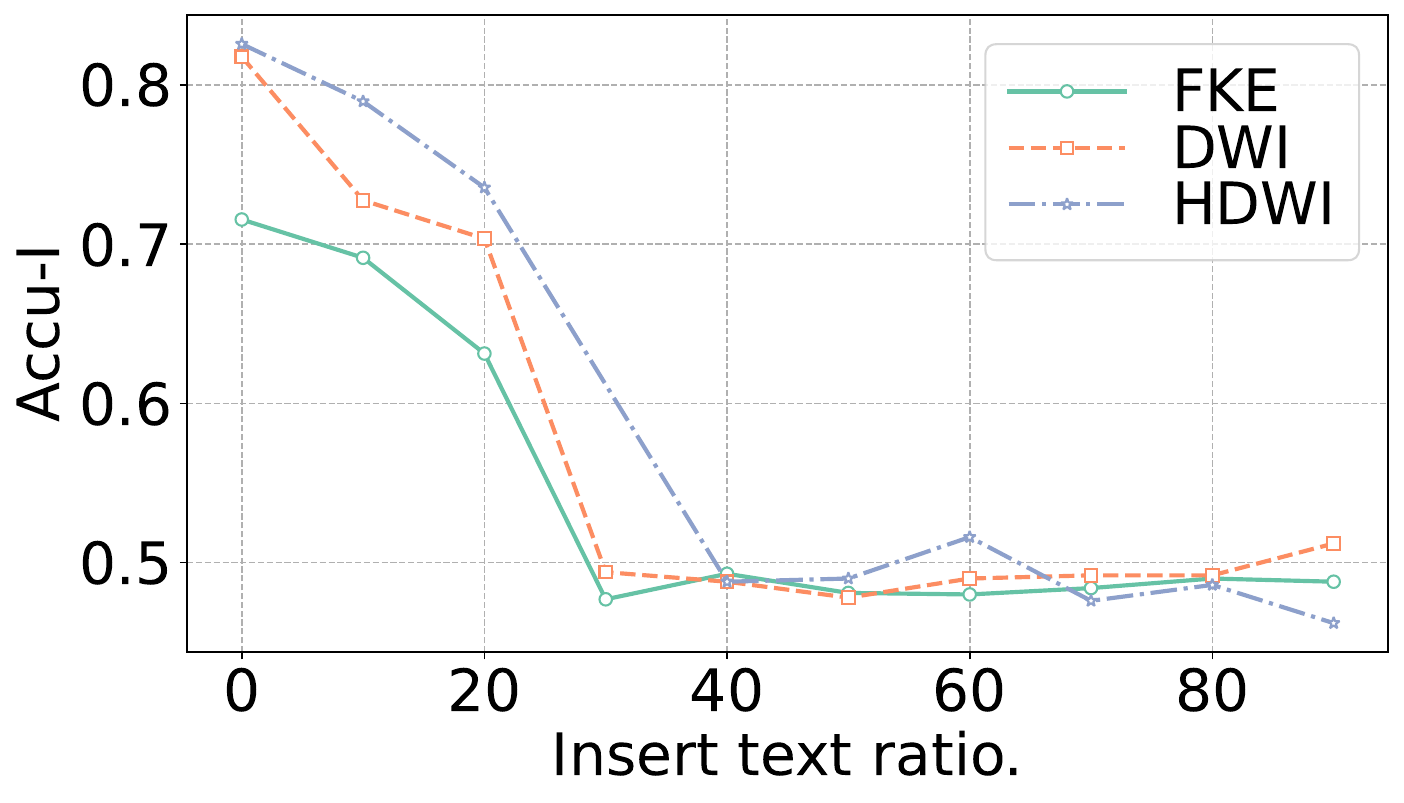}
    \end{minipage}
    \hfill
    \begin{minipage}{0.32\linewidth}
        \centering
        \includegraphics[width=\linewidth]{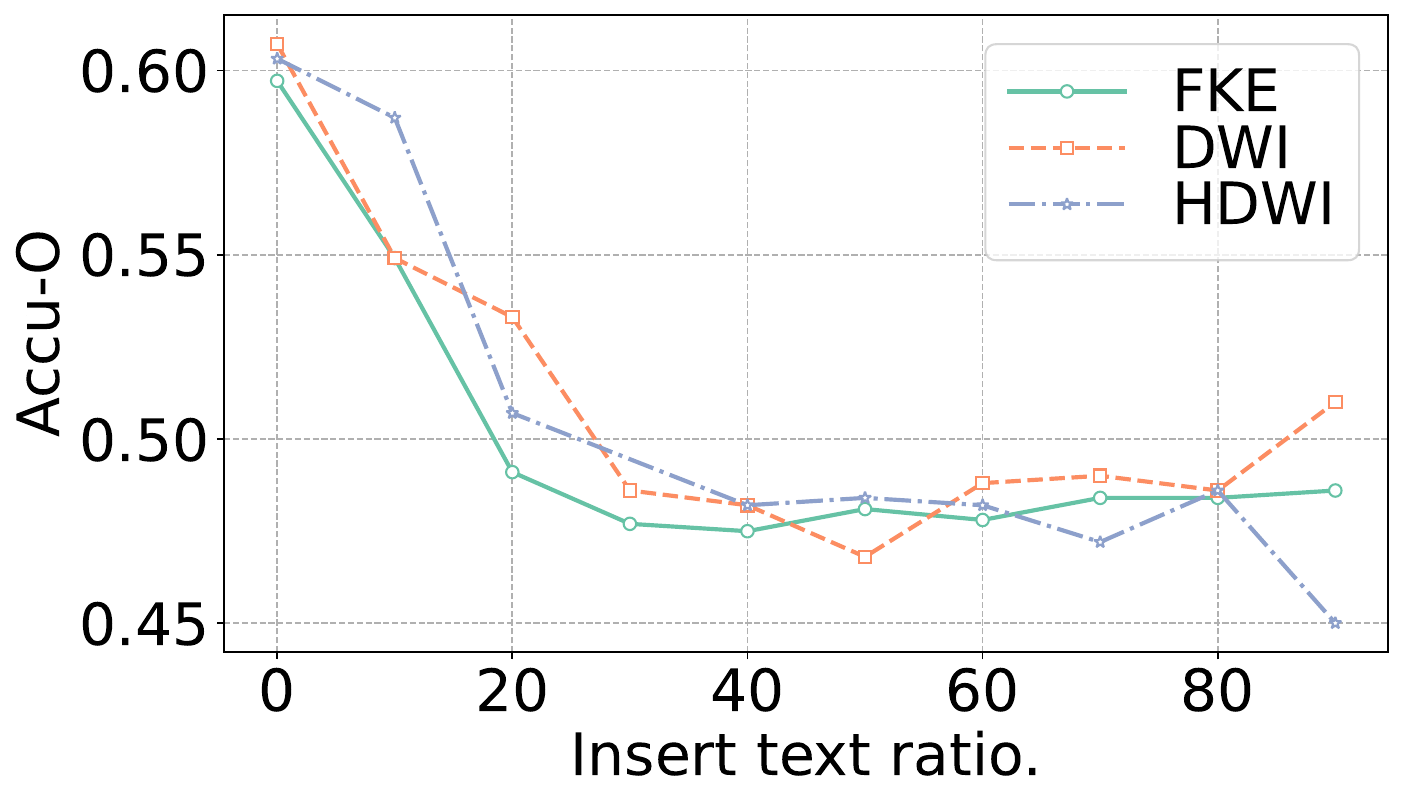}
    \end{minipage}
    \hfill
    \begin{minipage}{0.32\linewidth}
        \centering
        \includegraphics[width=\linewidth]{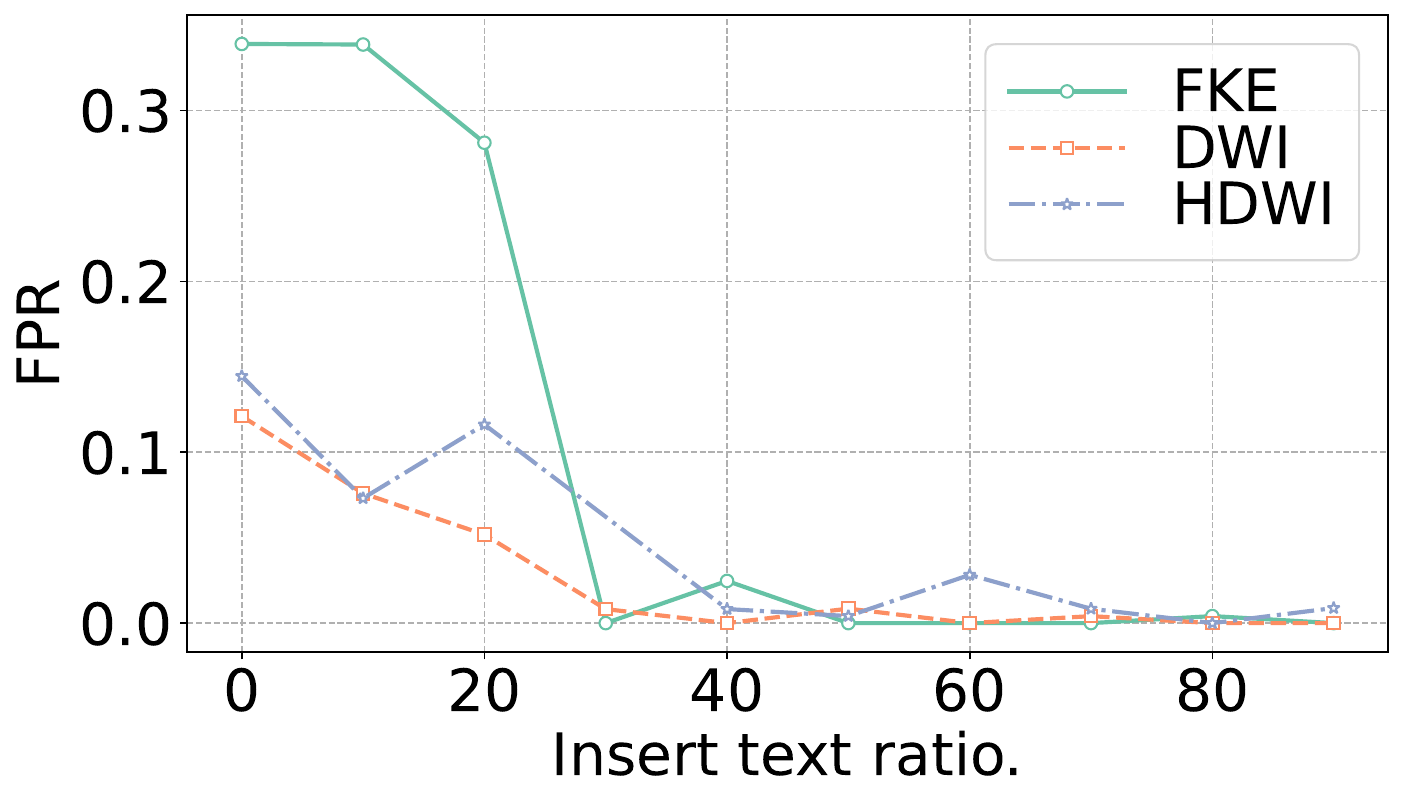}
    \end{minipage}
    \caption{Insertion attack results. The figure shows the impact of varying insertion ratios (10\% to 90\%) on the metrics Accu-I, Accu-O, and FPR for different watermarking methods (FKE, DW, HDW).}
    \label{fig:insexp}
\end{figure}

\begin{figure}[h]
    \centering
    \begin{minipage}{0.32\linewidth}
        \centering
        \includegraphics[width=\linewidth]{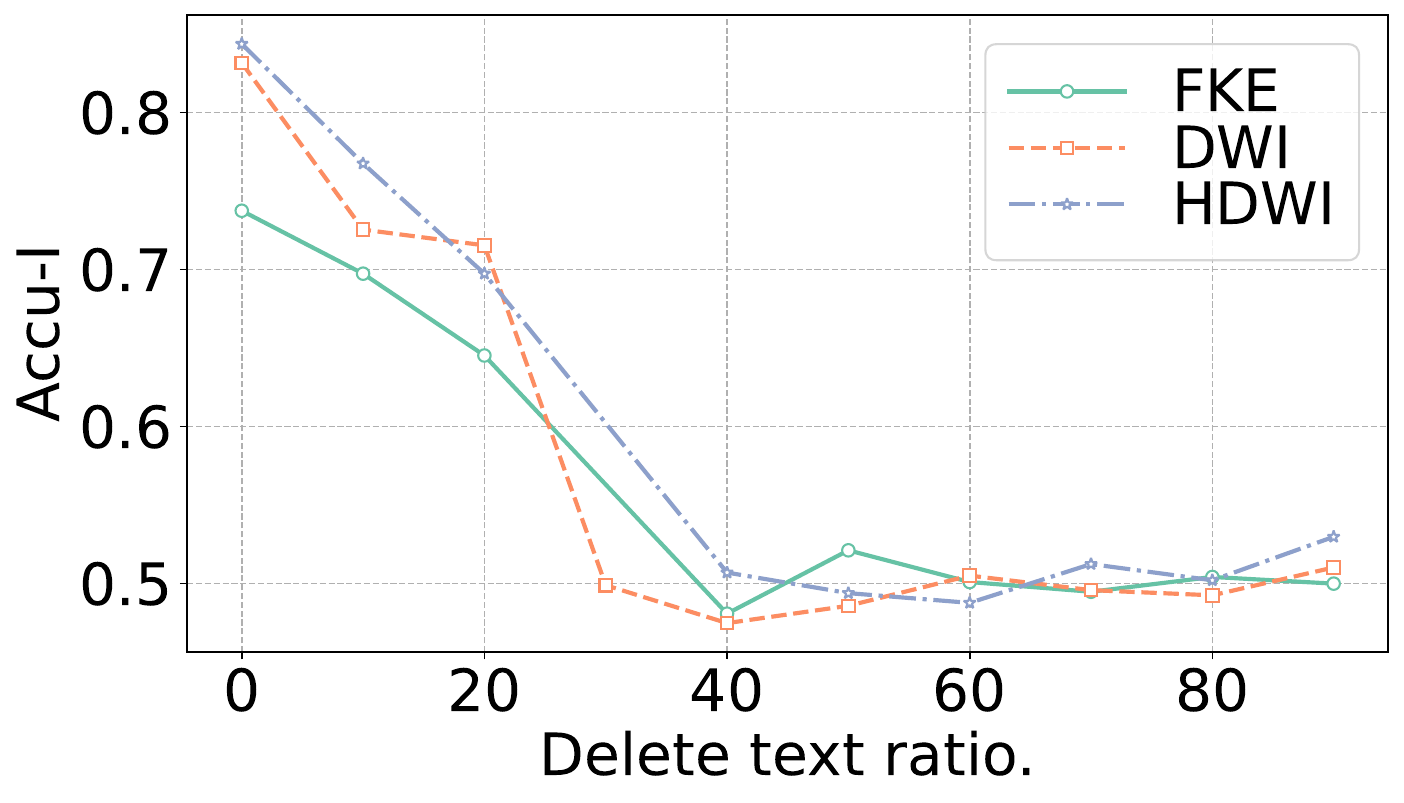}
    \end{minipage}
    \hfill
    \begin{minipage}{0.32\linewidth}
        \centering
        \includegraphics[width=\linewidth]{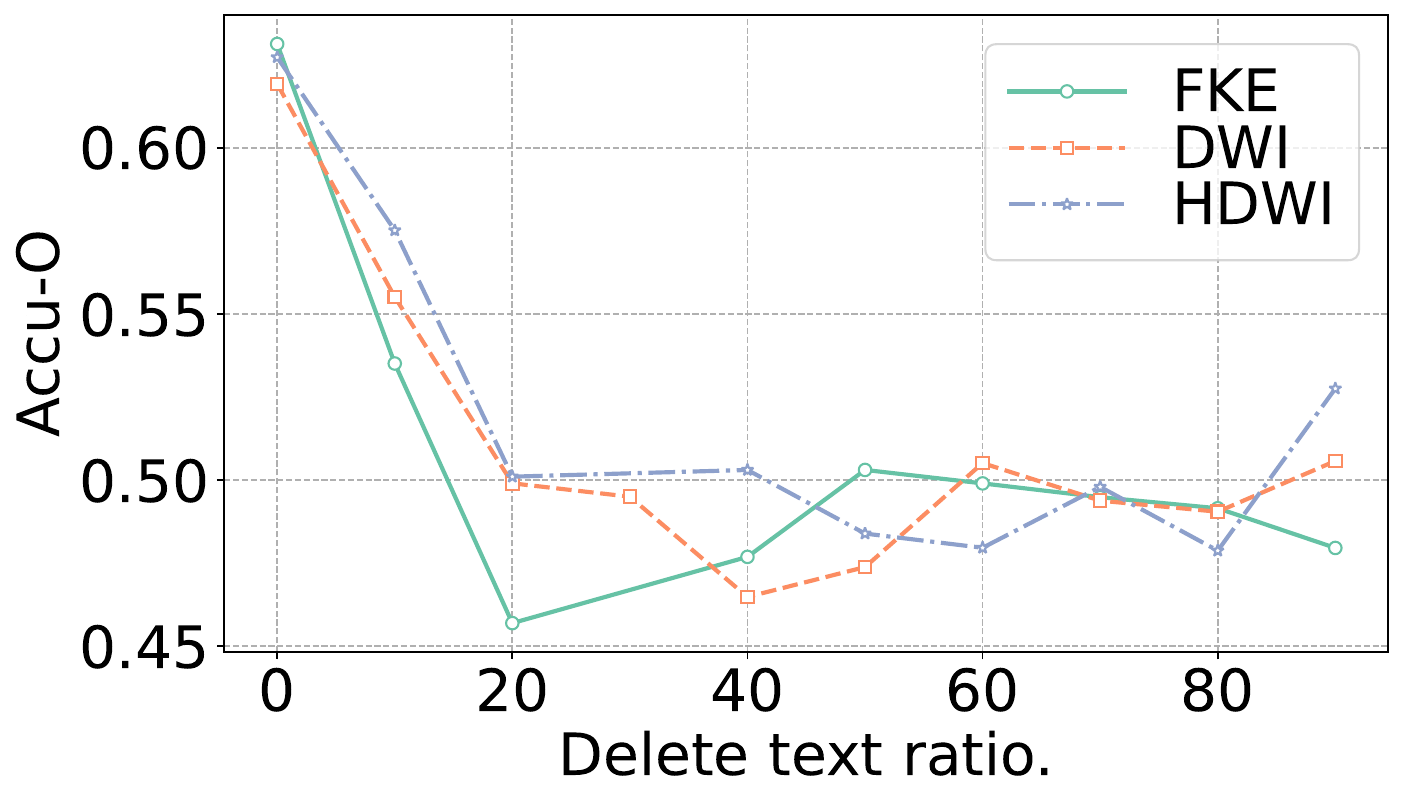}
    \end{minipage}
    \hfill
    \begin{minipage}{0.32\linewidth}
        \centering
        \includegraphics[width=\linewidth]{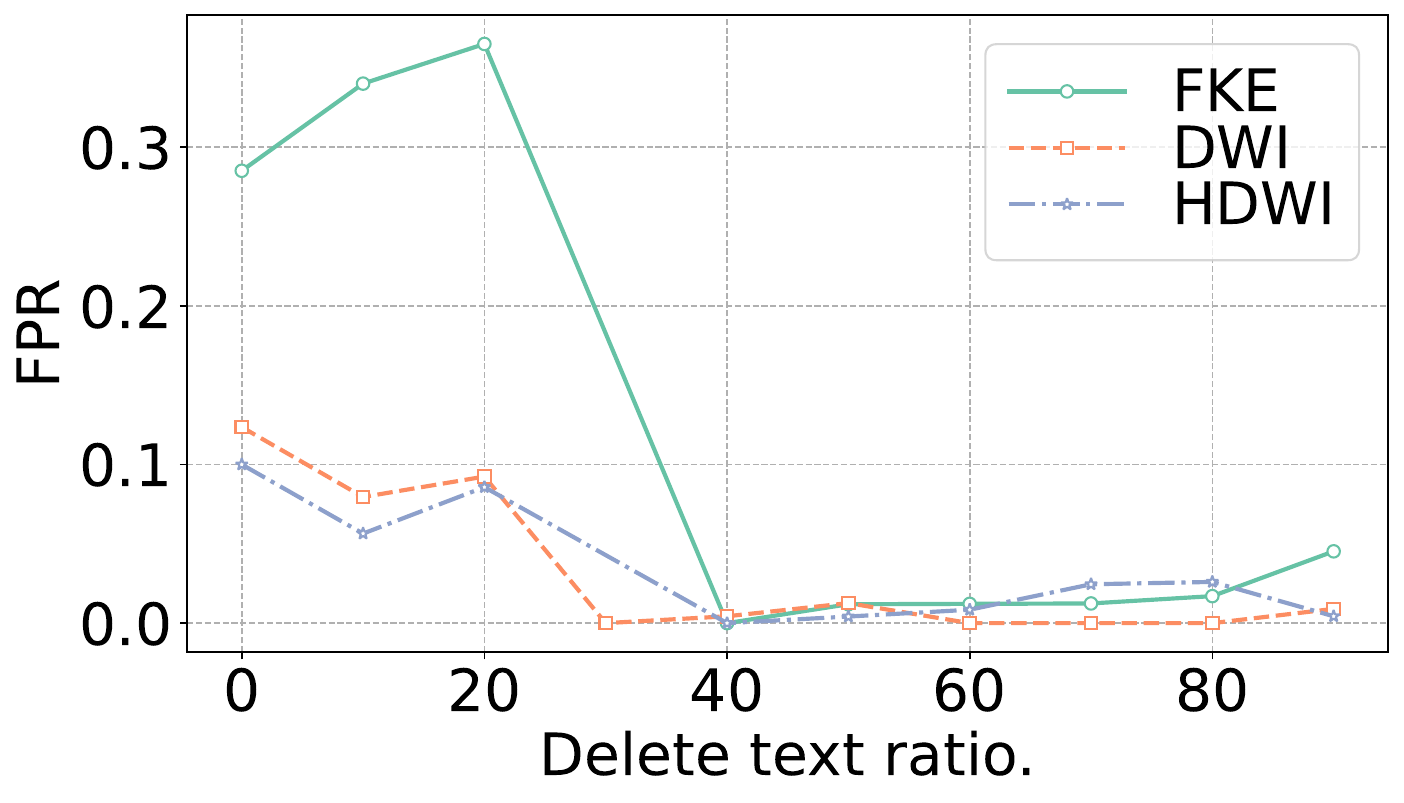}
    \end{minipage}
    \caption{Deletion attack results. The figure illustrates the effect of varying deletion ratios (10\% to 90\%) on the metrics Accu-I, Accu-O, and FPR for different watermarking methods (FKE, DW, HDW).}
    \label{fig:delexp}
\end{figure}

\section{Paraphrase Attack}\label{sec:paraphraseattack}
We conduct a paraphrase attack to evaluate the robustness of the proposed methods. We set a watermarked ratio $r=0.5$ to test whether the models can differentiate watermarked text. We use  Parrot\_Paraphraser\footnote{\url{https://github.com/PrithivirajDamodaran/Parrot_Paraphraser}}, a toolkit designed to rephrase sentences generated with watermarks, and we use the same detection tool to detect the watermark and key information. The results are shown in \Cref{fig:paraphraseattack}. We can observe that (1) our proposed DW and HDW models outperform the FKE method, (2) although accuracy decreases after the paraphrase attack, it remains above 0.5, indicating that the methods can still recognize watermarked text and associated keys, and (3) the FPR decreases after the attack because the models are more likely to classify text as unwatermarked. This outcome is expected because, after the paraphrase attack, some previously watermarked text can no longer be detected.

\begin{figure}[h]
    \centering
        \includegraphics[width=\linewidth]{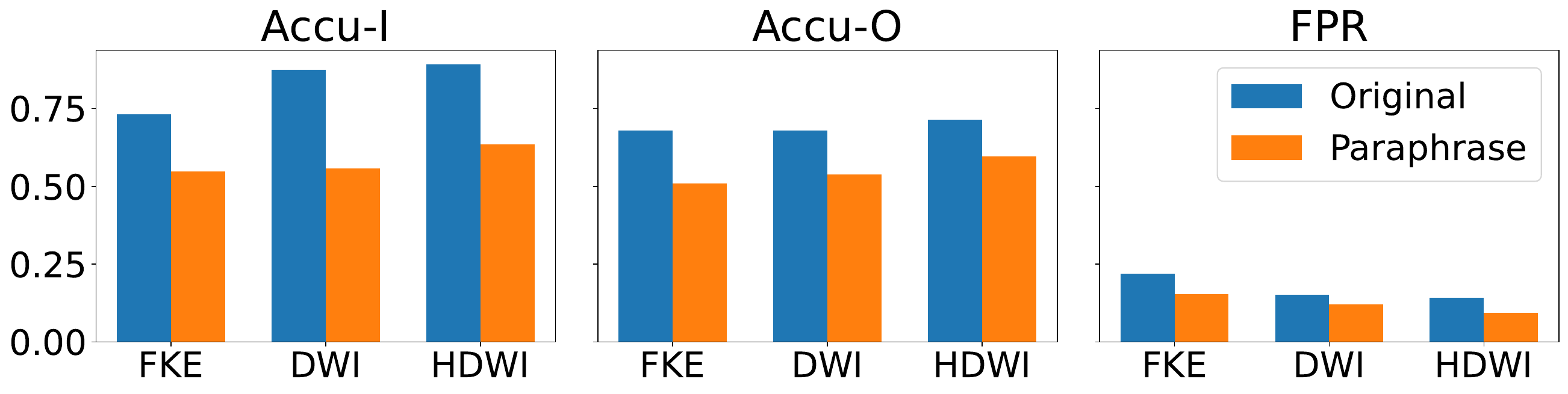}
        \vspace{0em}
        \caption{Paraphrase attack results. The figure compares the performance of watermarking methods (FKE, DW, HDW) on original and paraphrased text, showing metrics Accu-I, Accu-O, and FPR for a watermarked ratio $r = 0.5$ . }
        \label{fig:paraphraseattack}
\end{figure}

\section{Runtime Analysis}\label{sec:runtimeanalysis}
To evaluate the computational cost of the watermarking models, we conducted a runtime analysis experiment by testing the runtime for the same 100 samples across different models, including the “NoWatermark” generation. The results, shown in \Cref{fig:runtime}, reveal that (1) in the generation phase, the runtime for watermarking models is slightly higher than for non-watermarked generation, as additional time is required for hashing and key encoding, (2) the runtime for all models during the generation phase is independent of the key capacity, since the watermark encoding process only runs once and does not depend on the size of the key capacity, and (3) in the detection phase, the runtime for all watermarking models increases linearly with the key capacity, as the detection process involves multiple iterations over possible keys to identify the best matching key.

\begin{figure}[h]
    \centering
    \begin{minipage}{0.45\linewidth}
        \centering
        \includegraphics[width=\linewidth]{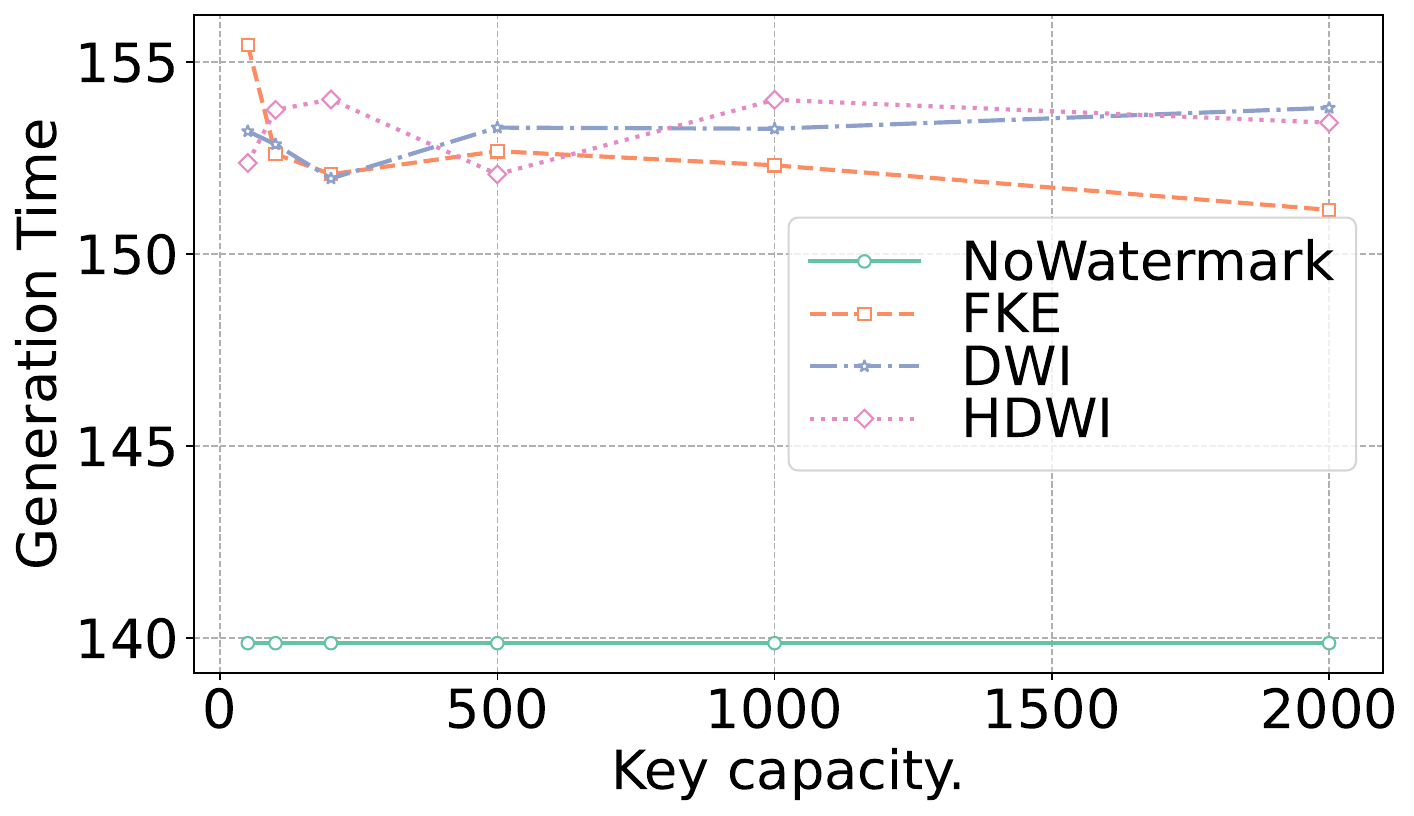}
    \end{minipage}
    \begin{minipage}{0.45\linewidth}
        \centering
        \includegraphics[width=\linewidth]{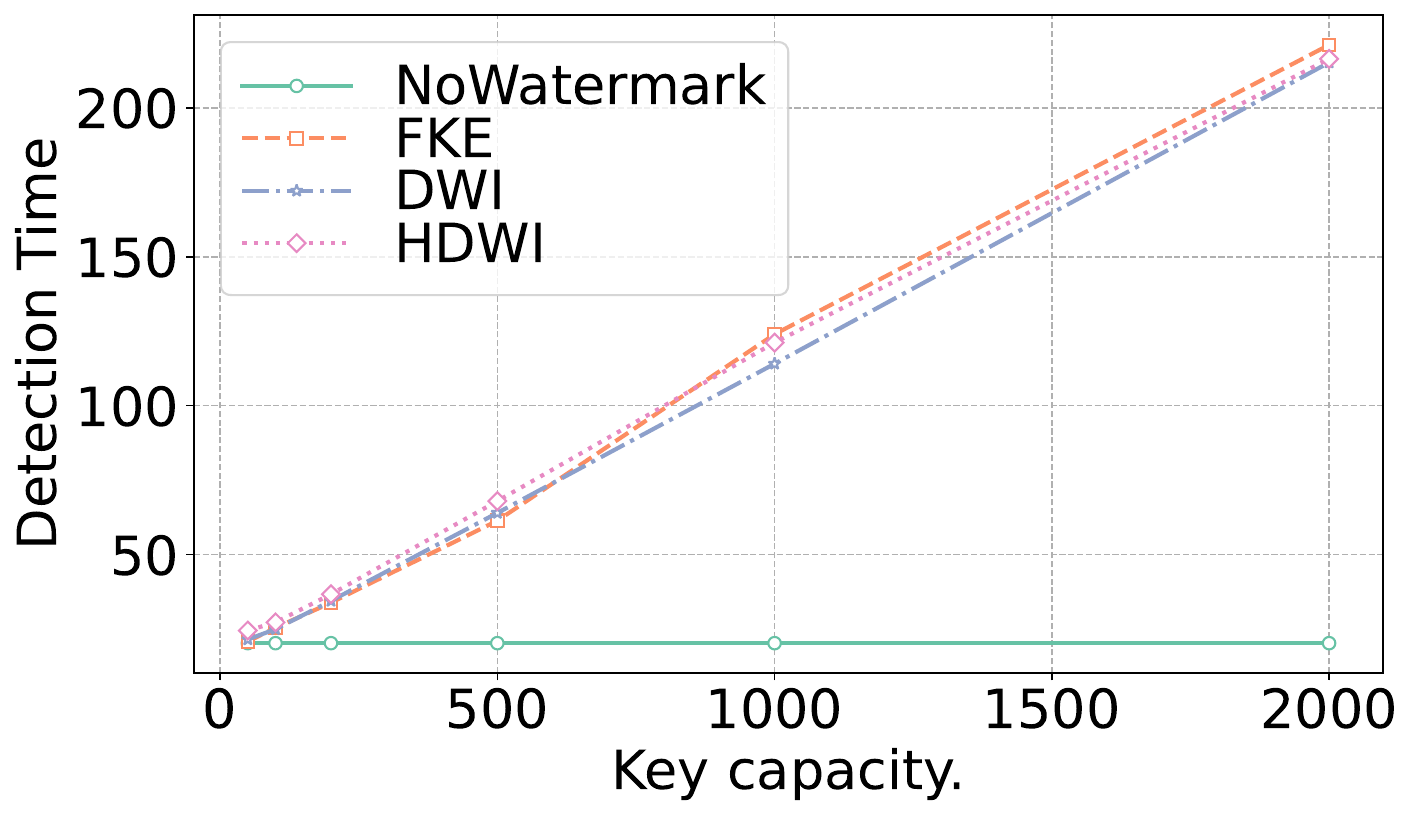}
    \end{minipage}
    \label{fig:runtime}
    \caption{Runtime analysis. The left plot shows the generation time, while the right plot shows the detection time for various watermarking models (NoWatermark, FKE, DW, HDW) as a function of key capacity ( K  ranging from 0 to 2000).}
\end{figure}

\section{Sequence Length Analysis}\label{sec:seqlenanalysis}

To evaluate how performance is influenced by the length of generated sequences, we conducted a sequence length analysis experiment using the HDW model with a watermark ratio $r=0.5$. The experiment tested sequence lengths ranging from 20 to 1000, and the results are presented in \Cref{fig:lenexp}. The following observations can be made: (1) as the sequence length increases, the accuracy scores improve, as longer sequences allow for clearer embedding of the watermark into the generated text, (2) as the sequence length grows, the FPR metric also increases; however, this does not necessarily indicate worsening false detection problems. When the text length is short, the model rarely recognizes any sequence as watermarked, leading to accuracy scores close to 0.5 and FPR close to 0. As the sequence length increases, the predicted positive rate rises, resulting in more false positives, and (3) based on the experiment, the models begin to recognize watermarked text effectively when the token length exceeds 50, and they achieve good performance when the token length exceeds 200.

\begin{figure}[h]
    \centering
        \includegraphics[width=0.5\linewidth]{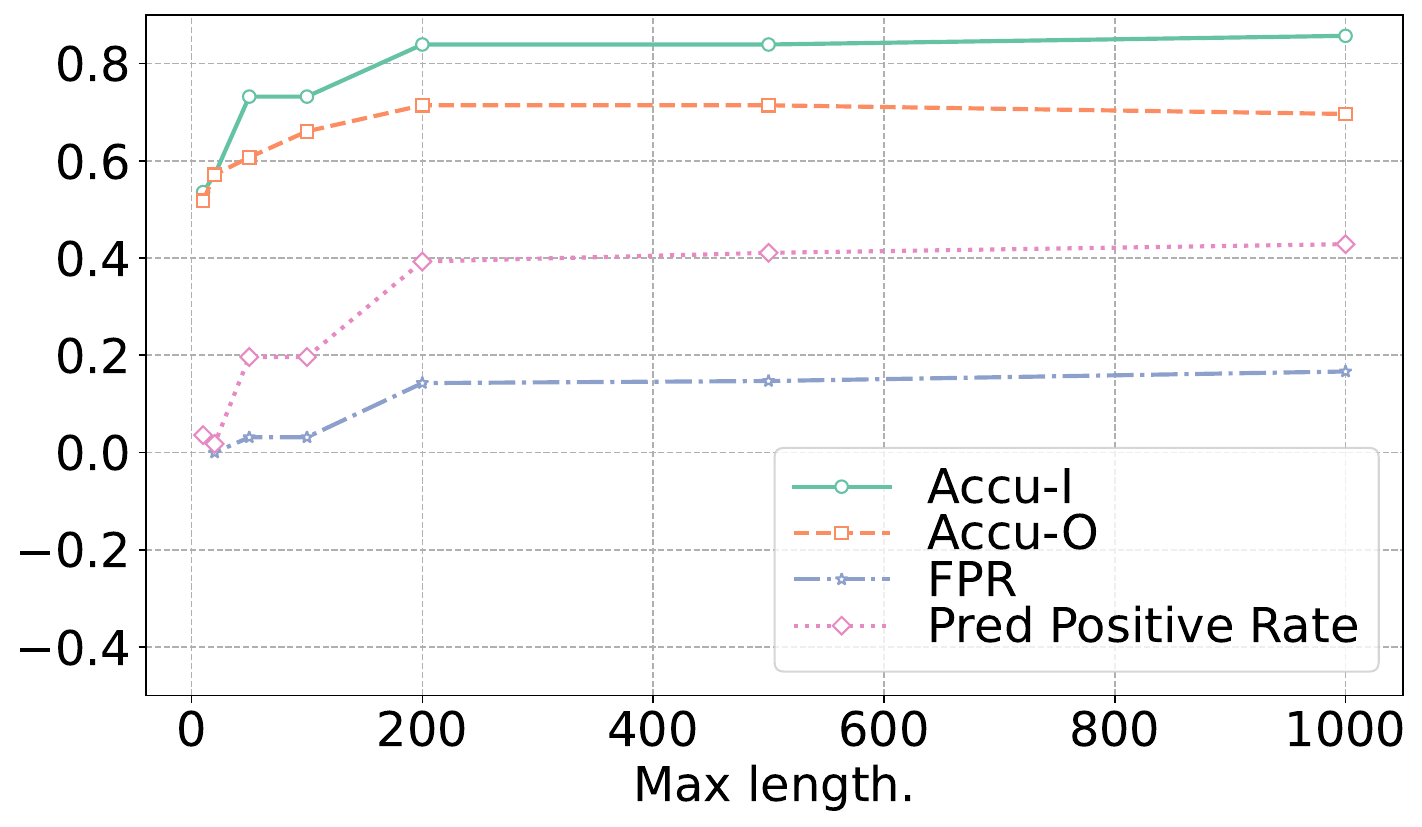}
        \vspace{0em}
        \caption{Sequence Length Analysis. The figure presents the impact of sequence length (ranging from 20 to 1000) on metrics (Accu-I, Accu-O, FPR, and Predicted Positive Rate) for the HDW model with a watermark ratio $r = 0.5$.}
        \label{fig:lenexp}
\end{figure}

\section{Experiments with More Datasets}\label{sec:moredata}

To demonstrate the applicability of our model across different scenarios, we conducted experiments on two domain-specific datasets: a biomedical question dataset, BioASQ \citep{krithara2023bioasq}, and a legal dataset, LegalQA\footnote{\url{https://huggingface.co/datasets/dzunggg/legal-qa-v1}}. We evaluated our models on these datasets, and the results are presented in \Cref{tab:bioasq} and \Cref{tab:legalqa}. The findings show that (1) the performance trends on these domain-specific datasets are generally consistent with those in the main experiment, with our proposed methods achieving superior results compared to other models, and (2) the similarity scores in both datasets are as high as 0.9, indicating that the watermarking method minimally alters the output text, even in highly specific domains.

\begin{table}[h]
    \centering
    \scriptsize
    \begin{minipage}{0.4\textwidth}
        \centering
        \scriptsize
        
        \begin{tabular}{lllll}
        \toprule
         & Accu-I$\uparrow$ & Accu-O$\uparrow$ & FPR$\downarrow$ & Sim$\uparrow$ \\
        \midrule
        FKE & 0.883 & 0.834 & 0.175 & 0.934 \\
        PKE & 0.797 & 0.679 & 0.20 & 0.931 \\
        MultiBit & 0.893 & 0.655 & 0.0825 & 0.922 \\
        DW & 0.963 & 0.744 & 0.0459 & 0.932 \\
        HDW & 0.923 & 0.73 & 0.0517 & 0.923 \\
        MR & 0.902 & 0.739 & 0.0769 & 0.923 \\
        SR & 0.963 & 0.773 & 0.0428 & 0.923 \\
            \bottomrule
        \end{tabular}
        
        \caption{BioASQ dataset results.}
        \label{tab:bioasq}
    \end{minipage}%
    \quad\quad\quad
    \begin{minipage}{0.4\textwidth}
        \centering
        \scriptsize
        
        \begin{tabular}{lllll}
        \toprule
         & Accu-I$\uparrow$ & Accu-O$\uparrow$ & FPR$\downarrow$ & Sim$\uparrow$ \\
        \midrule
        FKE & 0.955 & 0.95 & 0.0877 & 0.901 \\
        PKE & 0.873 & 0.80 & 0.0657 & 0.905 \\
        MultiBit & 0.941 & 0.732 & 0.175 & 0.89 \\
        DW & 0.955 & 0.831 & 0.118 & 0.905 \\
        HDW & 0.973 & 0.863 & 0.00952 & 0.892 \\
        MR & 0.946 & 0.846 & 0.0275 & 0.892 \\
        SR & 0.943 & 0.832 & 0.0906 & 0.892 \\
        \bottomrule
        \end{tabular}
        
        \caption{LegalQA dataset results.}
        \label{tab:legalqa}
    \end{minipage}
\end{table}

\begin{table}[t]
    
\end{table}

\begin{table}[t]
    
\end{table}

\section{Experiments with Indication Ratio  $r_d$}\label{sec:indicationratioparam}
The indication ratio parameter $r_d$ controls the ratio between tokens used to encode the indicator variable and those used to encode key information. We conduct an experiment to evaluate how different values of $r_d$ affect the results, as shown in \Cref{fig:rdexp}. The findings are summarized as follows: 

(1) As $r_d$ increases, both DW and HDW exhibit an improvement in the Accu-I score. This demonstrates that using more tokens to encode the indicator variable enhances the accuracy of detecting whether the text is watermarked, thereby validating the correctness of our proposed method and theoretical analysis.
(2) With an increase in $r_d$, the Accu-O score initially increases and then decreases. At smaller values of $r_d$, the performance improves as more tokens are available to detect whether the text is watermarked. However, when $r_d$ becomes too large, it impairs the detection of key information, leading to a decline in overall performance.
(3) DW performs worse in both Accu-I and Accu-O when $r_d$ is small. This occurs because DW does not reuse key information to detect whether the text is watermarked, giving HDW an advantage at smaller $r_d$ values. This further underscores the effectiveness of the HDW method.
(4) As $r_d$ increases, the FPR decreases. Allocating more tokens to encode the indicator variable helps alleviate the false positive problem, improving overall robustness.

\begin{figure}[H]
    \centering
    \begin{minipage}{0.32\linewidth}
        \centering
        \includegraphics[width=\linewidth]{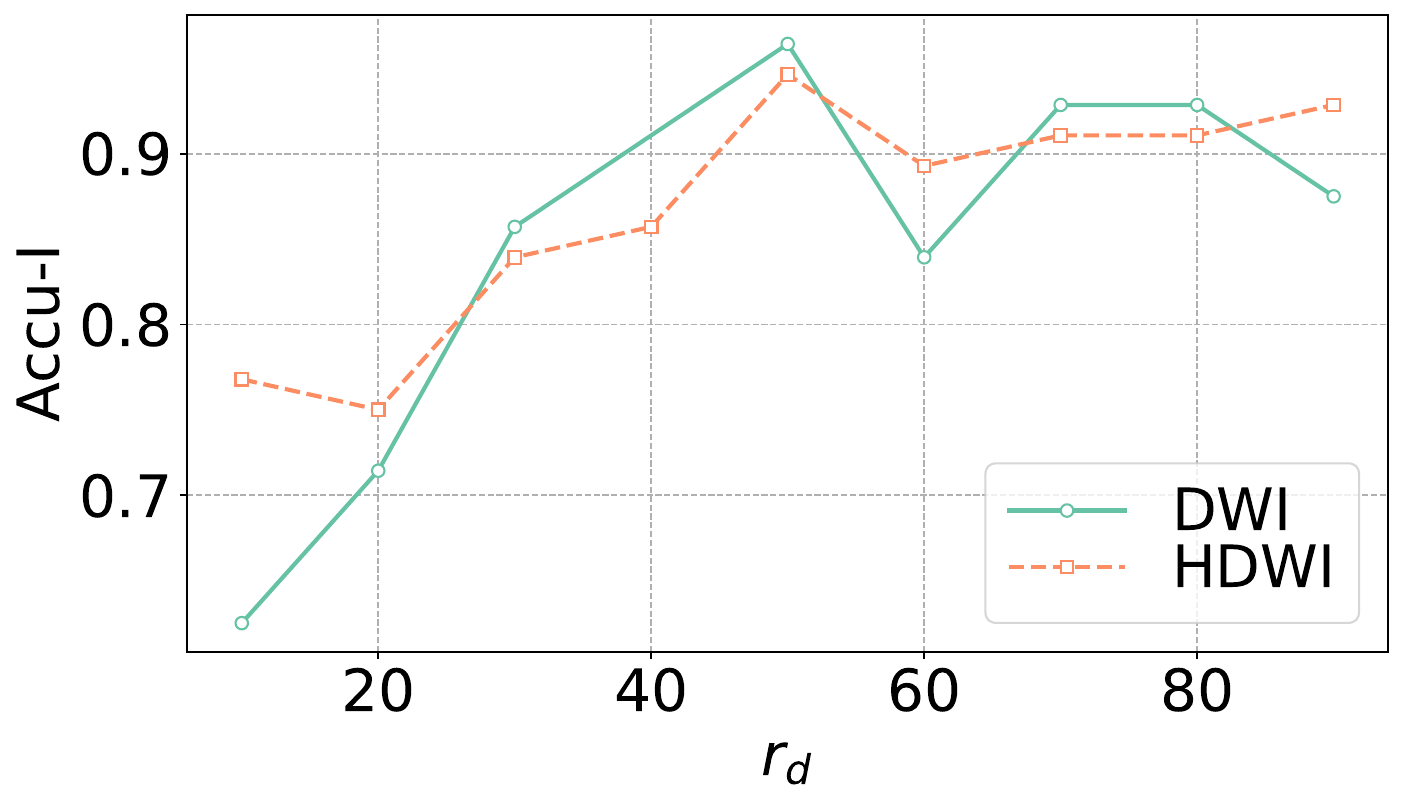}
    \end{minipage}
    \hfill
    \begin{minipage}{0.32\linewidth}
        \centering
        \includegraphics[width=\linewidth]{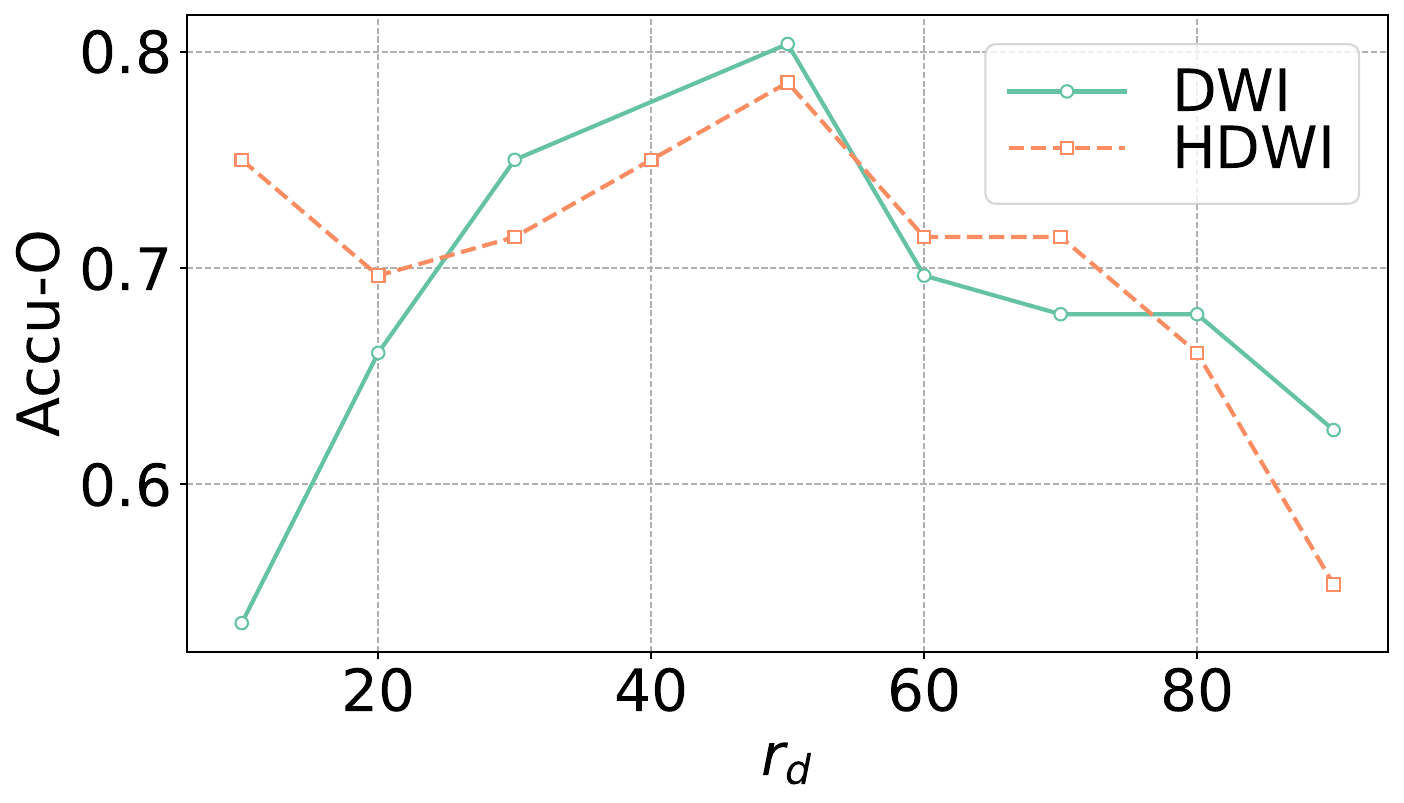}
    \end{minipage}
    \hfill
    \begin{minipage}{0.32\linewidth}
        \centering
        \includegraphics[width=\linewidth]{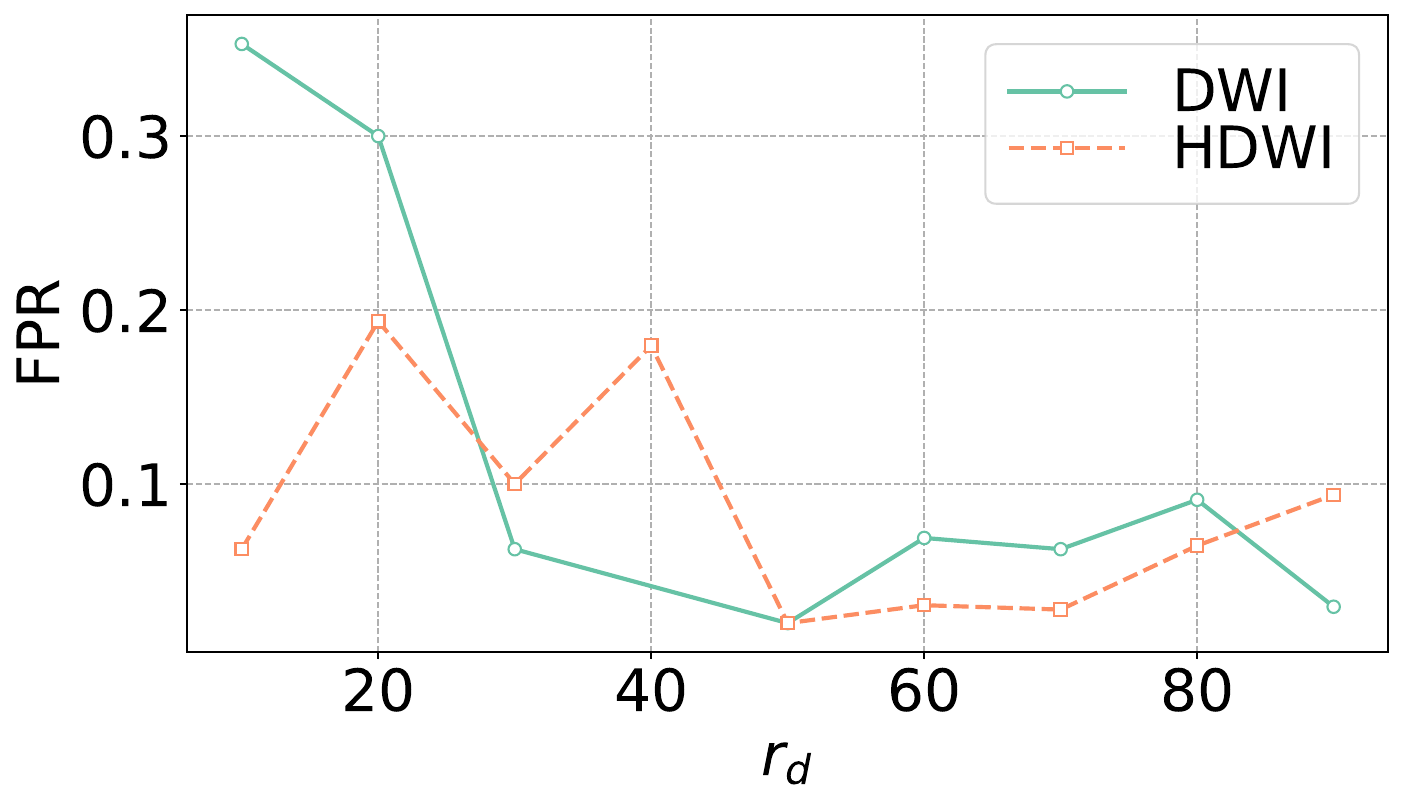}
    \end{minipage}
    \caption{Experiment results with varying indication ratio parameter $r_d$. Higher $r_d$ corresponds to more tokens allocated for encoding the indicator variable. }
    \label{fig:rdexp}
\end{figure}

\section{Experiments with Indication Ratio $r_d$ for Different Watermarked Text Ratio}\label{sec:indicationratio}
In the main experiment, we set $r_d=50\%$, thereby utilizing half of the tokens to encode whether the text is watermarked, while the remaining tokens encode the identification watermark. In this subsequent experiment, we evaluate the ratios $r_d$ in $[10\%, 30\%, 50\%, 70\%, 90\%]$ to assess the impact of employing more tokens to encode the watermark indicator. The results, depicted in \Cref{fig:varr}, indicate that (1) employing more tokens for encoding the indicator can substantially mitigate the FPR. (2) It can be observed that as the watermarked text ratio increases, the FPR also rises. This phenomenon occurs because, during threshold tuning on development set, when the ratio of watermarked text approaches zero, the model tends to select a threshold that directly classifies all samples as unwatermarked and thus make less false positive error. When the  watermarked text ratio is high, the threshold is adjusted to classify more samples as watermarked. This setting leads to an increase in false positive errors.

\begin{figure}[h]
    \centering
    \begin{minipage}[h]{0.45\textwidth}
        \includegraphics[width=0.99\linewidth]{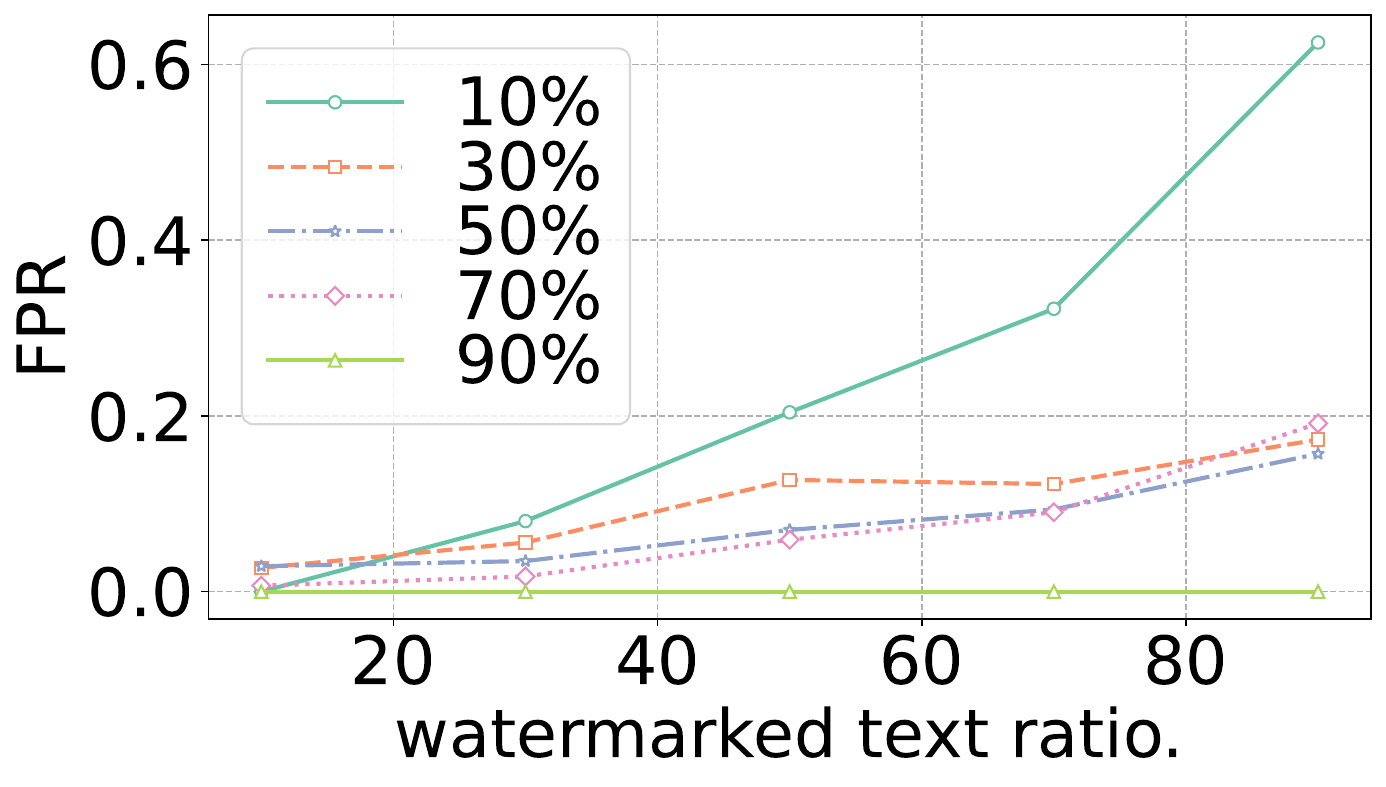}
        \captionof{figure}{Relationship between FPR and $r$. Each curve represents a specific $r$ value.}
        \label{fig:varr}
    \end{minipage}
    \quad
    \begin{minipage}[h]{0.45\textwidth}
        \includegraphics[width=0.99\linewidth]{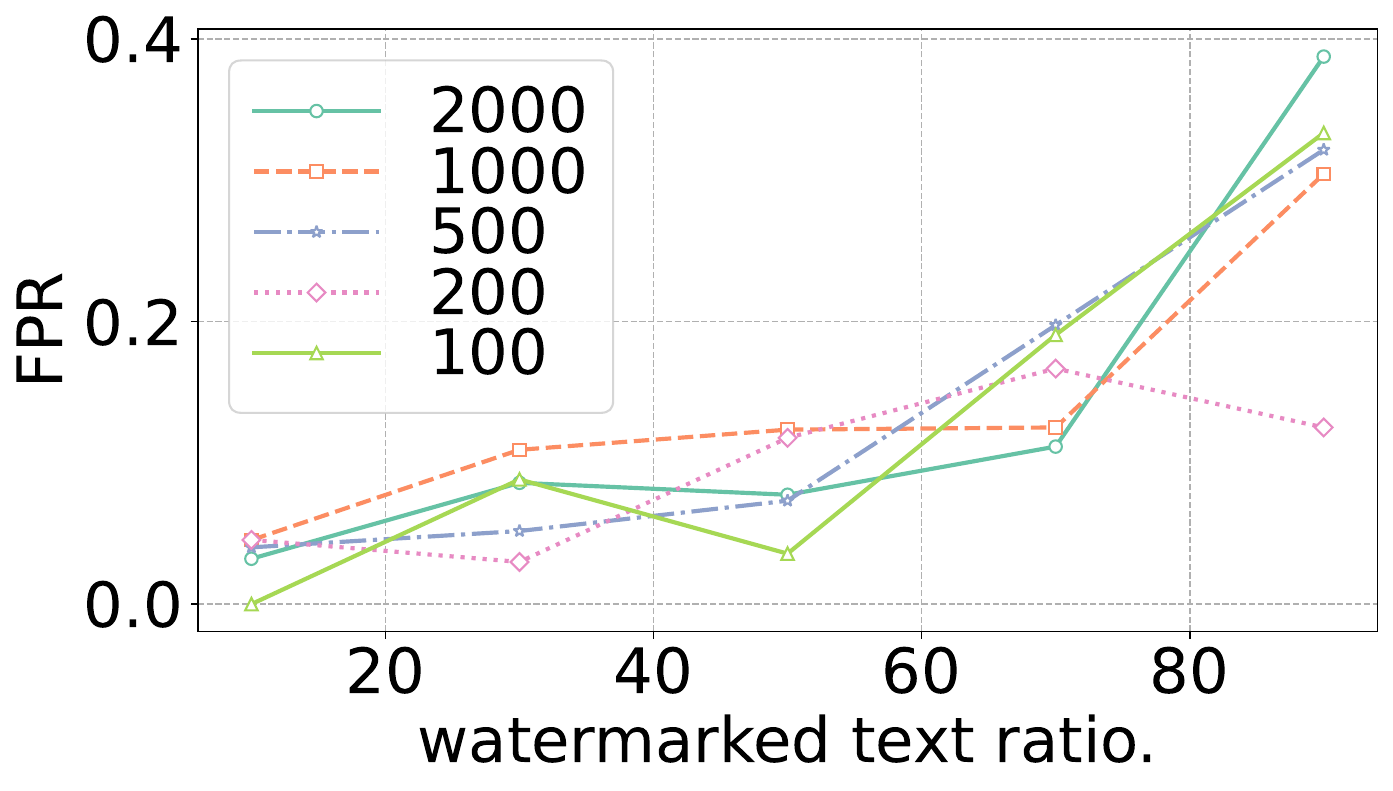}
        \captionof{figure}{Relationship between FPR and sample size. Each curve represents a size.}
        \label{fig:samplecnt}
    \end{minipage}
\end{figure}

\section{Sample Size Experiment}\label{sec:samplesizeexp}
To demonstrate that our experiments used a reasonable sample size and that our results are not sensitive to sample size, we performed an experiment by varying the total sample size within the range of $[100, 200, 500, 1000, 2000]$. As observed in \Cref{fig:samplecnt}, the results did not differ significantly with changes in sample size. This indicates that the sample size we chose is suitable for our current experiment and that our proposed method has good generalizability, not relying heavily on the number of samples. It can be observed that as the ratio of watermarked text increases, the FPR also rises. The underlying reason for this phenomenon is the same as
discussed in \Cref{sec:indicationratio}.

\section{Window Size Parameter $h$}\label{sec:winsizeparam}
We evaluate whether the window size parameter significantly impacts the generation quality, and the results are presented in \Cref{fig:hsim}. It can be observed that as the window size $h$ increases, the text quality scores remain in the range of 69 to 71. In this experiment, no substantial changes in text quality were observed as the window size $h$ varied.

\begin{figure}[H]
    \centering
    \includegraphics[width=0.5\linewidth]{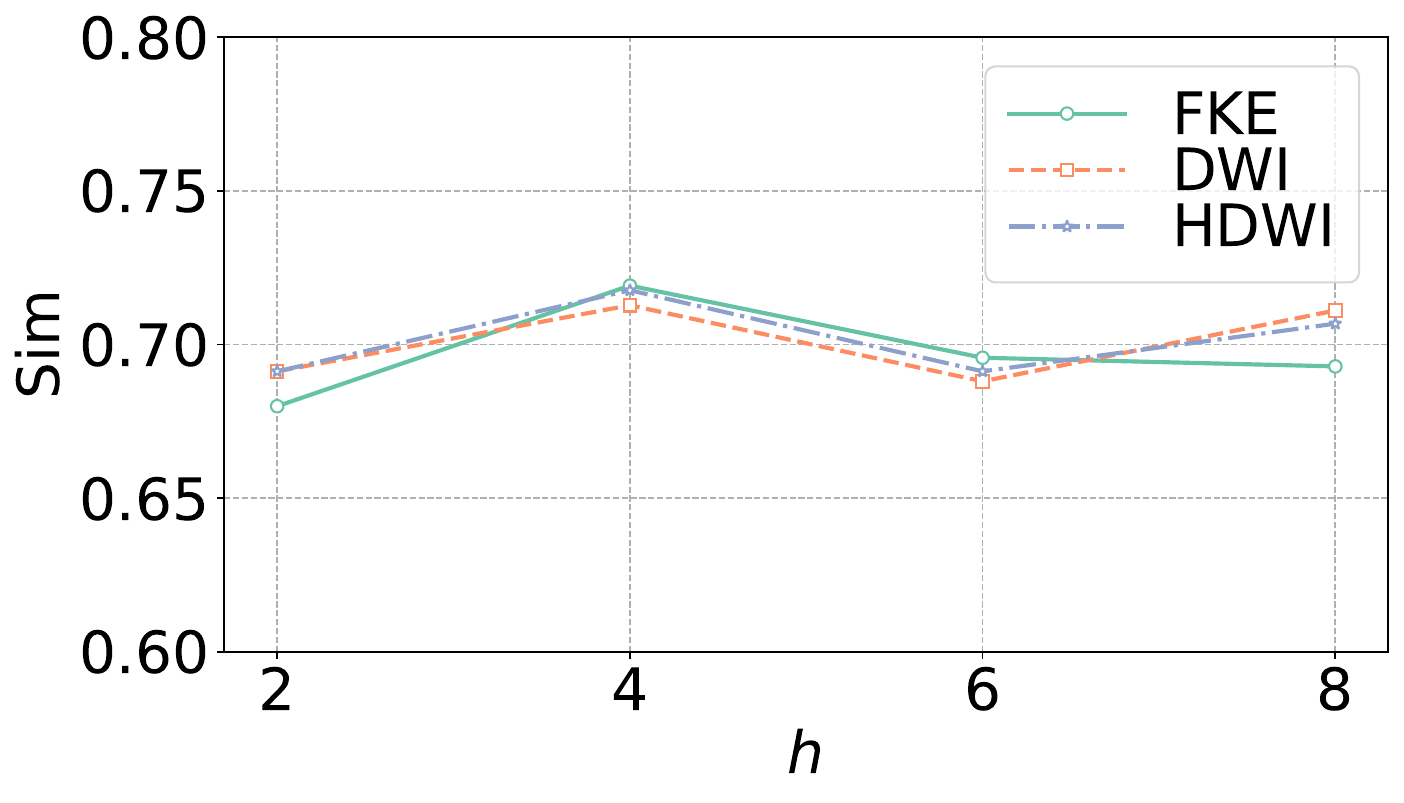}
        \vspace{0em}
        \caption{Text quality with respect to different window sizes $h$.}
        \label{fig:hsim}
\end{figure}

\section{Key Capacity Breakdown Results}\label{sec:keycapacitybreakdown}

We conduct the key capacity breakdown analysis for our method, with the results presented in \Cref{fig:keyexpfig}, providing additional insights into \Cref{sec:keycapacity}. The scores are plotted for various watermarked text ratios, illustrating the performance across diverse scenarios. These plots demonstrate the effectiveness of our proposed method under different watermarked text ratios. For a more detailed discussion, please refer to \Cref{sec:keycapacity}.

\begin{figure}[H]
    \centering
    \begin{minipage}{0.99\linewidth}
    \begin{minipage}{0.49\linewidth}
            \centering
            \includegraphics[width=\linewidth]{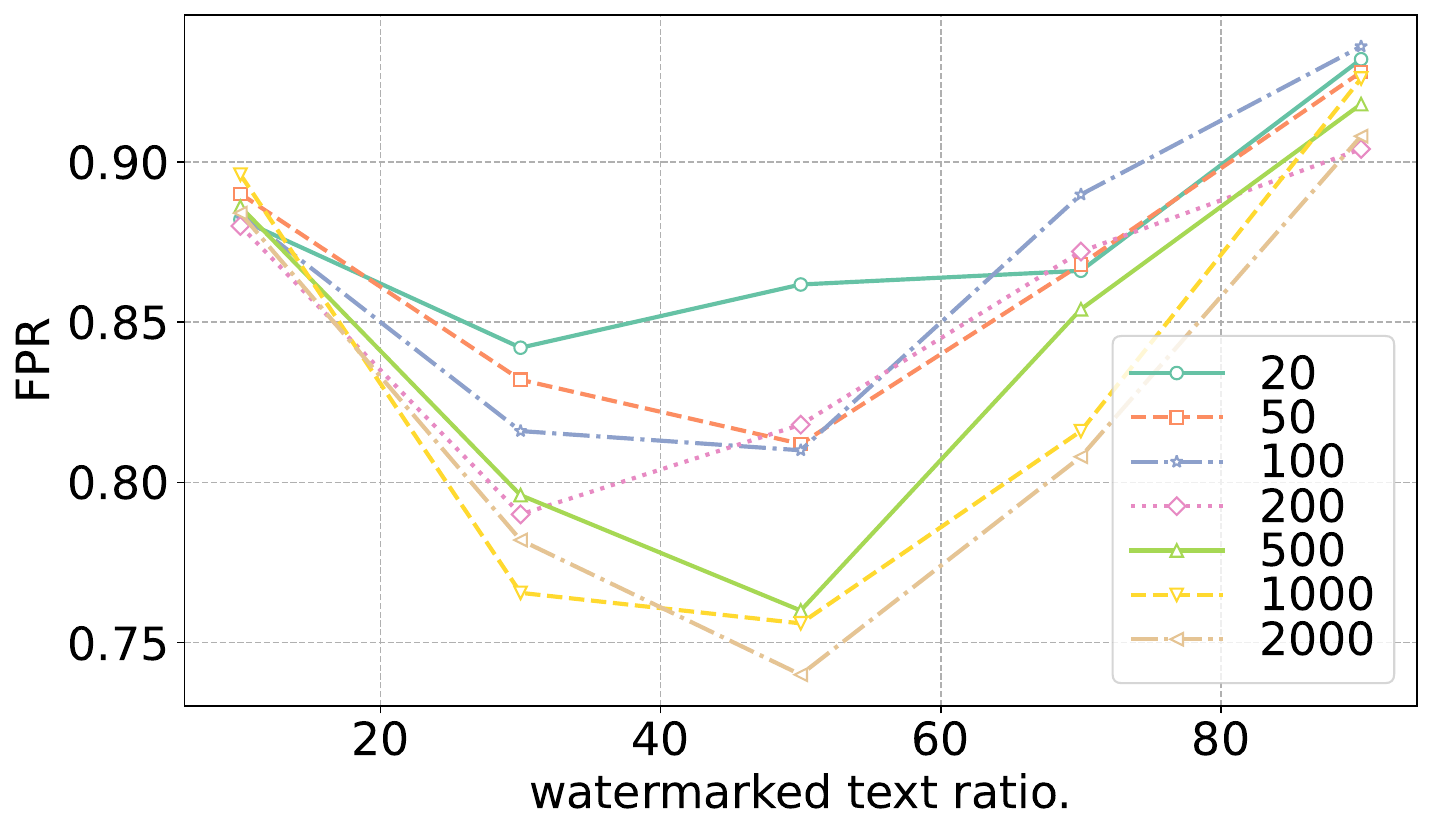}
    \end{minipage}%
    \begin{minipage}{0.49\linewidth}\includegraphics[width=\linewidth]{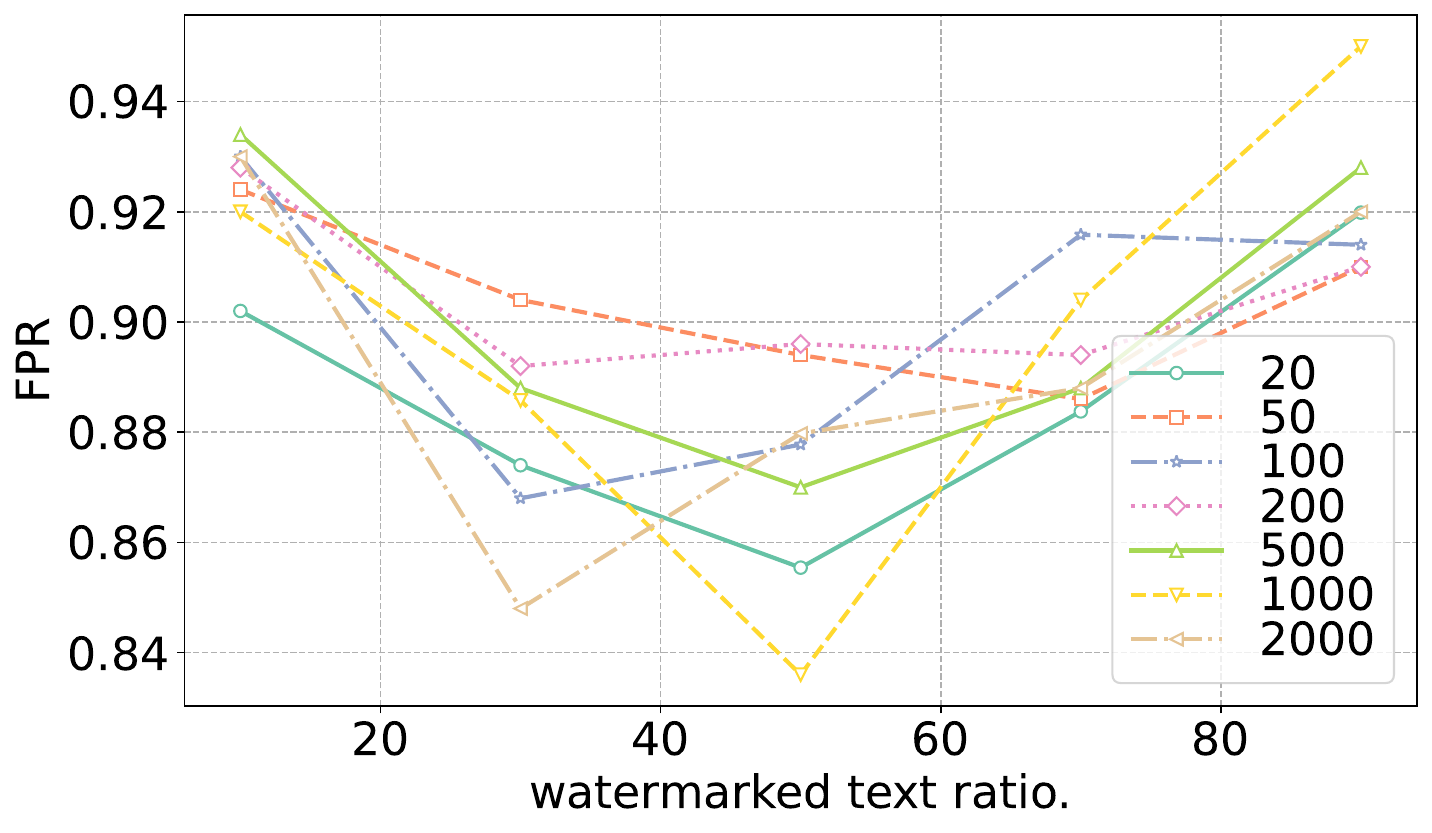}
    \end{minipage}%
        \caption{Key capacity results for FKE (left) and HDW (right). The figures illustrate the relationship between the watermarked text ratio and FPR for varying key capacities ( $K$ ranges from 20 to 2000).}
            \label{fig:keyexpfig}
    \end{minipage}
\end{figure}

\section{Key Capacity Breakdown Results for Dictionary-Based Method}\label{sec:keycapacitybreakdowndictionary}
We further conduct the key capacity breakdown analysis for the dictionary-based method. Similar to \Cref{sec:keycapacitybreakdown}, the results for this method are presented in \Cref{fig:keyexpfigmaryland}, offering additional insights into \Cref{sec:marylandfpr}. The scores are plotted for various watermarked text ratios, illustrating the performance across different scenarios. These findings highlight the behavior and adaptability of the dictionary-based method under varying conditions. For further discussion, refer to \Cref{sec:marylandfpr}.

\begin{figure}[H]
    \centering
    \begin{minipage}{0.99\linewidth}
    \begin{minipage}{0.49\linewidth}
            \centering
            \includegraphics[width=\linewidth]{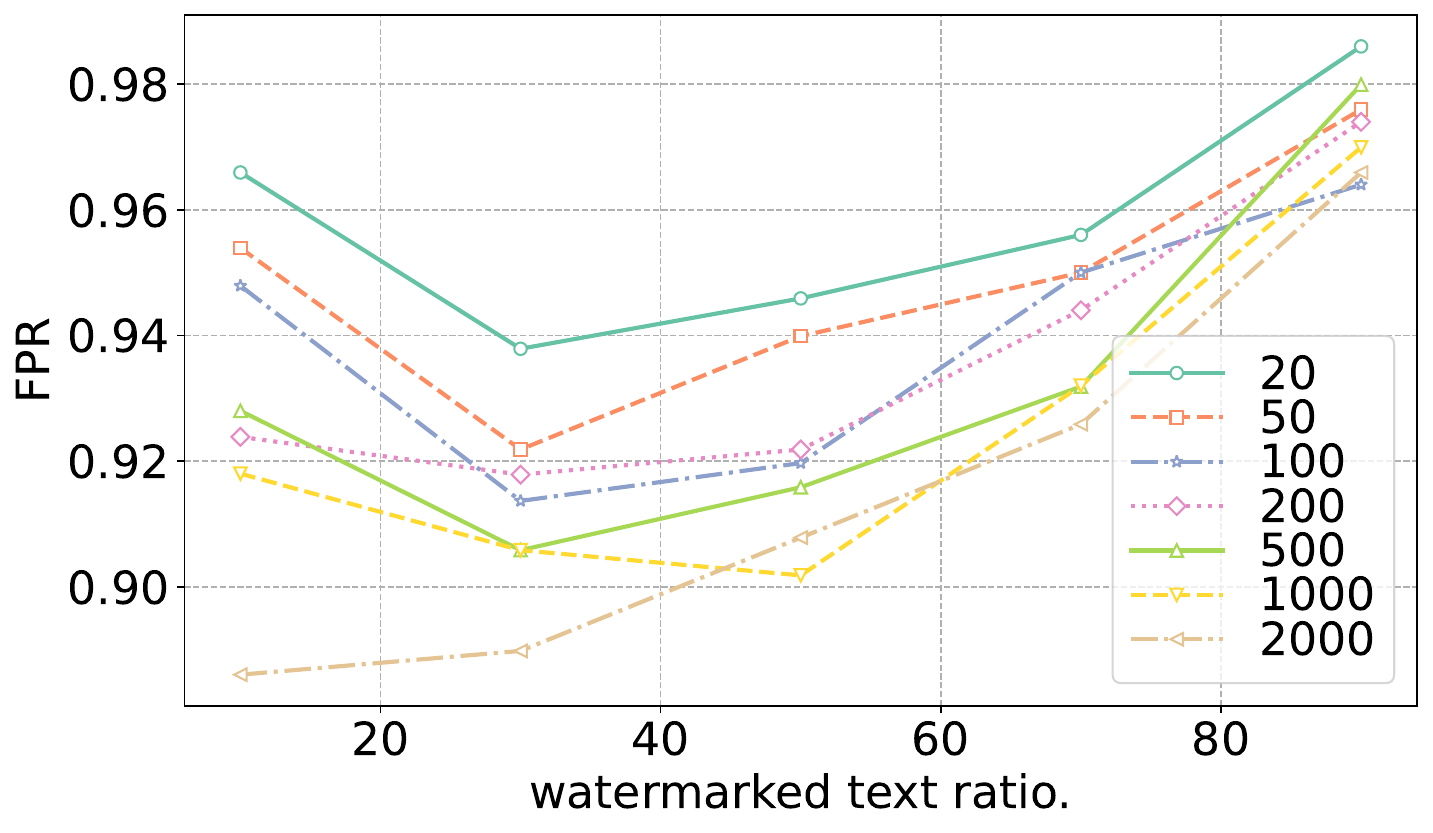}
    \end{minipage}%
    \begin{minipage}{0.49\linewidth}\includegraphics[width=\linewidth]{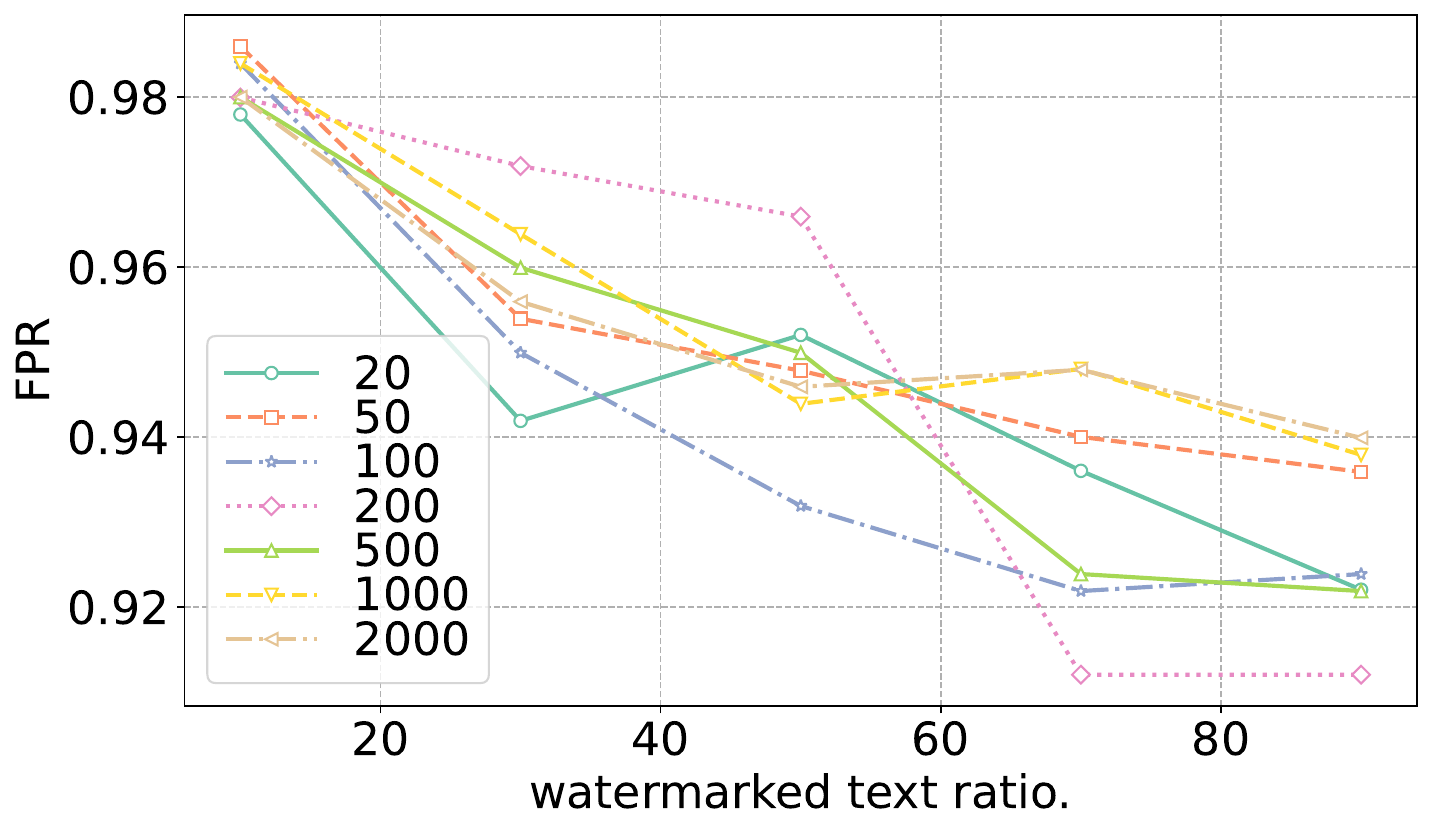}
    \end{minipage}%
        \caption{Dictionary-based model's key capacity results for FKE (left) and HDW (right). All models based on Multi-bit backbone. The figures illustrate the relationship between the watermarked text ratio and FPR for varying key capacities ( $K$ ranges from 20 to 2000).}
            \label{fig:keyexpfigmaryland}
    \end{minipage}
\end{figure}

\section{Key Capacity Experiments for Dictionary-Based Method}\label{sec:keycapacityexpfordicbase}
Similar to the experiments in \Cref{sec:keycapacity}, we conduct key capacity experiments for dictionary-based methods using the Multi-bit approach as our backbone model. The results, presented in \Cref{tab:keyexpmarylandexp}, reveal the following: (1) For traditional dictionary-based methods, the FPR increases as the key capacity grows. This trend aligns with observations from distribution-based methods, further validating the theoretical analysis provided in \Cref{sec:multibit}. (2) Our proposed HDW method effectively addresses this issue, maintaining a consistent FPR scale even as the key capacity increases, thereby demonstrating the robustness and effectiveness of our approach.

\begin{table}[H]
    \centering
    \begin{minipage}{0.4\textwidth}
        \centering
        \begin{tabular}{@{~}l@{~}@{~}l@{~}@{~}l@{~}@{~}l@{~}}
        \toprule
         & Accu-I$\uparrow$ & Accu-O$\uparrow$ & FPR$\downarrow$ \\
        \midrule
        20 & 0.958 & 0.937 & 0.0647 \\
        50 & 0.948 & 0.916 & 0.096 \\
        100 & 0.939 & 0.907 & 0.121 \\
        200 & 0.936 & 0.898 & 0.0928 \\
        500 & 0.932 & 0.889 & 0.13 \\
        1000 & 0.926 & 0.884 & 0.147 \\
        2000 & 0.915 & 0.874 & 0.141 \\
        \bottomrule
        \end{tabular}
    \end{minipage}%
    \quad\quad\quad
    \begin{minipage}{0.29\textwidth}
        \centering
        \begin{tabular}{@{~}l@{~}@{~}l@{~}@{~}l@{~}@{~}l@{~}}
        \toprule
         & Accu-I$\uparrow$ & Accu-O$\uparrow$ & FPR$\downarrow$ \\
        \midrule
        20 & 0.946 & 0.923 & 0.0426 \\
        50 & 0.953 & 0.925 & 0.0277 \\
        100 & 0.942 & 0.905 & 0.0293 \\
        200 & 0.948 & 0.918 & 0.0189 \\
        500 & 0.947 & 0.909 & 0.036 \\
        1000 & 0.956 & 0.897 & 0.0391 \\
        2000 & 0.954 & 0.903 & 0.0367 \\
        \bottomrule
        \end{tabular}
    \end{minipage}
    \caption{Key capacity results for FKE (left) and HDW (right) methods at varying key capacities (20 to 2000). All models use Multi-bit method as backbone.}
    \label{tab:keyexpmarylandexp}
\end{table}

\end{document}